\documentclass[runningheads]{llncs}

\usepackage{amsmath}
\usepackage{amssymb}
\newcommand{\Qed}{\qed}
\usepackage{latexsym}
\usepackage{enumerate}
\usepackage[normalem]{ulem}
\usepackage{color}
\usepackage{cite}

\usepackage{graphicx}
\usepackage{tikz}
\usetikzlibrary{calc,positioning,matrix,fit}

\usepackage[compatibility=false]{caption}
\usepackage[labelformat=parens]{subcaption}

\usepackage{algorithm}
\usepackage[noend]{algpseudocode}

\usepackage{color}

\usepackage{hyperref}
\hypersetup{
  colorlinks=true,       %
  linkcolor=blue!50!black
}

\newboolean{appendix}
\setboolean{appendix}{true}
\spnewtheorem*{lemma*}{Lemma (restated)}{\bfseries}{\itshape}
\spnewtheorem*{theorem*}{Theorem (restated)}{\bfseries}{\itshape}

\newcommand{\IC}{\textsf{IC}}
\newcommand{\NIC}{\textsf{NIC}}
\newcommand{\RAC}{\textsf{RAC}}
\newcommand{\One}{\textsf{1}}

\newcommand{\Wilog}{W.\,l.\,o.\,g.}
\newcommand{\wilog}{w.\,l.\,o.\,g.}
\newcommand{\ie}{i.\,e.}
\newcommand{\eg}{e.\,g.}
\newcommand{\ea}{et~al.}
\newcommand{\cf}{cf.}

\newcommand{\NP}{\ensuremath{\mathcal{NP}}}
\newcommand{\bigO}{\ensuremath{\mathcal{O}}}

\newcommand{\Edge}[2]{\{#1,#2\}}
\newcommand{\EdgeSegmentFirst}[2]{\{\underline{#1},#2\}}
\newcommand{\EdgeSegmentLast}[2]{\{#1,\underline{#2}\}}

\newcommand{\Drawing}[1]{\mathcal{D}(#1)}
\newcommand{\Emb}{\mathcal{E}}
\newcommand{\EmbOther}{\mathcal{E'}}
\newcommand{\Embedding}[1]{\Emb{}(#1)}
\newcommand{\EmbeddingOther}[1]{\EmbOther{}(#1)}

\newcommand{\Dual}[1]{{#1}^*}
\newcommand{\DualOther}[1]{{#1}^{*'}}
\newcommand{\PSub}[1]{\widehat{#1}}
\newcommand{\Skel}[1]{\widetilde{#1}}
\newcommand{\Trivial}{trivial}
\newcommand{\Kitonic}{marked}
\newcommand{\Free}{unmarked}
\newcommand{\Tetrahedral}{tetrahedral}
\newcommand{\Maximum}{densest}

\newcommand{\FigLabel}[1]{\label{fig:#1}}
\newcommand{\TabLabel}[1]{\label{tab:#1}}

\newcommand{\LemLabel}[1]{\label{lem:#1}}
\newcommand{\ThmLabel}[1]{\label{thm:#1}}
\newcommand{\CorLabel}[1]{\label{cor:#1}}

\newcommand{\ConjLabel}[1]{\label{conj:#1}}

\newcommand{\Fig}[1]{Fig.\,\ref{fig:#1}}
\newcommand{\FigAnd}[2]{Fig.\,\ref{fig:#1} and \ref{fig:#2}}
\newcommand{\Tab}[1]{Table~\ref{tab:#1}}

\newcommand{\Lem}[1]{Lemma~\ref{lem:#1}}
\newcommand{\Thm}[1]{Theorem~\ref{thm:#1}}
\newcommand{\Cor}[1]{Corollary~\ref{cor:#1}}
\newcommand{\Alg}[1]{Algorithm~\ref{alg:#1}}

\newcommand{\SectLabel}[1]{\label{sect:#1}}
\newcommand{\Sect}[1]{Sect.\,\ref{sect:#1}}

\newcommand{\KiteFace}{\raisebox{-1pt}{\includegraphics{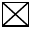}}}
\newcommand{\TriangleFace}{\raisebox{-1pt}{\includegraphics{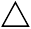}}}
\newcommand{\TetrahedronFace}{\raisebox{-1pt}{\includegraphics{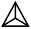}}}
\newcommand{\Rep}{\mathcal{P}}
\newcommand{\AdjacentFace}{\!\mid\!}

\newcommand{\DualNode}{q}
\newcommand{\DualNodeOther}{r}
\newcommand{\DualNodeThird}{s}
\newcommand{\DualNodeFourth}{t}

\newcommand{\TriangleFaceFrac}[2]{\ensuremath{\frac{#1\scalebox{.7}{\text{\TriangleFace}}}{#2}}}
\newcommand{\TetrahedronFaceFrac}[2]{\ensuremath{\frac{#1\scalebox{.7}{\text{\TetrahedronFace}}}{#2}}}

\newcommand{\Partition}{P}
\newcommand{\Sphere}{\mathcal{S}}
\newcommand{\QuarterSphere}{\Sphere_{Q}}
\newcommand{\Level}{\mathcal{L}}

\newcommand{\Kthree}{\ensuremath{\tau}}
\newcommand{\Triangle}{\Kthree}
\newcommand{\Kfour}{\ensuremath{\kappa}}
\newcommand{\KfourSet}{\ensuremath{\mathcal{K}}}
\newcommand{\KiteSet}{\ensuremath{\KfourSet_{\times}}}
\newcommand{\Kfive}{\ensuremath{\pi}}

\makeatletter
\providecommand*{\cupdot}{%
  \mathbin{%
    \mathpalette\@cupdot{}%
  }%
}
\newcommand*{\@cupdot}[2]{%
  \ooalign{%
    $\m@th#1\cup$\cr
    \hidewidth$\m@th#1\cdot$\hidewidth
  }%
}
\makeatother

\begin{document}

\title{%
\NIC{}-Planar Graphs %
\thanks{%
Supported by the Deutsche Forschungsgemeinschaft (DFG), grant Br835/18-1.}%
}

\author{%
Christian Bachmaier \and
Franz J.\,Brandenburg  \and
Kathrin Hanauer \and\\
Daniel Neuwirth \and
Josef Reislhuber}

\institute{%
University of Passau,
94030 Passau, Germany \\
\email{\{bachmaier|brandenb|hanauer|neuwirth|reislhuber\}@fim.uni-passau.de}
}

\authorrunning{C.\ Bachmaier, F.\ J.\ Brandenburg, K.\ Hanauer, D.\
  Neuwirth, J.\ Reislhuber}

\maketitle

\begin{abstract}
  A graph is \NIC{}-planar if it admits a drawing in the plane with at most
  one crossing per edge and such that two pairs of crossing edges share at
  most one common end vertex. \NIC{}-planarity generalizes \IC{}-planarity,
  which allows a vertex to be incident to at most one crossing edge, and
  specializes 1-planarity, which only requires at most one crossing per
  edge.

  We characterize embeddings of maximal \NIC{}-planar graphs in terms of
  generalized planar dual graphs. The characterization is used to derive
  tight bounds on the density of maximal \NIC{}-planar graphs which ranges
  between $3.2(n - 2)$ and $3.6(n - 2)$. Further, we show that optimal
  \NIC{}-planar graphs with $3.6(n - 2)$ edges have a unique embedding and
  can be recognized in linear
  time, whereas the recognition problem of \NIC{}-planar graphs is
  \NP-complete. In addition, we show that there are \NIC{}-planar graphs
  that do not admit right angle crossing drawings, which distinguishes
  \NIC{}-planar from \IC{}-planar graphs.
\end{abstract}

\section{Introduction}
Beyond-planar graphs, a family of graph classes defined as
extensions of planar graphs with different restrictions on
crossings, have received recent interest \cite{l-beyond-14}.
\One{}-planar graphs constitute an important class of this family. A
graph is \emph{\One{}-planar} if it can be drawn in the plane with
at most one crossing per edge. These graphs were introduced by
Ringel~\cite{ringel-65} in the context of coloring a planar graph
and its dual simultaneously and have been studied intensively since
then. Ringel observed that a pair of crossing edges can be augmented
by planar edges to form $K_4$. Bodendiek~\ea{}~\cite{bsw-bs-83,
bsw-1og-84} proved that \One{}-planar graphs with $n$ vertices have
at most $4n - 8$ edges, which is a tight bound for $n = 8$ and all
$n \geq 10$. These facts have also been discovered in other works.
\One{}-planar graphs with $4n - 8$ edges are called optimal
\cite{bsw-1og-84, s-s1pg-86}. They have a special structure and
consist of a triconnected planar quadrangulation with a pair of crossing edges in
each face. A graph $G$ is maximal 1-planar if no further edge can be
added to $G$ without violating 1-planarity.
Brandenburg~\ea{}~\cite{begghr-odm1p-13} found sparse maximal
1-planar graphs with less than $2.65n$
edges, which implies that there are maximal \One{}-planar graphs
that are not optimal and that are even sparser than maximal planar
graphs. The best known lower bound on the density of maximal
\One{}-planar graphs is $2.22 n$ \cite{bt-idm1p-15} and neither the
upper nor the lower bound is known to be tight.

There are some important subclasses of \One{}-planar graphs. A graph
is \emph{\IC{}-planar} (independent crossing planar)
\cite{a-cbirc-08, ks-cpgIC-10, zl-spgic-13, bdeklm-IC-16} if it has
a \One{}-planar embedding so that each vertex is incident to at most
one crossing edge. \IC{}-planar graphs were introduced by
Albertson~\cite{a-cbirc-08} who studied the coloring problem.
Kr{\'{a}}l and Stacho~\cite{ks-cpgIC-10} solved the coloring problem
and proved that $K_5$ is the largest complete graph that is
\IC{}-planar. \IC{}-planar graphs have an upper bound of $3.25n - 6$
on the number of edges, which is known as a tight bound as there are optimal
\IC{}-planar graphs with $13k - 6$ edges for all $n = 4k$ and $k \geq
2$ \cite{zl-spgic-13}. For other values of $n$ with $n \geq 8$ the
maximum number of edges is $\lfloor 3.25n-6 \rfloor$. On the other
hand, there are sparse maximal \IC{}-planar graphs with only $3n -
5$ edges for all $n \geq 5$~\cite{bbh-nipg-17}. In \emph{\NIC{}-planar} graphs
(near-independent crossing planar) \cite{z-dcmgprc-14}, two pairs of
crossing edges share at most one vertex. Equivalently, if every pair
of crossing edges is augmented by planar edges to $K_4$, an edge may
be part of at most one $K_4$ that is embedded with a crossing. Note that a
graph is \NIC{}-planar if every biconnected component is \NIC{}-planar.
\NIC{}-planar graphs were introduced by Zhang~\cite{z-dcmgprc-14},
who proved a density of at most $3.6(n - 2)$ and showed that $K_6$
is not \NIC{}-planar. Czap and {\v S}ugarek \cite{cs-tc1pg-14} give
an example of an optimal \NIC{}-planar graph with 27 vertices and 90
edges which proves that the upper bound is tight.
\emph{Outer \One{}-planar} graphs are another subclass of \One{}-planar
graphs. They must admit a \One{}-planar embedding such that all vertices
are in the outer face~\cite{abbghnr-o1p-15, heklss-ltao1p-15}. Results on
the density of maximal graphs are summarized in \Tab{density}.

\begin{table}[tb]
\caption{The density of maximal graphs. An asterisk marks our results.}
\TabLabel{density}
\setlength{\tabcolsep}{3pt}
\renewcommand{\arraystretch}{1.2}
\centering
\begin{tabular}{r|c|c|c|c}
\noalign{\smallskip}
& 1-planar
& \NIC{}-planar
& \IC{}-planar
& outer 1-planar
\\ \hline
upper bound
& $4n - 8$~\cite{bsw-bs-83,bsw-1og-84}
& $\frac{18}{5}(n - 2)$~\cite{z-dcmgprc-14},~(*) %
& $\frac{13}{4}n - 6$~\cite{ks-cpgIC-10}
& $\frac{5}{2}n - 2$~\cite{abbghnr-o1p-15}
\\
$\lfloor$ example & $4n-8$~\cite{bsw-bs-83,bsw-1og-84} &
$\frac{18}{5}(n - 2)$~\cite{cs-tc1pg-14},~(*) &
$\frac{13}{4}n - 6$~\cite{zl-spgic-13} & $\frac{5}{2}n -
2$~\cite{abbghnr-o1p-15}
\\
lower bound
& $\frac{20}{9} n - \frac{10}{3}$~\cite{bt-idm1p-15}
& $\frac{16}{5}(n-2)$~(*)
& $3n-5$~\cite{bbh-nipg-17}
& $\frac{11}{5}n - \frac{18}{5}$~\cite{abbghnr-o1p-15}
\\
$\lfloor$ example
& $\frac{45}{17}n - \frac{84}{17}$~\cite{begghr-odm1p-13}
& $\frac{16}{5}(n - 2)$~(*)
& $3n-5$~\cite{bbh-nipg-17}
& $\frac{11}{5}n - \frac{18}{5}$~\cite{abbghnr-o1p-15}
\\
\end{tabular}
\end{table}

There is a notable interrelationship between \One{}-planar and
\RAC{} graphs which are graphs that can be drawn straight-line with
right angle crossings~\cite{del-dgrac-11, el-racg1p-13}.
Didimo \ea{}~\cite{del-dgrac-11} showed that \RAC{} graphs have at
most $4n - 10$ edges and proved that there are \emph{optimal} \RAC{}
graphs with $4n - 10$ edges for all $n = 3k + 5$ and $k \geq 3$. For
other values of $n$ it is unknown whether there are optimal \RAC{}
graphs. Eades and Liotta \cite{el-racg1p-13} established that
optimal \RAC{} graphs (they called them maximally dense) admit a
special structure and proved that optimal \RAC{} graphs are
\One{}-planar. However, not all \RAC{}-graphs are \One{}-planar and vice
versa \cite{el-racg1p-13}, \ie, the classes of \RAC{}-graphs and
\One{}-planar graphs are incomparable.
Recently, Brandenburg~\ea{}~\cite{bdeklm-IC-16} showed that
every \IC{}-planar graph admits a \RAC{} drawing, which implies that every
\IC{}-planar graph is a \RAC{} graph.
They posed the problem whether \NIC{}-planar graphs are \RAC{} graphs,
which we refute.
Hence, with respect to \RAC{} drawings, \NIC{}-planar graphs behave like
\One{}-planar graphs and differ from \IC{}-planar graphs.

Recognizing \One{}-planar graphs is \NP-complete in general
\cite{GB-AGEFCE-07, km-mo1ih-13}, and remains \NP-complete even for
graphs of bounded bandwidth, pathwidth, or
treewidth~\cite{bce-pc1p-13}, if an edge is added to a planar
graph~\cite{cm-aoepl-13}, and if the input graph is triconnected and
given with a rotation system~\cite{abgr-1prs-15}. Likewise, testing
\IC{}-planarity is \NP-complete~\cite{bdeklm-IC-16}. On the other hand,
there are polynomial-time recognition algorithms for \One{}-planar
graphs that are maximized in some sense, such as
triangulated~\cite{cgp-rh4mg-06}, maximal~\cite{b-4mg1p-15}, and
optimal graphs~\cite{b-optlin-16}.

In this paper we study \NIC{}-planar graphs. After some basic
definitions in \Sect{basics}, we characterize embeddings of maximal
\NIC{}-planar graphs in terms of generalized planar dual graphs in
\Sect{dual}, and derive tight upper and lower bounds on the density
of maximal \One{}-planar graphs in \Sect{density} for infinitely
many values of $n$. A linear-time recognition algorithm for optimal
\NIC{}-planar graphs is established in \Sect{densest-recognition}.
In \Sect{nic-is-not-rac} we show that \NIC{}-planar graphs are
incomparable with \RAC{} graphs, and we consider the recognition problem
for \NIC{}-planar graphs.
We conclude in \Sect{conclusion} with some open problems.

\section{Preliminaries}%
\label{sect:basics}
We consider simple, undirected graphs $G = (V, E)$ with $n$ vertices and $m$
edges and assume that the graphs are biconnected.
The subgraph induced by a subset $U \subseteq V$ of vertices is denoted by
$G[U]$.

A \emph{drawing} $\Drawing{G}$ is a mapping of $G$ into the plane
such that the vertices are mapped to distinct points and each edge
is mapped to a Jordan arc connecting the points of the end vertices.
Two edges \emph{cross} if their Jordan arcs intersect in a point
different from their end points.
Crossings subdivide an edge into two or more uncrossed \emph{edge segments}.
An edge without crossings is called \emph{planar} and consists only of a
\emph{\Trivial{}} edge segment.
Edge segments of crossed edges are said to be \emph{non-\Trivial{}}.
For a crossed edge $\Edge{u}{v}$, we denote
the extremal edge segment that is incident to $u$ ($v$) by
$\EdgeSegmentFirst{u}{v}$ ($\EdgeSegmentLast{u}{v}$).
A drawing is \emph{planar} if every edge is planar, and
\emph{\One{}-planar} if there is at most one crossing per edge.

A drawing of a graph partitions the plane into empty regions called
\emph{faces}.
A face is defined by the cyclic sequence of edge segments that forms
its \emph{boundary}, which is described by vertices and crossing points,
\eg, face $f_{ab}$ in \Fig{kite-embedding}.
Two faces $f$ and $g$ are said to be \emph{adjacent}, denoted as $f \AdjacentFace g$,
if their boundaries share a common edge segment.
A vertex $v$ is \emph{incident} to a face $f$ if there is an edge $\Edge{u}{v}$
such that $f$ is either bounded by the \Trivial{} edge segment $\Edge{u}{v}$ or
the non-\Trivial{} edge segment $\EdgeSegmentLast{u}{v}$.
A face is called a (\emph{\Trivial{}}) \emph{triangle}, if its boundary consists
of exactly three (\Trivial{}) edge segments.
The set of all faces describes the \emph{embedding} $\Embedding{G}$.
The embedding resulting from a planar (\One{}-planar) drawing is called
\emph{planar} (\emph{\One{}-planar}).
A \One{}-planar embedding coincides with the embedding of the
\emph{planarization} of $G$ which is obtained by treating each crossing
point as a vertex and the edge segments as edges.
A \One{}-planar embedding is \emph{triangulated} if every face
is a triangle~\cite{cgp-rh4mg-06}.
A \emph{planar reduction} $\PSub{G}_{\Emb{}} \subseteq G$
is a planar subgraph of $G$ obtained by
removing exactly one edge of each pair of edges that cross in $\Embedding{G}$.
The \emph{planar skeleton} $\Skel{G}_{\Emb{}} \subseteq G$ is the planar
subgraph of $G$ with respect to $\Embedding{G}$ that is obtained by
removing all pairs of crossing edges.

We consider \NIC{}-planar graphs and embeddings that are maximized in
some sense.
A \NIC{}-planar embedding $\Embedding{G}$ is \emph{maximal (planar-maximal)
\NIC{}-planar} if no further (planar) edge can be added to $\Embedding{G}$
without violating \NIC{}-planarity or simplicity. A \NIC{}-planar graph $G$
is \emph{maximal} if the graph $G+e$ obtained from $G$ by the addition of an
edge $e$ is no longer \NIC{}-planar. A graph is called a \emph{sparsest}
(\emph{densest}) \NIC{}-planar graph if it is maximal \NIC{}-planar with
the minimum (maximum) number of edges among all maximal \NIC{}-planar graphs
of the same size. A \NIC{}-planar graph $G$ is called \emph{optimal} if $G$
has exactly the upper bound of $3.6 (n - 2)$ edges.
There are analogous definitions for other graph classes.

The concepts planar-maximal embedding and maximal and optimal
graphs coincide for planar graphs, where the maximum number of edges is
always $3n - 6$. However, they are different for \One{}-planar, \NIC{}-planar,
and \IC{}-planar graphs. The complete graph on five vertices without one
edge, $K_5-e$, is planar and has a planar embedding which is planar-maximal.
Nevertheless, $K_5-e$ can be embedded with a pair of crossing edges and $e$ can
be added as a planar edge. A graph is maximal if every embedding is maximal.
All the same, an embedding $\Embedding{G}$ of a graph $G$ may be maximal
\One{}-planar (\NIC{}-planar, \IC{}-planar) without $G$ being maximal. As
mentioned before, there are sparse \One{}-planar graphs with less than $2.65 n$
edges, whereas optimal 1-planar graphs have $4n - 8$ edges~\cite{s-rm1pg-10}.
Due to integrality, optimal \NIC{}-planar graphs exist only
for $n = 5k + 2$ and optimal \IC{}-planar graphs only for $n = 4k$.
Zhang and Liu~\cite{zl-spgic-13} present optimal \IC{}-planar graphs
with $4k$ vertices and $13k - 6$ edges for every $k \geq 2$ vertices and Czap
and {\v S}ugarek~\cite{cs-tc1pg-14} gave an example of an optimal
\NIC{}-planar graph of size 27. We show that there are optimal
\NIC{}-planar graphs for all $n = 5k+2$ with $k \geq 2$ and not for
$n=7$. The distinction between densest and optimal graphs is
important, since optimal graphs often have a special structure, as we
shall show for \NIC{}-planar graphs.

\begin{figure}[tb]
\centering
\begin{tikzpicture}
\node[inner sep=0pt,draw=none] (l) {
  \includegraphics[scale=.8]{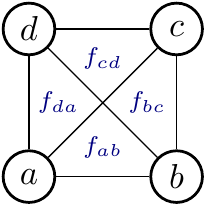}
  \phantomsubcaption\label{fig:kite-embedding}};
\node[inner sep=0pt,draw=none] (m) [right=of l] {
  \includegraphics[scale=.8]{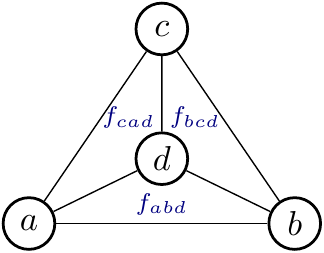}
  \phantomsubcaption\label{fig:simpletetrahedron-embedding}};
\node[inner sep=0pt,draw=none] (r) [base right=of m] {
  \includegraphics[scale=.8]{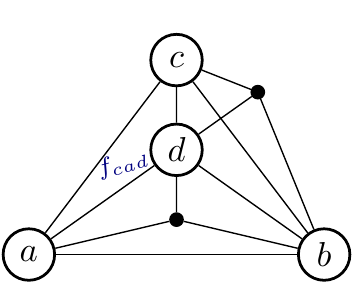}
  \phantomsubcaption\label{fig:nonsimpletetrahedron-embedding}};
\path (l.north west) |- node[anchor=north] {(\subref*{fig:kite-embedding})} (r.north west);
\path (m.north west) |- node[anchor=north] {(\subref*{fig:simpletetrahedron-embedding})} (r.north west);
\node[anchor=north] at (r.north west) {(\subref*{fig:nonsimpletetrahedron-embedding})};
\end{tikzpicture}
\caption{%
The two embeddings of a $K_4$ induced by the vertices $\{a, b, c, d\}$ (up to isomorphism):
A kite~(\subref{fig:kite-embedding})
and a tetrahedron, which can be simple~(\subref{fig:simpletetrahedron-embedding})
or non-simple~(\subref{fig:nonsimpletetrahedron-embedding}).}%
\label{fig:k4-embeddings}
\end{figure}
The complete graph on four vertices $K_4$ plays a crucial role in 1-planar
(\IC{}- and \NIC{}-planar) graphs.
It has exactly the two \One{}-planar embeddings depicted in
\Fig{k4-embeddings}~\cite{Kyncl-09}.
If $K_4$ is a subgraph of another graph $G$, further vertices and
edges of $G$ may be inside the shown faces.
Let $\Embedding{G}$ be a \NIC{}-planar embedding of a graph $G = (V,E)$
and let $U = \{a, b, c, d\} \subseteq V$
such that $G[U]$ is $K_4$.
Denote by $\Embedding{G[U]}$ the embedding of $G[U]$
induced by $\Embedding{G}$.
$G[U]$ is embedded as a \emph{kite} in $\Embedding{G}$
(see also \Fig{kite-embedding}) if, \wilog,
$\Edge{a}{c}$ and $\Edge{b}{d}$ cross each other and
there are faces $f_{ab}, f_{bc}, f_{cd}, f_{da}$ in $\Embedding{G}$
such that $f_{ab}$
is bounded exactly by $\Edge{a}{b}$, $\EdgeSegmentFirst{a}{c}$, and
$\EdgeSegmentFirst{b}{d}$,
and analogously for $f_{bc}, f_{cd}, f_{da}$.
Hence, there is no other vertex in the interior of a kite.
$G[U]$ is embedded as a \emph{tetrahedron} in $\Embedding{G}$ if
all edges are planar with respect to $\Embedding{G[U]}$
but not necessarily in $\Embedding{G}$.
The tetrahedron embedding of $G[U]$ in $\Embedding{G}$ is called \emph{simple},
if, \wilog, $d$ has vertex degree three and $f_{abd}$, $f_{bcd}$,
and $f_{cad}$ are faces in $\Embedding{G}$.
Then $d$ is called the \emph{center} of the tetrahedron.
\Fig{simpletetrahedron-embedding} shows a simple tetrahedron embedding
of $G[U]$,
whereas the tetrahedron embedding in \Fig{nonsimpletetrahedron-embedding} is
non-simple due to the missing faces $f_{abd}$ as well as $f_{bcd}$.

\section{The Generalized Dual of Maximal \NIC{}-planar Graphs}
\SectLabel{dual}
In this section, we study the structure of $\NIC{}$-planar embeddings of
maximal \NIC{}-planar graphs.
We use the results for tight upper and lower bounds of the density of
\NIC{}-planar graphs in \Sect{density} and for a linear-time recognition
algorithm for optimal \NIC{}-planar graphs in \Sect{densest-recognition}.
The upper bound on the density was proved in~\cite{z-dcmgprc-14} using a
different technique and there is a construction for the tightness of the
bound in~\cite{cs-tc1pg-14} in steps of 15 vertices.

Let $\Embedding{G}$ be a \NIC{}-planar embedding of a maximal \NIC{}-planar
graph $G$. The first property we observe is that the subgraph of $G$ induced
by the end vertices of a pair of crossing edges induces $K_4$ with a kite
embedding.
It has been discovered by Ringel~\cite{ringel-65} and in many other works that
a pair of crossing edges admits such a \One{}-planar embedding.
This fact is used for a normal form of embeddings of triconnected 1-planar
graphs~\cite{abk-sld3c-13} in the sense that it can always be established.
With respect to maximal \NIC{}-planarity, every embedding obeys this normal
form.
\begin{figure}[tb]
\centering
\begin{tikzpicture}
\node[inner sep=0pt,draw=none] (l) {
  \includegraphics[scale=.9]{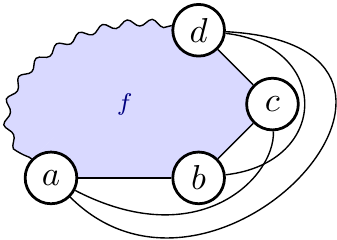}%
  \phantomsubcaption\label{fig:ttriangle-a}};
\node[anchor=center] at (l.north west) {(\subref*{fig:ttriangle-a})};
\node[inner sep=0pt,draw=none,anchor=west] (r) at ($(l.east) + (1.3cm,0)$) {
  \includegraphics[scale=.9]{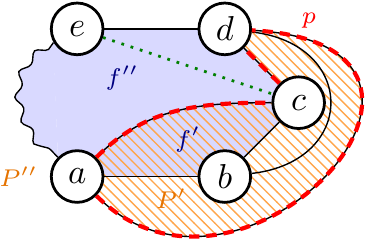}%
  \phantomsubcaption\label{fig:ttriangle-b}};
\node[anchor=center] at (r.north west) {(\subref*{fig:ttriangle-b})};
\end{tikzpicture}
\caption{%
Proof of \Lem{kite-or-triangle}.%
}%
\label{fig:proof-ttriangle}
\label{fig:proof-adjfaces}
\end{figure}
\begin{lemma}
\LemLabel{crossing-implies-kite}
Let $\Edge{a}{c}$, $\Edge{b}{d}$ be two edges crossing each other in a
\NIC{}-planar embedding of a maximal \NIC{}-planar graph $G = (V,E)$.\\
Then, $\Edge{a}{b}, \Edge{b}{c}, \Edge{c}{d}, \Edge{a}{d} \in E$ and
the induced $K_4$ is embedded as a kite.
\end{lemma}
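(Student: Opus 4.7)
The plan is to invoke the maximality of $G$ in two stages: first to force the four frame edges into $E$, then to rule out extraneous vertices from the four triangular regions around the crossing. Let $x$ denote the crossing point of $\Edge{a}{c}$ and $\Edge{b}{d}$; the four half-segments $\EdgeSegmentFirst{a}{c}, \EdgeSegmentFirst{b}{d}, \EdgeSegmentLast{a}{c}, \EdgeSegmentLast{b}{d}$ meeting at $x$ alternate in cyclic order and delimit four sectors, each contained in some face of $\Embedding{G}$, which I label $f_{ab}, f_{bc}, f_{cd}, f_{da}$ after the pair of frame vertices bordering them at $x$.

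For the first stage I focus on $f_{ab}$: both $a$ and $b$ lie on its boundary via $\EdgeSegmentFirst{a}{c}$ and $\EdgeSegmentFirst{b}{d}$. If $\Edge{a}{b} \notin E$, a planar Jordan arc drawn inside $f_{ab}$ joining $a$ to $b$ close to $x$ creates no new crossing pair, so \NIC{}-planarity is trivially preserved and $G + \Edge{a}{b}$ would be \NIC{}-planar, contradicting maximality. The same reasoning applied to the other three sectors produces $\Edge{b}{c}, \Edge{c}{d}, \Edge{a}{d} \in E$.

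Before the second stage I observe that every frame edge must be uncrossed in $\Embedding{G}$: if $\Edge{a}{b}$ crossed some $\Edge{e}{f}$, the crossing pairs $(\Edge{a}{b}, \Edge{e}{f})$ and $(\Edge{a}{c}, \Edge{b}{d})$ would share the two vertices $a$ and $b$, contradicting \NIC{}-planarity. Hence the restriction of $\Embedding{G}$ to $G[\{a,b,c,d\}] \cong K_4$ is a 1-planar drawing with exactly one crossing and four planar frame edges, and by the classification of 1-planar embeddings of $K_4$ recalled in \Fig{k4-embeddings}, this restricted drawing is topologically the kite. What remains is to check that each of the four triangular kite-regions is actually a face of $\Embedding{G}$, i.e., that no vertex of $V \setminus \{a,b,c,d\}$ lies strictly inside any of them.

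Suppose, for contradiction, that a vertex $v$ lies strictly inside the region bounded by $\EdgeSegmentFirst{a}{c}, \Edge{a}{b}, \EdgeSegmentFirst{b}{d}$. Since none of these three bounding segments can be crossed by any further edge (the frame edge $\Edge{a}{b}$ is uncrossed, and $\Edge{a}{c}, \Edge{b}{d}$ already host their unique crossings at $x$), the whole sub-drawing trapped inside the region is attached to the rest of $G$ only through the 2-separator $\{a,b\}$, and in particular $\Edge{v}{c}, \Edge{v}{d} \notin E$. I would then relocate this sub-drawing across the edge $\Edge{a}{b}$ into the face of $\Embedding{G}$ lying on its opposite side, yielding another \NIC{}-planar embedding of $G$ in which $v$ sits on a face also having $c$ on its boundary; the planar edge $\Edge{v}{c}$ can then be drawn inside that face, giving a \NIC{}-planar embedding of $G + \Edge{v}{c}$ that contradicts maximality. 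The main obstacle will be this relocation step: I need to verify that transferring a sub-drawing attached only at the 2-separator $\{a,b\}$ across the uncrossed frame edge $\Edge{a}{b}$ preserves \NIC{}-planarity globally (which follows because the sub-drawing is joined to the remainder of $G$ only at $a$ and $b$, so no pre-existing crossings are disturbed and no new crossings are introduced) and that $c$ does lie on the boundary of the face opposite to $f_{ab}$ across $\Edge{a}{b}$, which I intend to deduce by tracing that boundary using the cyclic orders at $a$ and $b$ forced by the kite structure together with uncrossedness of all four frame edges.
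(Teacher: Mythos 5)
Your first stage (the four frame edges are forced into $E$ by maximality) and the observation that each frame edge must be uncrossed coincide with the paper's argument. The divergence is in how the four crossing regions are emptied, and this is where your proposal has a genuine gap. The step you yourself flag as the main obstacle — that after relocating the trapped sub-drawing across $\Edge{a}{b}$, the vertex $v$ lands on a face having $c$ on its boundary — is false in general. The face $h$ on the far side of $\Edge{a}{b}$ lies in the outer region of the kite and is typically bounded by $a$, $b$, and some third vertex $w$ unrelated to $c$ (in a maximal \NIC{}-planar graph every face is in fact a triangle, so $h$ is a triangle $a,b,w$, and nothing in the kite structure or the rotations at $a$ and $b$ controls what is attached to $\Edge{a}{b}$ on that side). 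Tracing the boundary as you propose cannot establish $c\in\partial h$. A secondary defect: even with the correct target $w\in\partial h\setminus\{a,b\}$, an arbitrary vertex $v$ strictly inside the region may be buried in the interior of the relocated blob and need not become co-facial with anything outside it; you would have to choose $v$ on the boundary of the face inside the region that is adjacent to $\Edge{a}{b}$.

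The paper avoids both problems by modifying the edge rather than the blob: it re-routes $\Edge{a}{b}$ into the sector face $f$ at the crossing point, which merges the two faces $g$ and $h$ formerly bounded by $\Edge{a}{b}$. \NIC{}-planarity guarantees a vertex other than $a,b$ on the boundary of each of $g$ and $h$ (otherwise $a$ and $b$ would be incident to a second common crossing pair), and these two witnesses lie on opposite sides of the uncrossable closed curve $\Edge{a}{b}\cup\EdgeSegmentFirst{a}{c}\cup\EdgeSegmentFirst{b}{d}$, hence are distinct and non-adjacent; the edge between them can then be drawn planarly in the merged face, contradicting maximality. Your relocation strategy can likely be repaired along the same lines (connect a boundary vertex of the inner face adjacent to $\Edge{a}{b}$ to a boundary vertex of $h$), but as written the concluding co-faciality claim with $c$ does not hold and the proof does not go through.
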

\begin{proof}
Let $\Embedding{G}$ be a \NIC{}-planar embedding of $G$.
Consider a pair of vertices $e \in \{ \{a,b\}, \{b, c\}, \{ c, d\}, \{d, a\}\}$.
As $G$ is maximal, $e \in E$, otherwise it could be added without violating
\NIC{}-planarity.
Thus $G[\{a,b,c,d\}]$ is $K_4$.

\Wilog, let $e = \Edge{a}{b}$ (the other cases are similar).
Due to the crossing of $\Edge{a}{c}$ and $\Edge{b}{d}$,
the vertices $a$ and $b$ cannot be incident to another pair of crossing edges.
Hence, $e$ must be planar in $\Embedding{G}$.
Let $f$ denote the face of $\Embedding{G}$ that is incident to both $a$ and $b$
and bounded by the edge segments $\EdgeSegmentFirst{a}{c}$ and
$\EdgeSegmentFirst{b}{d}$.
Suppose that $f$ is not bounded by $\Edge{a}{b}$.
Furthermore, $e$ and the edge segments $\EdgeSegmentFirst{a}{c}$
and $\EdgeSegmentFirst{b}{d}$ form a closed
path $p$ that partitions the set of faces of $\Embedding{G}$.
Let $g$ and $h$ denote the faces bounded by $\Edge{a}{b}$ in $\Embedding{G}$.
The fact that $a$ and $b$ cannot be incident to another pair of crossing edges
implies that there is a vertex $x \neq a, b$ incident to $g$ as well as a
vertex $y \neq a,b$ incident to $h$.
As $p$ places $g$ and $h$ in different partitions and contains neither $x$ nor $y$,
$x \neq y$.
Moreover, $p$ consists of one planar edge and two edge segments and
thus cannot be crossed.
Subsequently, $x$ and $y$ are not adjacent.

Construct a new embedding $\EmbeddingOther{G}$ from $\Embedding{G}$ by re-routing
$\Edge{a}{b}$ such that it subdivides $f$.
Then, $g$ and $h$ conflate into one face $gh$ in $\EmbeddingOther{G}$.
As $x$ and $y$ are both incident to $gh$ in $\EmbeddingOther{G}$, the edge
$\Edge{x}{y}$ can be added to $G$ and embedded planarly such that it subdivides
$gh$.
The resulting embedding is \NIC{}-planar, thus contradicting the maximality of $G$.
\Qed{} %
\end{proof}

\Lem{crossing-implies-kite} implies that every face bounded by at least
one non-\Trivial{} edge segment is part of a kite and hence a
non-\Trivial{} triangle.
In fact, every embedding of a maximal \NIC{}-planar graph is triangulated.

\begin{lemma}
\LemLabel{kite-or-triangle}
Every face of a \NIC{}-planar embedding $\Embedding{G}$ of a maximal
\NIC{}-planar graph $G$ with $n \geq 5$ is a triangle.
\end{lemma}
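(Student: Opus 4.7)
The plan is to reduce the statement to faces whose boundary consists entirely of trivial edge segments, and then to show such a face can have at most three vertices. First, any face $f$ that contains a non-trivial edge segment on its boundary must, by \Lem{crossing-implies-kite}, be one of the four triangular cells of the kite induced by that crossing; in particular it is a triangle. So I restrict attention to faces $f$ bounded solely by trivial edge segments, and assume toward a contradiction that $f$ has $k \geq 4$ boundary vertices $v_1, \ldots, v_k$ in cyclic order.

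The core idea is to insert a missing chord of $f$. Pick any two non-consecutive boundary vertices $v_i, v_j$. If $\Edge{v_i}{v_j} \notin E$, I can route this edge as a planar Jordan arc through the empty interior of $f$; the resulting embedding is still \NIC{}-planar, contradicting the maximality of $G$. Hence every pair of non-consecutive boundary vertices must already be an edge of $G$, and each such edge must be routed on the complementary side of the boundary cycle, since $f$ itself is empty.

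For $k = 4$, both diagonals $\Edge{v_1}{v_3}$ and $\Edge{v_2}{v_4}$ are forced to lie outside $f$. A standard topological argument shows that two interleaving chords of a cycle drawn on the same side must cross, so these two diagonals cross each other; \Lem{crossing-implies-kite} then forces the induced $K_4$ to be embedded as a kite. By the very definition of a kite, the four triangular cells $f_{v_1v_2}, f_{v_2v_3}, f_{v_3v_4}, f_{v_4v_1}$ are faces of $\Embedding{G}$ and contain no interior vertices. Since $f$ coincides with the exterior face of the kite, the four vertices $v_1, v_2, v_3, v_4$ exhaust $V$, giving $n = 4$ and contradicting $n \geq 5$.

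For $k \geq 5$ the same chord-addition argument forces the boundary to induce a clique $K_k$, with the $k$-cycle bounding $f$ and all $\binom{k}{2} - k$ diagonals routed on the other side. Applying the interleaving criterion, the chord $\Edge{v_1}{v_3}$ is interleaved by both $\Edge{v_2}{v_4}$ and $\Edge{v_2}{v_5}$, so it must be crossed at least twice, violating \One{}-planarity. The main obstacle I expect is the $k = 4$ case, which is the only place where the hypothesis $n \geq 5$ is actually used: without it, a lone $K_4$ embedded as a kite would itself be a maximal \NIC{}-planar graph with a non-triangular face.
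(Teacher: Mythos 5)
Your proof is correct, but after the common first step it diverges from the paper's argument in an interesting way. Both proofs dispose of faces with non-\Trivial{} boundary segments via \Lem{crossing-implies-kite} and both use maximality to force every missing chord of a \Trivial{} face $f$ to be present (and hence drawn on the far side of $f$'s boundary cycle). From there the paper runs a single uniform argument for all face lengths: it takes four consecutive boundary vertices $a,b,c,d$, obtains the forced kite on them, re-routes one diagonal into $f$ (a kite-to-tetrahedron re-embedding), and then uses biconnectivity and $n \geq 5$ to exhibit a fifth vertex $e$ on the resulting face for which $\Edge{c}{e}$ could still be added --- a contradiction with maximality. You instead split on the length $k$ of the face: for $k = 4$ you observe that the forced kite's four cells together with $f$ tile the sphere, so $n = 4$, which isolates exactly where the hypothesis $n \geq 5$ enters; for $k \geq 5$ you get an outright contradiction with \One{}-planarity because $\Edge{v_1}{v_3}$ is crossed by both $\Edge{v_2}{v_4}$ and $\Edge{v_2}{v_5}$ (all three chords exist and must avoid $f$'s interior, so the standard Jordan-curve interleaving argument applies). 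Your $k \geq 5$ case is more elementary --- it needs no re-embedding and no further appeal to maximality once the chords are known to exist --- while the paper's single argument avoids the case split and rehearses the re-embedding technique that it reuses later for kite flips. The only soft spots in your write-up are shared with the paper's own proof (that the boundary of a non-triangular \Trivial{} face is a simple cycle with distinct vertices, and that interleaved chords on the same side of that cycle must cross); neither is a genuine gap.
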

\begin{proof}
Let $\Embedding{G}$ be a \NIC{}-planar embedding of $G = (V,E)$ and
let $f$ be a face of $\Embedding{G}$.
Recall from \Sect{basics} that a triangle is either \Trivial{} if its
boundary consists only of \Trivial{} edge segments, \ie, planar edges, or
non-\Trivial{} otherwise.

By \Lem{crossing-implies-kite}, every face whose boundary contains
non-\Trivial{} edge segments is part of a kite.
Hence, if $f$ is bounded by at least one non-\Trivial{} edge segment,
it is a non-\Trivial{} triangle.

Otherwise, assume that $f$ is bounded only by %
planar edges
and suppose that $f$ is not a triangle.
Let $a,b,c,d \in V$ be distinct vertices incident to $f$ such that $f$'s
boundary contains the edges $\Edge{a}{b},\Edge{b}{c},\Edge{c}{d}$.
Suppose that $G$ does not contain edge $\Edge{a}{c}$. Then $\Edge{a}{c}$
could be added to $G$ and embedded planarly such that it subdivides $f$. The
resulting embedding would be \NIC{}-planar, a contradiction to the
maximality of $G$. Thus, $\Edge{a}{c} \in E$, but it is not part of the
boundary of $f$.
By the same reasoning, if $\Edge{b}{d} \not\in E$ then it could be added.
Hence, $\Edge{a}{c},\Edge{b}{d} \in E$.
As neither of both edges subdivides $f$, $\Edge{a}{c}$ and $\Edge{b}{d}$ must
cross each other in $\Embedding{G}$ and thus, by \Lem{crossing-implies-kite},
$\Embedding{G}$ contains a corresponding kite of the $K_4$ induced by
$a,b,c,d$.
In particular, $\Edge{a}{d} \in E$ (\cf~\Fig{ttriangle-a}).
Note that $\Edge{a}{d}$ does not necessarily bound $f$.
Nevertheless, this kite can be transformed into a tetrahedron
by re-routing one of $\Edge{a}{c}$ or $\Edge{b}{d}$ such that it
subdivides $f$.

\Wilog, consider the embedding $\EmbeddingOther{G}$ obtained from
$\Embedding{G}$ by re-routing $\Edge{a}{c}$.
Denote by $f'$ and $f''$ the faces resulting from the division of $f$
and let $f'$ be the face bounded by $\Edge{a}{b},\Edge{b}{c},\Edge{a}{c}$.
In $\EmbeddingOther{G}$, the edges $\Edge{a}{c}, \Edge{c}{d}, \Edge{a}{d}$ form
a closed planar path $p$ and thus split the set of faces into two partitions
$\Partition'$ and $\Partition''$, such that $\Partition'$ contains exactly the
two triangles emerging from the conflation of the former kite faces as
well as the face $f'$ (\cf~\Fig{ttriangle-b}).
Note that only the vertices $a,b,c,d$ are incident to faces in $\Partition'$.
If $n \geq 5$, there must be a fifth vertex, which is subsequently incident
only to faces contained in $\Partition''$.
In particular, this vertex must be incident to at least two faces in
$\Partition''$ due to biconnectivity.
This implies that $\Partition''$ cannot consist only of
face $f''$, and consequently, $f''$ cannot be bounded
by the edge $\Edge{a}{d}$.
Recall that $f''$ results from the subdivision of face $f$ in $\Embedding{G}$
and that $f$'s boundary consists only of planar edges.
Hence, there is a vertex $e$ incident to $f''$ such that $f''$
is bounded by $\Edge{d}{e}$ and $e$ is distinct from $a,b,c,d$.
Observe that $p$ contains $c$, but not $e$, and, by construction, $f''$ is
the only face $c$ is incident to in $\Partition''$.
As $p$ is planar and the boundary of $f''$ contains $\Edge{a}{c}$,
$\Edge{c}{d}$, and $\Edge{d}{e}$, $c$ and $e$ cannot be adjacent in $G$.
However, $\Edge{c}{e}$ can be added to $G$ and embedded planarly such that it
subdivides $f''$, which yields a \NIC{}-planar embedding, a contradiction to
the maximality of $G$.
\Qed{}
\end{proof}
Observe that the lemma neither holds for maximal \One{}-planar graphs nor for
plane-maximal \One{}-planar graphs, as both admit hermits, \ie, vertices of
degree two~\cite{begghr-odm1p-13}.
Moreover, the restriction to graphs with $n \geq 5$ is indispensable,
as $K_4$ can be embedded as a kite, which has a quadrangular (outer) face.
Forthcoming, we assume that all graphs have size $n \geq 5$.

Next, consider the planar versions of an embedding $\Embedding{G}$
of a maximal \NIC{}-planar graph $G$.
\begin{corollary}
\CorLabel{psub-maxplanar}
\CorLabel{triconnected}
\CorLabel{triconnected-skeleton}
Let $G$ be a maximal \NIC{}-planar graph with $n \geq 5$.
\begin{itemize}
\vspace*{-.2cm}
\item
If $\PSub{G}_{\Emb{}} \subseteq G$ is a planar reduction of $G$ with respect
to any \NIC{}-planar embedding $\Embedding{G}$, then $\PSub{G}_{\Emb{}}$ is
maximal planar.
\item $G$ is triconnected.
\item
The planar skeleton $\Skel{G}_{\Emb{}} \subseteq G$
with respect to every \NIC{}-planar embedding $\Embedding{G}$
is triconnected.
\end{itemize}
\end{corollary}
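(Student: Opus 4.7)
The plan is to derive the three items in sequence, each leveraging the previous. For the first bullet, I apply \Lem{kite-or-triangle}: every face of $\Embedding{G}$ is a triangle. The planar reduction $\PSub{G}_{\Emb{}}$ keeps one diagonal per kite and so merges the four non-trivial triangles of each kite into two trivial triangles joined by the surviving diagonal, while all trivial triangles of $\Embedding{G}$ are preserved. Hence the induced embedding of $\PSub{G}_{\Emb{}}$ is a triangulation, so $\PSub{G}_{\Emb{}}$ is a maximal planar graph.

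The second bullet follows from the first by the classical fact that every maximal planar graph on $n \geq 4$ vertices is 3-connected, together with the observation that 3-connectivity is monotone under edge addition and $\PSub{G}_{\Emb{}}$ is a spanning subgraph of $G$.

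For the third bullet, I study the embedding of $\Skel{G}_{\Emb{}}$ induced by $\Embedding{G}$: its faces are precisely the trivial triangles of $\Embedding{G}$ and, for each kite $\{a,b,c,d\}$, one quadrangular face bounded by the planar 4-cycle on $a,b,c,d$. Every such 4-cycle yields two internally vertex-disjoint skeleton paths between $a$ and $c$, namely via $b$ and via $d$. Assume for contradiction that $\Skel{G}_{\Emb{}}$ has a separator $S$ with $|S|\le 2$. Since $G$ is triconnected, the split of $\Skel{G}_{\Emb{}}-S$ must be bridged in $G$ by some crossing edge $\{a,c\}$; the two disjoint paths in its kite then force $\{b,d\}\subseteq S$, which immediately excludes the cut-vertex case and, for the separation-pair case, pins down $S=\{b,d\}$.

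To rule out $S=\{b,d\}$, I consider a planar reduction of $G$ in which $\{b,d\}$ rather than $\{a,c\}$ is the surviving diagonal of this kite. By the first two bullets this reduction is 3-connected, so after deleting the two vertices $b,d$ it is still connected and contains a path from $a$ to $c$. Any surviving crossing edge $\{a',c'\}$ on this path comes from a kite $\{a',b',c',d'\}$ that, by \NIC{}-planarity, shares at most one vertex with $\{a,b,c,d\}$; hence at least one of $b',d'$ avoids $\{b,d\}$, and the corresponding kite-4-cycle yields a detour for $\{a',c'\}$ lying entirely in $\Skel{G}_{\Emb{}}-\{b,d\}$. Concatenating these detours produces a path from $a$ to $c$ in $\Skel{G}_{\Emb{}}-\{b,d\}$, contradicting that $\{b,d\}$ separates them. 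I expect this final rerouting step to be the main obstacle, as it is where the stronger \NIC{}-planarity property (as opposed to plain \One{}-planarity) becomes essential: under mere \One{}-planarity two kites could share both endpoints of a diagonal, invalidating the detour argument.
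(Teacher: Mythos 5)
Your proposal is correct, and for the first two items it coincides with the paper's argument: the planar reduction is triangulated by \Lem{kite-or-triangle}, hence maximal planar, hence triconnected, and triconnectivity passes to the supergraph $G$. For the third item you diverge: the paper simply cites the result of Alam et al.\ that the planar skeleton of such an embedding is triconnected, whereas you give a self-contained separator argument --- a small separator $S$ of $\Skel{G}_{\Emb{}}$ must be bridged in the triconnected $G$ by a crossing edge $\Edge{a}{c}$, whose kite's planar $4$-cycle forces $S=\{b,d\}$, and this is then refuted by taking the planar reduction that keeps $\Edge{b}{d}$, extracting an $a$--$c$ path avoiding $\{b,d\}$ from its triconnectivity, and detouring each surviving diagonal through a kite corner that \NIC{}-planarity guarantees lies outside $\{b,d\}$. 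Your argument checks out (the detour step is sound: two distinct kites share at most one vertex, so at least one of $b',d'$ avoids $\{b,d\}$, and kite side edges are planar and hence survive in the skeleton). What each approach buys: the paper's version is shorter and rests on a known, more general fact --- skeleton triconnectivity holds already for suitably embedded triconnected \One{}-planar graphs, so your closing remark that \NIC{}-planarity is \emph{essential} here is somewhat overstated; your proof merely uses it as a convenient shortcut. In exchange, your version makes the corollary self-contained and makes explicit the combinatorial mechanism (kite $4$-cycles as built-in vertex-disjoint detours) that underlies the cited result.
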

\begin{proof}
$\PSub{G}_{\Emb{}}$ is planar and triangulated
by \Lem{kite-or-triangle}.
Thus, it is triconnected and so is $G$ as its supergraph.
Then, also the planar skeleton $\Skel{G}_{\Emb{}}$ is triconnected
as shown by Alam \ea~\cite{abk-sld3c-13}.
\Qed{}
\end{proof}
These results enable us to define the \emph{generalized dual graph}
of a maximal \NIC{}-planar graph.
As in the case of planar graphs, the dual is defined with respect to a specific
embedding $\Embedding{G}$.
Figure~\ref{fig:dual-example} provides a small example.
\begin{figure}[tb]
\centering
\begin{tikzpicture}[node distance=.3cm]
\matrix[matrix of nodes, column sep=.3cm,row sep=0.3cm,nodes={anchor=center}] (m) {%
\includegraphics{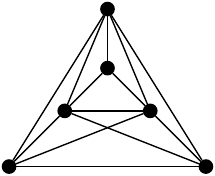}
\phantomsubcaption\label{fig:dual-example-graph}
&
\includegraphics{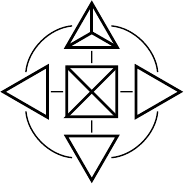}
\phantomsubcaption\label{fig:dual-example-dual}
\\
};
\node[fit=(m-1-1) (m-1-2)] (row1) {};
\path (m-1-1.south west) |- node[anchor=north west] {(\subref*{fig:dual-example-graph})} (row1.north east);
\path (m-1-2.south west) |- node[anchor=north west] {(\subref*{fig:dual-example-dual})} (row1.north east);
\end{tikzpicture}
\caption{An embedding of a graph~(\subref{fig:dual-example-graph}) and the
corresponding generalized dual~(\subref{fig:dual-example-dual}).}%
\FigLabel{dual-example}
\end{figure}
\begin{definition}
The \emph{generalized dual graph}
$\Dual{G} = (\Dual{V},\Dual{E})$ of a maximal \NIC{}-planar graph $G$
with respect to \NIC{}-planar embedding $\Embedding{G}$
contains three types of nodes:
For every set of faces forming a kite in $\Embedding{G}$, $\Dual{V}$ contains a
\KiteFace{}-node.
For every set of faces forming a simple tetrahedron in $\Embedding{G}$,
$\Dual{V}$ contains a \TetrahedronFace{}-node.
All other faces of $\Embedding{G}$ are represented by a \TriangleFace{}-node
each.
Let $\DualNode \in \Dual{V}$ and denote by
$\Rep(\DualNode)$ the set of faces of $\Embedding{G}$ represented by
$\DualNode$.
As in conventional duals of planar graphs, there is
an edge $\Edge{\DualNode}{\DualNodeOther} \in \Dual{E}$ for every pair of
adjacent faces $f \AdjacentFace g$ of $\Embedding{G}$ such that $f \in
\Rep(\DualNode)$ and $g \in \Rep(\DualNodeOther)$ and $\DualNode \neq
\DualNodeOther$.
\end{definition}
In the following, we analyze the structure of $\Dual{G}$.
For clarification, we call the elements of $V$ vertices and the
elements of $\Dual{V}$ nodes.
Note that \KiteFace{}-nodes have degree four and
\TetrahedronFace{}- and \TriangleFace{}-nodes have degree three.
Furthermore, this definition in general allows for multi-edges, but not loops.
\begin{lemma}
\LemLabel{dual-triconnected}
The generalized dual graph $\Dual{G}$ of a maximal \NIC{}-planar
graph $G$ with respect to a \NIC{}-planar embedding $\Embedding{G}$
is a simple $3$-connected planar graph.
\end{lemma}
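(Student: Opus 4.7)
My plan is to reduce the lemma to Whitney's theorem on planar duality by constructing an auxiliary planar graph $H$ whose planar dual equals $\Dual{G}$, and then establishing that $H$ is simple and $3$-connected.

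Define $H$ to be the graph obtained from the planar skeleton $\Skel{G}_{\Emb{}}$ by deleting, for every simple tetrahedron, its center vertex together with its three incident edges. $H$ inherits a planar embedding from $\Embedding{G}$. By \Lem{kite-or-triangle} its faces come in three types: one quadrangular face per kite (bounded by the four outer edges), one triangular face per simple tetrahedron (the three inner faces fuse when the center is removed), and one triangular face per remaining face of $\Embedding{G}$. These faces are in bijection with the nodes of $\Dual{G}$; and the edges of $H$ are precisely those edges of $G$ whose two sides in $\Embedding{G}$ bound faces belonging to distinct regions $\Rep(\cdot)$. Hence adjacency in $\Dual{H}$ reproduces adjacency in $\Dual{G}$, so $\Dual{G}\cong\Dual{H}$, which immediately yields planarity.

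The heart of the proof is showing that $H$ is $3$-connected. By \Cor{triconnected-skeleton}, $\Skel{G}_{\Emb{}}$ is $3$-connected. I would argue that deleting a single tetrahedron center preserves $3$-connectivity and then iterate. Let $d$ be such a center with $N(d)=\{a,b,c\}$ inducing a triangle. Assume some pair $\{u,v\}$ is a cut of $\Skel{G}_{\Emb{}}\setminus\{d\}$. Since $\Skel{G}_{\Emb{}}\setminus\{u,v\}$ is connected by $3$-connectivity, $d$ would have to be a cut vertex there. A case distinction on $k:=|\{u,v\}\cap\{a,b,c\}|$ rules this out: for $k\le 1$, the $3-k$ surviving neighbors of $d$ are pairwise adjacent through the surviving triangle edges on $\{a,b,c\}\setminus\{u,v\}$; for $k=2$, the one surviving neighbor makes $d$ a leaf of $\Skel{G}_{\Emb{}}\setminus\{u,v\}$, whose removal preserves connectivity. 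Iteration is justified by observing that two tetrahedron centers cannot be adjacent: otherwise the $K_4$ on the four common vertices would have to carry two different tetrahedral embeddings with different centers, impossible in a fixed embedding. Thus every subsequent center retains its degree-three-plus-triangle-neighborhood property, and $H$ is $3$-connected.

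Simplicity of $\Dual{G}\cong\Dual{H}$ is then automatic from the classical fact that any two faces of a simple $3$-connected planar graph share at most one edge, while loops are excluded by the definition of $\Dual{G}$. The main obstacle, as I see it, is the case analysis for preservation of $3$-connectivity under deletion of a degree-three vertex with a triangular neighborhood; once that is in hand, Whitney's theorem closes the argument.
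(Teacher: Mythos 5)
Your proof is correct and follows the same overall architecture as the paper's: both build the auxiliary planar graph obtained by stripping the simple-tetrahedron centers and the crossing pairs (the paper forms $G'$ by deleting the degree-three vertices of $G$ first and then takes the skeleton $\Skel{G'}_{\EmbOther}$; you take $\Skel{G}_{\Emb{}}$ first and then delete centers -- the two operations commute since centers are incident only to planar edges), identify $\Dual{G}$ as the planar dual of that graph, and conclude simplicity and $3$-connectivity from the classical facts about duals of simple $3$-connected planar graphs. The one genuine difference is how $3$-connectivity of the auxiliary graph is established. The paper simply invokes \Cor{triconnected-skeleton} on $G'$, which is slightly delicate because that corollary is stated for maximal \NIC{}-planar graphs and the maximality of $G'$ is never verified; you instead start from the $3$-connectivity of $\Skel{G}_{\Emb{}}$ (which the corollary does give directly) and show by an elementary case analysis that deleting a degree-three vertex whose neighborhood induces a triangle preserves $3$-connectivity, iterating over the centers. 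Your route is more self-contained and actually patches the paper's small leap; the price is that you must justify the iteration, and your stated reason that two centers cannot be adjacent ("the same $K_4$ would carry two tetrahedral embeddings with different centers") is compressed -- the cleaner argument is that two adjacent degree-three centers would force $N(d)\setminus\{d'\}=N(d')\setminus\{d\}$, making that common pair a $2$-cut and contradicting \Cor{triconnected} for $n\geq 5$. You also implicitly use that a center's neighborhood triangle consists of planar edges (so it survives into the skeleton); this is true and follows from \Lem{crossing-implies-kite}, but is worth a sentence.
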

\begin{proof}
Let $G' \subseteq G$ be the graph obtained from $G$ by
removing all vertices of degree three in $G$.
Recall that every face in $\Embedding{G}$ is a triangle by
\Lem{kite-or-triangle}.
Let $\EmbeddingOther{G'}$ denote the \NIC{}-planar
embedding of $G'$ inherited from $\Embedding{G}$
and observe that $\EmbeddingOther{G'}$ emerges from
$\Embedding{G}$ by
removing the center vertex of each simple tetrahedron and
thereby replacing every set of faces
forming a simple tetrahedron embedding by a \Trivial{}
triangle.
Next, consider the planar skeleton $\Skel{G'}_{\EmbOther}$ of $G'$ with
respect to $\EmbeddingOther{G'}$, which is obtained by removing all pairs of
crossing edges in kites. Then the generalized dual graph $\Dual{G}$ of $G$
with respect to $\Embedding{G}$ is the planar dual of
$\Skel{G'}_{\EmbOther}$.
As $\Skel{G'}_{\EmbOther}$ is simple and triconnected by
\Cor{triconnected-skeleton}, so is its dual graph.
\Qed
\end{proof}
\begin{figure}[tb]
\centering
\begin{tikzpicture}[node distance=.3cm]
\matrix[matrix of nodes, column sep=.3cm,row sep=0.3cm,nodes={anchor=center}] (m) {%
\includegraphics{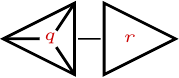}
\phantomsubcaption\label{fig:proof-dualadj-triangle-tetrahedron-dual}
&
\includegraphics[scale=.7]{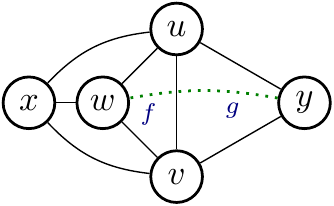}
\phantomsubcaption\label{fig:proof-dualadj-triangle-tetrahedron}
&
\includegraphics{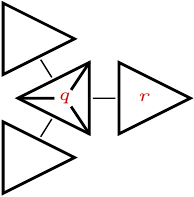}
\phantomsubcaption\label{fig:proof-dualadj-tetrahedron-triangles-dual}
\\
\includegraphics[scale=.7]{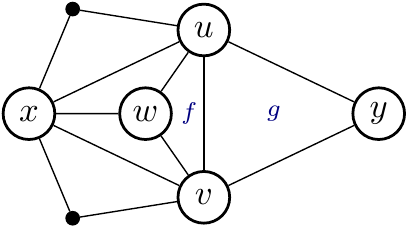}
\phantomsubcaption\label{fig:proof-dualadj-tetrahedron-triangles}
&
\includegraphics[scale=.7]{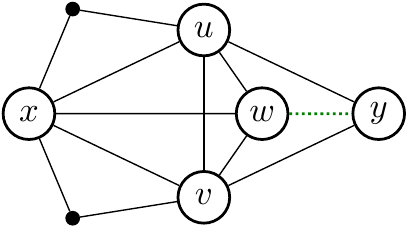}
\phantomsubcaption\label{fig:proof-dualadj-tetrahedron-triangles-reembed}
&
\includegraphics[scale=.7]{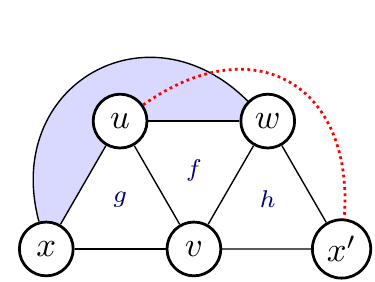}
\phantomsubcaption\label{fig:proof-dualadj-triangle-triangle}
\\
};
\node[fit=(m-1-1) (m-1-2) (m-1-3)] (row1) {};
\node[fit=(m-2-1) (m-2-2) (m-2-3)] (row2) {};
\node[fit=(m-1-1) (m-2-1)] (col1) {};
\node[fit=(m-1-2) (m-2-2)] (col2) {};
\node[fit=(m-1-3) (m-2-3)] (col3) {};

\path (col1.south west) |- node[anchor=north west] {(\subref*{fig:proof-dualadj-triangle-tetrahedron-dual})} (row1.north east);
\path (col2.south west) |- node[anchor=north west] {(\subref*{fig:proof-dualadj-triangle-tetrahedron})} (row1.north east);
\path (col3.south west) |- node[anchor=north west] {(\subref*{fig:proof-dualadj-tetrahedron-triangles-dual})} (row1.north east);
\path (col1.south west) |- node[anchor=north west] {(\subref*{fig:proof-dualadj-tetrahedron-triangles})} (row2.north east);
\path (col2.south west) |- node[anchor=north west] {(\subref*{fig:proof-dualadj-tetrahedron-triangles-reembed})} (row2.north east);
\path (col3.south west) |- node[anchor=north west] {(\subref*{fig:proof-dualadj-triangle-triangle})} (row2.north east);
\end{tikzpicture}
\caption{%
Proof of \Lem{dual-adjacencies}:\\
Case~\ref{itm:dual-adj-tetra}:
A \protect\TetrahedronFace{}-node $\DualNode$ in $\Dual{G}$ that is adjacent to an \Free{} \protect\TriangleFace{}-node $\DualNodeOther$
(\subref{fig:proof-dualadj-triangle-tetrahedron-dual})
and the corresponding situation in $\Embedding{G}$, where $f \in \Rep(\DualNode)$ and $\Rep(\DualNodeOther) = \{ g\}$
(\subref{fig:proof-dualadj-triangle-tetrahedron}).
\\
Case~(\ref{itm:dual-tetra-marked}):
A \protect\TetrahedronFace{}-node $\DualNode$ in $\Dual{G}$ that is adjacent to a \protect\TriangleFace{}-node $\DualNodeOther$
as well as two further \protect\TriangleFace{}-nodes
(\subref{fig:proof-dualadj-tetrahedron-triangles-dual}),
the corresponding situation in $\Embedding{G}$, where $f \in \Rep(\DualNode)$ and $\Rep(\DualNodeOther) = \{ g\}$
(\subref{fig:proof-dualadj-tetrahedron-triangles}),
and the reembedding, which then allows for the insertion of an additional edge
(\subref{fig:proof-dualadj-tetrahedron-triangles-reembed}).
\\
Case (\ref{itm:dual-adj-triangles-tetra}):
The situation in $\Embedding{G}$ in case of two adjacent, \Free{} \protect\TriangleFace{}-nodes
$\DualNode, \DualNodeOther$
in $\Dual{G}$ with $\Rep(\DualNode) = \{ f \}$ and $\Rep(\DualNodeOther) = \{ g \}$
(\subref{fig:proof-dualadj-triangle-triangle}).
The shaded region contains further vertices and edges of the graph.%
}%
\label{fig:proof-dual-adjacencies-tetrahedron}%
\label{fig:proof-dual-adjacencies}
\end{figure}
We say that a \TetrahedronFace{}- or \TriangleFace{}-node $\DualNode \in \Dual{V}$ is
\emph{\Kitonic{}} if $\DualNode$ is adjacent to a \KiteFace{}-node in $\Dual{G}$.
Otherwise, $\DualNode$ is \emph{\Free{}}.
Two adjacent \TriangleFace{}-nodes $\DualNode, \DualNodeOther \in \Dual{V}$ are said to be
\emph{\Tetrahedral{}} if the union of their vertices induces $K_4$ in $G$.
Observe that this induced $K_4$ is necessarily embedded as a non-simple
tetrahedron.
Let $u, v, w$ and $u, v, x$ denote the vertices incident to the faces
represented by $\DualNode$ and $\DualNodeOther$, respectively.
Then, $\Edge{w}{x} \in E$ if and only if $\DualNode, \DualNodeOther$ are
\Tetrahedral{}.
In this case, we call $\Edge{w}{x}$ the \emph{\Tetrahedral{} edge}.
In \Fig{proof-dualadj-triangle-triangle}, \eg, the \TriangleFace{}-nodes
representing the faces $f$ and $g$ in $\Dual{G}$ are \Tetrahedral{} and
$\Edge{w}{x}$ is the corresponding \Tetrahedral{} edge.

The definition of a \NIC{}-planar embedding implies a number of restrictions on
the adjacencies among nodes in $\Dual{G}$, which are subsumed in the following
lemma.
\begin{lemma}
\LemLabel{dual-adjacencies}
Let $\Dual{G}$ be the generalized dual of a maximal \NIC{}-planar graph
$G$ where $n \geq 5$ with respect to a \NIC{}-planar embedding $\Embedding{G}$.
\begin{enumerate}[(i)]
\item%
\label{itm:dual-adj-kite}
No two \KiteFace{}-nodes are adjacent.
\item%
\label{itm:dual-adj-tetra}
A \TetrahedronFace{}-node is only adjacent to \KiteFace{}-nodes and \Kitonic{}
\TriangleFace{}-nodes.
\item%
\label{itm:dual-tetra-marked}
Every \TetrahedronFace{}-node is \Kitonic{}.
\item%
\label{itm:dual-adj-triangles-tetra}
If two \Free{} \TriangleFace{}-nodes are adjacent,
then they are \Tetrahedral{}.
\item%
\label{itm:dual-adj-triangles-free}
If a \TriangleFace{}-node is adjacent to two \TriangleFace{}-nodes, then
one of them is \Kitonic{}.
\end{enumerate}
\end{lemma}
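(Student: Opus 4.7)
\emph{Overall strategy.} My plan is to prove all five clauses by contradiction with the maximality of $G$, in each case exhibiting an edge that can be added to $G$ (possibly after a local re-routing of $\Embedding{G}$) so that the drawing remains \NIC{}-planar. The rigid local structure provided by \Lem{crossing-implies-kite} (every crossing is contained in a unique kite) and \Lem{kite-or-triangle} (every face is a triangle) reduces each situation to a finite bookkeeping argument.

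\emph{\KiteFace{}- and \TetrahedronFace{}-node cases.} For~(\ref{itm:dual-adj-kite}), two adjacent \KiteFace{}-nodes would correspond to kites $K$ and $K'$ sharing a face boundary, i.e.\ an edge segment. Since by \Lem{crossing-implies-kite} every non-\Trivial{} segment lies in a unique kite, the shared segment must be a planar edge on the boundary of both $K$ and $K'$, identifying two vertices of the underlying $K_4$'s and hence two vertices shared by the two pairs of crossing edges, contradicting \NIC{}-planarity. For~(\ref{itm:dual-adj-tetra}), I consider a \TetrahedronFace{}-node $\DualNode$ with center $d$ inside triangle $abc$ whose neighbor $\DualNodeOther$ across $\Edge{a}{b}$ is an \Free{} \TriangleFace{}-node representing $g = \{a, b, e\}$; a flip that relocates $d$ into the region of $g$ and lets $\Edge{d}{e}$ cross $\Edge{a}{b}$ creates a kite on $\{a, b, d, e\}$, which is \NIC{}-planar precisely because $\DualNodeOther$ is \Free{}, while the former faces $f_{bcd}$ and $f_{cad}$ merge into a single region bounded by planar edges that can be subdivided by an additional edge such as $\Edge{c}{e}$, contradicting maximality. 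Item~(\ref{itm:dual-tetra-marked}) then follows: an entirely \Free{} \TetrahedronFace{}-node would have all three neighbors \Kitonic{} \TriangleFace{}-nodes by~(\ref{itm:dual-adj-tetra}), and a flip that exploits a kite adjacent to one of these neighbors again produces an addable edge.

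\emph{\TriangleFace{}-node cases and main obstacle.} For~(\ref{itm:dual-adj-triangles-tetra}), let $\DualNode, \DualNodeOther$ be adjacent \Free{} \TriangleFace{}-nodes representing $f = \{u, v, w\}$ and $g = \{u, v, x\}$ sharing the planar edge $\Edge{u}{v}$. If $\Edge{w}{x} \notin E$, then adding it across $\Edge{u}{v}$ creates a kite on $\{u, v, w, x\}$; no pre-existing kite can share two vertices with this new one, because any such overlap would force one of $\Edge{u}{v}, \Edge{u}{w}, \Edge{v}{w}, \Edge{u}{x}, \Edge{v}{x}$ to be a planar boundary of a pre-existing kite, turning $f$ or $g$ into a kite-face and contradicting that $\DualNode$ or $\DualNodeOther$ is a \TriangleFace{}-node. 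Hence the enlarged drawing is still \NIC{}-planar, contradicting maximality, so $\Edge{w}{x} \in E$. Since $\Edge{w}{x}$ cannot cross $\Edge{u}{v}$ (this would again force $f$ and $g$ to be kite-faces by \Lem{crossing-implies-kite}), the $K_4$ on $\{u, v, w, x\}$ is embedded as a tetrahedron, and because neither $f$ nor $g$ is a simple-tetrahedron face the embedding is non-simple, so $\DualNode$ and $\DualNodeOther$ are \Tetrahedral{}. For~(\ref{itm:dual-adj-triangles-free}), if the \TriangleFace{}-node $\DualNode$ representing $f = \{u, v, w\}$ had two \Free{} \TriangleFace{}-neighbors $\DualNode_1, \DualNode_2$ across, say, $\Edge{u}{v}$ and $\Edge{v}{w}$, then~(\ref{itm:dual-adj-triangles-tetra}) yields \Tetrahedral{} edges $\Edge{w}{x_1}$ and $\Edge{u}{x_2}$ incident to distinct vertices of $f$; the local subgraph around $v$ then either contains a face that can still be subdivided by a planar edge (contradicting maximality) or forces two pairs of crossings sharing two vertices (violating \NIC{}-planarity). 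The principal obstacle common to the flip steps of~(\ref{itm:dual-adj-tetra})--(\ref{itm:dual-tetra-marked}) and the edge-addition step of~(\ref{itm:dual-adj-triangles-tetra}) is certifying that the new crossing pair does not share two vertices with a remote pre-existing kite; in each case this reduces to the observation that any such overlap would promote a neighboring triangle-face to a kite-face, contradicting the \Free{} hypothesis on some node of $\Dual{G}$ and the triconnectivity of the planar skeleton guaranteed by \Cor{triconnected-skeleton}.
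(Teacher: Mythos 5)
Your clauses (i) and (iv) essentially reproduce the paper's arguments and are sound; in (iv) the contradiction is, strictly speaking, not that $f$ or $g$ ``becomes'' a kite face but that the face on the \emph{other} side of the relevant boundary edge would be one, which makes $\DualNode$ or $\DualNodeOther$ \Kitonic{} and contradicts freeness --- you state this correctly at the end of your last paragraph, so that is only a wording slip. The other three clauses have genuine gaps. Clause (ii) also asserts that a \TetrahedronFace{}-node is never adjacent to another \TetrahedronFace{}-node, and you never treat that case; the paper disposes of it by joining the two tetrahedron centers $w$ and $y$ by a new edge crossing the shared primal edge $\Edge{u}{v}$ (both centers are incident only to planar edges, so \NIC{}-planarity is preserved). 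Your construction for the \Free{}-triangle case of (ii) is also off: adding the new edge $\Edge{d}{e}$ so that it crosses $\Edge{a}{b}$ already contradicts maximality, and no relocation of $d$ is needed; conversely, if you only re-route $d$, the faces $f_{bcd}$ and $f_{cad}$ do not merge (they are replaced by the two kite faces at $c$), the new quadrilateral face lies on the $e$-side and is bounded by $a,d,b,e$, and the addable edge is $\Edge{d}{e}$ --- crucially, $\Edge{d}{e}\notin E$ is guaranteed because the center $d$ has degree three, whereas your proposed witness $\Edge{c}{e}$ may already be present in $G$ and therefore cannot certify non-maximality.

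Clause (iii) is only gestured at: ``a flip that exploits a kite adjacent to one of these neighbors'' is not an argument, and the paper's proof does not use those kites at all. It re-embeds the center $w$ so that $\Edge{w}{x}$ crosses $\Edge{u}{v}$ --- legitimate because every edge of the outer triangle on $u,v,x$ bounds two \Trivial{} triangles, so by \Lem{crossing-implies-kite} no pair among $u,v,w,x$ is already incident to a common pair of crossing edges --- which produces a quadrilateral face on $u,w,v,y$ and hence the addable edge $\Edge{w}{y}$. Most seriously, clause (v) is missing its key idea: the dichotomy ``either a subdividable face or two crossings sharing two vertices'' is asserted, not derived. The actual mechanism is a separation argument: applying (iv) twice yields \Tetrahedral{} edges $\Edge{w}{x}$ and $\Edge{u}{x'}$, and the closed path formed by $\Edge{w}{x}$, $\Edge{x}{v}$, $\Edge{v}{w}$, whose latter two edges are planar, separates $u$ from $x'$; hence $\Edge{u}{x'}$ must cross $\Edge{w}{x}$, and by \Lem{crossing-implies-kite} this crossing spans a kite whose boundary contains $\Edge{u}{w}$, which makes $\DualNode$ \Kitonic{} and contradicts the assumption that all three nodes are \Free{}. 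Without this argument clause (v) does not follow from what you have written.
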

\begin{proof}
Consider a node $\DualNode$ of $\Dual{G}$ that
is adjacent to another node $\DualNodeOther$ via an edge $e$.
Let $\Edge{u}{v}$ be the corresponding primal edge of $e$
and denote by $f \in \Rep(\DualNode)$ and $g \in \Rep(\DualNodeOther)$
the faces bounded by $\Edge{u}{v}$.
Note that $\Edge{u}{v}$ is planar in $\Embedding{G}$.

(\ref{itm:dual-adj-kite})
Assume that $\DualNode$ and $\DualNodeOther$ are \KiteFace{}-nodes.
Then, $u$ and $v$ would be incident to
two pairs of crossing edges, thus contradicting the \NIC{}-planarity of
$\Embedding{G}$.

(\ref{itm:dual-adj-tetra})
Next, assume that $\DualNode$ is a \TetrahedronFace{}-node.
Let $w, x \neq u, v$ denote the two other vertices of the $K_4$
represented by $\DualNode$, such that $w$ is the vertex of degree
three in the center of the tetrahedron.\\
Suppose that $\DualNodeOther$ is also a \TetrahedronFace{}-node.
As $f$ and $g$ are \Trivial{} triangles, $u$ and $v$ are not incident to a
common pair of crossing edges by \Lem{crossing-implies-kite}.
Let $y$ denote the third vertex incident to $g$, which is
the center of the tetrahedron embedding represented by $\DualNodeOther$.
Then, $w$ and $y$ are not adjacent to each other, and no pair of vertices from
$u, v, w, y$ is incident to a common pair of crossing edges.
In consequence, inserting an edge $\Edge{w}{y}$
and embedding it such that it crosses $\Edge{u}{v}$ does not violate
\NIC{}-planarity and thus contradicts the maximality of $G$.
Subsequently, $\DualNodeOther$ cannot be a \TetrahedronFace{}-node.\\
Suppose that $\DualNodeOther$ is a \TriangleFace{}-node
(\cf~\FigAnd{proof-dualadj-triangle-tetrahedron-dual}{proof-dualadj-triangle-tetrahedron}).
By \Lem{crossing-implies-kite}, $u$ and $v$ cannot be adjacent to a common pair
of crossing edges, as both $f$ and $g$ are \Trivial{} triangles.
Let $y \neq u,v$ be the third vertex incident to $g$.
If $x = y$, then $\Dual{G}$ would consist of exactly one \TriangleFace{}-node
and one \TetrahedronFace{}-node, \ie, $G$ is $K_4$ and $\Embedding{G}$ is its
planar embedding.
As $n \geq 5$, $x \neq y$.
Being the tetrahedron's center vertex, $w$ is adjacent only to $u$, $v$, and
$x$ and all edges incident to $w$ are planar.
Suppose that $\DualNodeOther$ is \Free{}.
Then, by \Lem{crossing-implies-kite}, neither $u,y$ nor $v,y$ are
adjacent to a common pair of crossing edges.
Subsequently, the edge $\Edge{w}{y}$ can be added to $G$ and embedded
such that it crosses $\Edge{u}{v}$ without violating \NIC{}-planarity,
a contradiction to $G$ being maximal.
Hence, if a \TetrahedronFace{}-node is adjacent to a \TriangleFace{}-node,
the latter must be \Kitonic{}.\\
(\ref{itm:dual-tetra-marked})
Suppose that $\DualNode$ is a \TetrahedronFace{}-node as in
(\ref{itm:dual-adj-tetra}) and only adjacent to (\Kitonic{})
\TriangleFace{}-nodes
(\cf~\FigAnd{proof-dualadj-tetrahedron-triangles-dual}{proof-dualadj-tetrahedron-triangles}).
Then, in particular, $\DualNodeOther$ is a \TriangleFace{}-node.
Denote the third vertex incident to $g$ by $y$.
As $\Edge{u}{v}$, $\Edge{v}{x}$, and $\Edge{u}{x}$ are planar and each of them
bounds two \Trivial{} triangles,
none of their end vertices can be
incident to a common pair of crossing edges by \Lem{crossing-implies-kite}.
The same holds for $u,w$ as well as $v,w$ and $x,w$, since $w$ is incident only
planar edges.
Subsequently, the embedding $\EmbeddingOther{G}$ obtained from $\Embedding{G}$
by reembedding $w$ such that $\Edge{w}{x}$ crosses $\Edge{u}{v}$ yields a
\NIC{}-planar embedding of $G$
(\cf~\Fig{proof-dualadj-tetrahedron-triangles-reembed}).
Furthermore, $\EmbeddingOther{G}$ contains a face that is bounded by four
\Trivial{} edge segments $\Edge{u}{w}$, $\Edge{v}{w}$, $\Edge{v}{y}$, and
$\Edge{u}{y}$.
Thus, there also is a \NIC{}-planar embedding for the graph obtained from $G$
by adding an edge $\Edge{w}{y}$, a contradiction to $G$ being maximal.
Note that as $n \geq 5$, $x \neq y$, hence, $\Edge{w}{y}$ is not contained
in $G$.
In consequence, the \TetrahedronFace{}-node $\DualNode$ cannot be adjacent to
\TriangleFace{}-nodes only and thus must itself be \Kitonic{}.

(\ref{itm:dual-adj-triangles-tetra})
Consider now the case that $\DualNode$ is a \TriangleFace{}-node
and let $w$ denote the third vertex incident to $f$.
Furthermore, assume that $\DualNodeOther$ is another \TriangleFace{}-node and
both are \Free{} (\cf~\Fig{proof-dualadj-triangle-triangle}).
Denote the third vertex incident to $g$ by $x$.
In consequence of \Lem{crossing-implies-kite}, no pair of $u,v,w$ is
incident to a common pair of crossing edges, and likewise for
$u,v,x$.
Suppose that $\Edge{w}{x} \not\in E$.
Then, by \Lem{crossing-implies-kite}, $w$ and $x$ cannot be incident to a
common pair of crossing edges.
Hence, $\Edge{w}{x}$ can be added to $G$ and embedded such that it crosses
$\Edge{u}{v}$ without violating \NIC{}-planarity, a contradiction to the
maximality of $G$.
Subsequently, $\Edge{w}{x} \in E$.
This in turn implies that $G[u,v,w,x]$ is $K_4$, hence, $\DualNode$ and
\DualNodeOther{} are \Tetrahedral{}.

(\ref{itm:dual-adj-triangles-free})
Finally, consider the case that $\DualNode$ and $\DualNodeOther$ are \TriangleFace{}-nodes
and that $\DualNode$ is adjacent to a second \TriangleFace{}-node $\DualNodeThird$
representing a face $h$, which is also depicted in \Fig{proof-dualadj-triangle-triangle}.
\Wilog, let $\Edge{v}{w}$ be the edge bounding both $f$ and $h$ and denote by
$x' \neq v,w$ the third vertex incident to $h$.
Suppose that $\DualNode$, $\DualNodeOther$, and $\DualNodeThird$ are \Free{}.
As we have argued in the proof of (\ref{itm:dual-adj-triangles-tetra}),
$\Edge{w}{x} \in E$.
For $\DualNode$ and $\DualNodeThird$, this analogously implies that
$\Edge{u}{x'}$ $\in E$.
As $\Edge{w}{x}$, $\Edge{x}{v}$, and $\Edge{v}{w}$ form a closed path and both
$\Edge{w}{x}$ and $\Edge{x}{v}$ are planar, $\Edge{u}{x'}$ must cross
$\Edge{w}{x}$.
By \Lem{crossing-implies-kite}, $G[u,w,x,x']$ is $K_4$ and $\Edge{u}{w}$ hence
bounds a non-\Trivial{} triangle.
Thus, $\DualNode$, $\DualNodeOther$, and $\DualNodeThird$ are \Kitonic{},
a contradiction.
Consequently, $\DualNode$, $\DualNodeOther$, and $\DualNodeThird$
cannot be all \Free{}.
\Qed{}
\end{proof}
With respect to a fixed embedding, also the converse of
\Lem{dual-adjacencies} holds:

\begin{lemma}%
\LemLabel{dual-adj2}
Let $\Dual{G}$ be a triconnected planar graph with vertices of degree three that are
labeled by \TriangleFace{} and \TetrahedronFace{} and of vertices of degree four
that are labeled by \KiteFace{} so that the requirements of
\Lem{dual-adjacencies} hold.
Let $G$ be a \One{}-planar graph whose generalized dual graph is $\Dual{G}$.
Then, $G$ is simple and triconnected and has a maximal \NIC{}-planar embedding.
\end{lemma}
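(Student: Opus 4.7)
My plan is first to make the hypothesis concrete by reading $G$ off $\Dual{G}$ and then to verify the three required properties in turn. Since $\Dual{G}$ is simple, triconnected, and planar, it has an essentially unique combinatorial embedding, and its planar dual $H$ is again simple and triconnected. Each face of $H$ corresponds to a node of $\Dual{G}$: the \KiteFace{}-nodes (of degree~4) yield quadrilateral faces, while the \TriangleFace{}- and \TetrahedronFace{}-nodes (of degree~3) yield triangular faces. I would then recover $G$ from $H$ by inserting both diagonals of each \KiteFace{}-quadrilateral as a pair of crossing edges (turning the face into a kite), and by placing a fresh degree-three vertex inside each \TetrahedronFace{}-triangle, joining it to the three corners. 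The resulting $1$-planar drawing is the embedding $\Embedding{G}$ whose generalized dual, by construction, is the given $\Dual{G}$.

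\textbf{Simplicity, triconnectivity and \NIC{}-planarity.} Simplicity of $G$ follows from simplicity of $H$: the inserted centers are fresh vertices, and neither diagonal of a \KiteFace{}-quadrilateral can already be an edge of $H$, since such a chord would contradict the triconnectivity or planarity of $H$. For triconnectivity, $H$ is $3$-connected as the planar dual of the $3$-connected $\Dual{G}$, adding the diagonals inside a face cannot decrease $3$-connectivity, and every inserted center vertex remains attached to the rest through at least one of its three corners after any removal of two vertices. Finally, every crossing of $\Embedding{G}$ sits inside a kite, and two crossings can share a vertex only when the two kite faces share an edge of $H$, i.e., when two \KiteFace{}-nodes are adjacent in $\Dual{G}$, which condition~(\ref{itm:dual-adj-kite}) forbids. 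Hence $\Embedding{G}$ is \NIC{}-planar.

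\textbf{Maximality (main obstacle).} The delicate step is to argue that no further edge can be inserted into $\Embedding{G}$ while preserving \NIC{}-planarity. By construction every face of $\Embedding{G}$ is a triangle, so any hypothetical additional edge $e=\Edge{u}{v}$ must cross some existing edge $e'$, and $e'$ must be planar in $\Embedding{G}$ to retain $1$-planarity. The two faces $f,g$ of $\Embedding{G}$ separated by $e'$ correspond to two non-\KiteFace{}-nodes $\DualNode,\DualNodeOther$ of $\Dual{G}$, reducing the problem to a case analysis on their labels that mirrors, in reverse, the forbidden configurations of \Lem{dual-adjacencies}. If $\DualNode$ is a \TetrahedronFace{}-node, condition~(\ref{itm:dual-adj-tetra}) forces $\DualNodeOther$ to be \Kitonic{} and condition~(\ref{itm:dual-tetra-marked}) supplies the required \KiteFace{}-neighbour of $\DualNode$, so the new kite on $u,v$ and the endpoints of $e'$ would share two vertices with a pre-existing kite, contradicting \NIC{}-planarity. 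When both $\DualNode,\DualNodeOther$ are \Kitonic{} \TriangleFace{}-nodes the same collision occurs, and when both are \Free{} \TriangleFace{}-nodes, condition~(\ref{itm:dual-adj-triangles-tetra}) already places $\Edge{u}{v}$ in $E$, contradicting that $e$ is a new edge. The remaining mixed configuration, in which a \TriangleFace{}-node would be adjacent to two \Free{} \TriangleFace{}-nodes, is excluded by condition~(\ref{itm:dual-adj-triangles-free}). In every case the assumed augmentation leads to a violation of \NIC{}-planarity or of the imposed adjacency restrictions, so $\Embedding{G}$ is maximal \NIC{}-planar.
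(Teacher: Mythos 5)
Your proof is correct and follows essentially the same route as the paper: reconstruct the primal from the unique embedding of the dual of $\Dual{G}$, expand \KiteFace{}-nodes into kites and \TetrahedronFace{}-nodes into centered triangles, verify simplicity, triconnectivity and \NIC{}-planarity, and establish maximality by a case analysis on the two faces flanking the planar edge a hypothetical new edge would cross, keyed to the conditions of \Lem{dual-adjacencies}. One small imprecision: two kites whose faces share only a vertex (not an edge of $H$) do yield crossings sharing one end vertex, which \NIC{}-planarity permits; what must be excluded is sharing \emph{two} end vertices, and that is exactly what adjacency of \KiteFace{}-nodes encodes, so your conclusion stands.
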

\begin{proof}
A simple triconnected planar graph $H$ has a simple triconnected dual
$H^*$ and $H$ is isomorphic to the dual of $H^*$. Both graphs have a
unique planar embedding.

Let $G_1$ be the planar dual of $G^*$. Then $G_1$ is simple and
triconnected and has a unique planar embedding. First, add a pair of
crossing edges in each quadrilateral face of $G_1$. This is the
expansion of each \KiteFace{}-node. It preserves simplicity since
otherwise $G_1$ were not triconnected. The so obtained graph $G_2$ is
\One{}-planar. Next, expand each \TetrahedronFace{}-node by inserting a
center vertex in the respective triangle and call the resulting graph
$G_3$. This preserves simplicity and triconnectivity.
The  embedding $\Embedding{G_3}$ is inherited from the embedding
of $G_1$  and is triangulated and \One{}-planar.
It is also \NIC{}-planar, since two kites are not adjacent by
requirement (\ref{itm:dual-adj-kite}) of \Lem{dual-adjacencies}.
It remains to show that $\Embedding{G_3}$ is maximal \NIC{}-planar.
Towards a contradiction, suppose an edge $\Edge{u}{v}$ could be
added to $\Embedding{G_3}$. Then $u$ and $v$ are in a face $f$ of
$\Embedding{G_1}$ or in two adjacent faces $f_1, f_2$ with a common
edge $\Edge{a}{b}$ where the faces may be expanded by a center.
Face  $f$ must be a quadrilateral, which, however, is expanded to a
kite so that $\Edge{u}{v}$  already exists in $G_3$. If $u$ is the
center of a \TetrahedronFace{}-node, then the given requirements
from \Lem{dual-adjacencies} exclude a new edge, since the other face
is \Kitonic{} and the new edge would introduce two adjacent kites.
If $u$ and $v$ are the vertices on opposite sides of two triangles
with a common edge $\Edge{a}{b}$, then $\Edge{u}{v}$ is excluded by
the requirements.
$G$ is isomorphic to $G_3$, since $G^*$ is obtained from each as a
generalized dual.
\Qed
\end{proof}

Let $\DualNode \in \Dual{V}$ be a \KiteFace{}-node in $\Dual{G}$ and
$\DualNodeOther, \DualNodeThird \in \Dual{V}$ two \TriangleFace{}-nodes
such that $\DualNodeOther, \DualNodeThird$ are \Tetrahedral{}
and the corresponding \Tetrahedral{} edge is one of the crossing edges
of the kite represented by $\DualNode$.
A \emph{kite flip} between $\DualNode$, $\DualNodeOther$, and $\DualNodeThird$ is a
reembedding of the \Tetrahedral{} edge such
that it crosses the edge on the common boundary of the faces represented by
$\DualNodeOther$ and $\DualNodeThird$.
In the generalized dual $\DualOther{G}$ with respect to this new embedding
$\EmbeddingOther{G}$, $\DualNode$ is hence replaced by a pair of adjacent,
\Tetrahedral{} \TriangleFace{}-nodes and $\DualNodeOther$ and $\DualNodeThird$
are replaced by a \KiteFace{}-node.
More formally, if $\DualNode$ is adjacent to nodes
$\DualNode_a$, $\DualNode_b$,
$\DualNode_c$, $\DualNode_d$,
$\DualNodeOther$ is adjacent to $\DualNodeThird$, $\DualNodeOther_a$, and $\DualNodeOther_b$,
and
$\DualNodeThird$ is additionally adjacent to $\DualNodeThird_a$, and $\DualNodeThird_b$,
then $\DualOther{G}$ has the same set of vertices and edges as $\Dual{G}$
except for $\DualNode$, $\DualNodeOther$, and $\DualNodeThird$ and their incident edges.
Instead, $\DualOther{G}$ contains two adjacent, \Tetrahedral{}
\TriangleFace{}-nodes $\DualNode'$ and $\DualNode''$ and a \KiteFace{}-node
$\DualNodeFourth_{\DualNodeOther\DualNodeThird}$ such that, \wilog,
$\DualNode'$ is adjacent to $\DualNode_a$ and $\DualNode_b$,
$\DualNode''$ is adjacent to $\DualNode_c$ and $\DualNode_d$,
and $\DualNodeFourth_{\DualNodeOther\DualNodeThird}$ is adjacent to $\DualNodeOther_a$, $\DualNodeOther_b$,
$\DualNodeThird_a$, and $\DualNodeThird_b$ in $\DualOther{G}$.
\Fig{kite-flipping} provides an example, where the \KiteFace{}-node
is even adjacent to the \TriangleFace{}-nodes it is flipped with,
which is, however, not a prerequisite.
\begin{figure}[tb]
\centering
\begin{tikzpicture}[node distance=.3cm]
\matrix[matrix of nodes, column sep=.4cm,row sep=0.3cm,nodes={anchor=center}] (m) {%
\includegraphics[scale=.7]{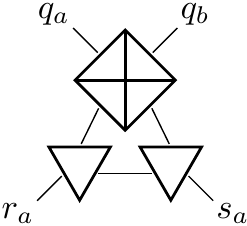}
\phantomsubcaption\label{fig:kite-flipping-top-dual}
&
\includegraphics[scale=.7]{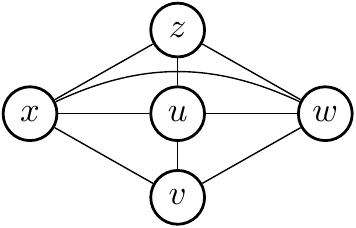}
\phantomsubcaption\label{fig:kite-flipping-top}
&
\includegraphics[scale=.7]{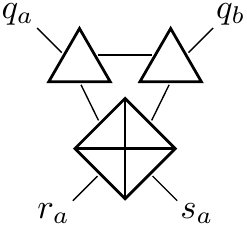}
\phantomsubcaption\label{fig:kite-flipping-bottom-dual}
&
\includegraphics[scale=.7]{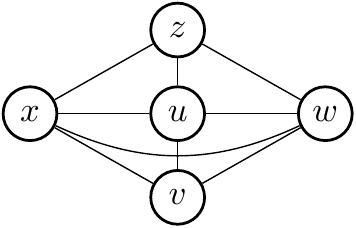}
\phantomsubcaption\label{fig:kite-flipping-bottom}
\\
};
\node[fit=(m-1-1) (m-1-2) (m-1-3) (m-1-4)] (row1) {};
\path (m-1-1.north west) |- node[anchor=north east] {(\subref*{fig:kite-flipping-top-dual})} (row1.north east);
\path (m-1-2.north west) |- node[anchor=north west] {(\subref*{fig:kite-flipping-top})} (row1.north east);
\path (m-1-3.north west) |- node[anchor=north east] {(\subref*{fig:kite-flipping-bottom-dual})} (row1.north east);
\path (m-1-4.north west) |- node[anchor=north west] {(\subref*{fig:kite-flipping-bottom})} (row1.north east);
\end{tikzpicture}
\caption{Example of a kite flip between a \protect\KiteFace{}-node $\DualNode{}$
and two \protect\TriangleFace{}-nodes $\DualNodeOther{}$ and $\DualNodeThird{}$, where
$\DualNodeOther{}$ and $\DualNodeThird{}$ are additionally adjacent to $\DualNode{}$.
The adjacencies in the generalized dual graph are shown before
(\subref{fig:kite-flipping-top-dual}) and after (\subref{fig:kite-flipping-bottom-dual}) the flip,
the corresponding embeddings are depicted in (\subref{fig:kite-flipping-top})
and (\subref{fig:kite-flipping-bottom}).%
}%
\label{fig:kite-flipping}
\end{figure}

In general, the embedding resulting from a kite flip in a (maximal) \NIC{}-planar
embedding is not necessarily also maximal or even \NIC{}-planar.
\begin{lemma}%
\LemLabel{kite-flip}
Let $\Dual{G}$ be the generalized dual of a maximal \NIC{}-planar graph $G =
(V, E)$ where $n \geq 5$ with respect to a \NIC{}-planar embedding
$\Embedding{G}$.
For every pair of adjacent \TriangleFace{}-nodes in $\Dual{G}$ that are either
\Free{} or adjacent to a single, common \KiteFace{}-node there is a maximal
\NIC{}-planar embedding $\EmbeddingOther{G}$ of $G$ such that in the
corresponding generalized dual, these two \TriangleFace{}-nodes are kite
flipped with a \KiteFace{}-node.
\end{lemma}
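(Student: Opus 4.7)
The plan is to exhibit the desired re-embedding explicitly by performing a kite flip and to verify that the result is \NIC{}-planar; maximality then follows gratis from $G$ being a maximal \NIC{}-planar graph. Let $f$ and $g$ be the faces of $\Embedding{G}$ represented by $r$ and $s$, with common bounding edge $\Edge{u}{v}$ and third vertices $w$ and $x$, respectively.

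I first identify the \KiteFace{}-node $q$ to flip with. In the common-kite case, let $q$ be the \KiteFace{}-node to which both $r$ and $s$ are adjacent. A short structural argument shows that the planar edges through which $r$ and $s$ are adjacent to $q$ lie on two adjacent sub-faces of $q$'s kite, forcing the vertices of $r$, $s$, and $q$ to collapse to only four: three consecutive kite-vertices $a, b, c$ of $q$ plus a fourth vertex $w$ with $\Edge{b}{w} = \Edge{u}{v}$ and $\{f,g\}=\{f(a,b,w), f(b,c,w)\}$. The tetrahedral edge of $r, s$ is then precisely $\Edge{a}{c}$, which is one of the crossing edges of $q$. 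In the \Free{} case, (\ref{itm:dual-adj-triangles-tetra}) of \Lem{dual-adjacencies} yields that $r, s$ are \Tetrahedral{}, so $\Edge{w}{x} \in E$; using that $f$ and $g$ are \Trivial{} triangles and the induced $K_4$ on $\{u,v,w,x\}$ is embedded as a non-simple tetrahedron, a case analysis on the location of $\Edge{w}{x}$ shows that it must itself be a crossing edge, and hence belongs to some \KiteFace{}-node $q \in \Dual{V}$.

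I then carry out the kite flip: the identified tetrahedral edge is re-routed so as to cross $\Edge{u}{v}$ instead of its former crossing partner inside $q$. The modification is purely local, affecting only the union of the four sub-faces of $q$ together with $f$ and $g$. In the resulting drawing $\EmbeddingOther{G}$, the vertices $\{u,v,w,x\}$ span a new kite in place of $f \cup g$, while the former kite at the position of $q$ dissolves into two \Trivial{} triangles.

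The bulk of the work is in verifying that $\EmbeddingOther{G}$ is \NIC{}-planar, \ie, that the new kite on $\{u,v,w,x\}$ shares at most one vertex with every other kite of $\EmbeddingOther{G}$. Using \Lem{dual-adjacencies} applied to $\Embedding{G}$ together with the hypothesis that $r$ and $s$ have no \KiteFace{}-neighbor other than $q$, I would rule out, for each of the pairs $\{u,w\}$, $\{u,x\}$, $\{v,w\}$, $\{v,x\}$, the existence of a second kite $q_2 \neq q$ containing both vertices of the pair: such a $q_2$ would make the \Trivial{} triangle $f$ or $g$ share a planar edge with a sub-face of $q_2$ and therefore force $r$ or $s$ to be adjacent in $\Dual{G}$ to a \KiteFace{}-node different from $q$, contradicting the hypothesis. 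The pair $\{u,v\}$ is impossible because $\Edge{u}{v}$ already bounds the two \Trivial{} triangles $f$ and $g$ in $\Embedding{G}$, and $\{w,x\}$ is ruled out since $\Edge{w}{x}$ now serves as the crossing edge of the new kite and therefore cannot simultaneously lie in a second kite. Maximality of $\EmbeddingOther{G}$ is then immediate: since $G$ is maximal \NIC{}-planar at the graph level, no edge can be added to any \NIC{}-planar embedding of $G$ without violating \NIC{}-planarity or simplicity.
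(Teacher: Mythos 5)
Your proposal is correct and follows essentially the same route as the paper's proof: identify the \Tetrahedral{} edge $\Edge{w}{x}$, argue that it must be a crossing edge of some \KiteFace{}-node, reroute it so that it crosses $\Edge{u}{v}$, verify \NIC{}-planarity of the result locally, and inherit maximality of the new embedding from the maximality of $G$. The one step you leave thinner than the paper is the claim (in the \Free{} case) that $\Edge{w}{x}$ is itself crossed in $\Embedding{G}$ --- the paper derives this by noting that a planar $\Edge{w}{x}$ bounding two \Trivial{} triangles could be rerouted to create a quadrangular face, contradicting \Lem{kite-or-triangle} and the maximality of $G$ --- whereas your systematic check of all six vertex pairs of the new kite against every other kite is, conversely, more explicit than what the paper writes out.
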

\begin{proof}
Consider a pair of \TriangleFace{}-nodes $\DualNode$
and $\DualNodeOther$ in $\Dual{G}$ that are adjacent to each other via an edge
$e$.
Let $\Edge{u}{v}$ be the corresponding primal edge of $e$ and denote by $f$ and
$g$ the \Trivial{} triangles they represent, \ie,
$\Rep(\DualNode) = \{f\}$ and $\Rep(\DualNodeOther) = \{g\}$.
Note that $f$ and $g$ are both bounded by $\Edge{u}{v}$ and let $w$ and $x$, $w
\neq x$, denote the third vertex incident to $f$ and $g$, respectively.

Assume first that $\DualNode$ and $\DualNodeOther$ are \Free{},
as obtained, \eg, in case of an embedding as in \Fig{proof-kiteflip}.
By \Lem{dual-adjacencies}, $\DualNode$ and $\DualNodeOther$ are \Tetrahedral{}.
Hence, $\Edge{w}{x} \in E$ and is the corresponding \Tetrahedral{} edge.
Suppose that $w$ and $x$ are not incident to a common pair of crossing edges.
By \Lem{crossing-implies-kite} and \Lem{kite-or-triangle}, $\Edge{w}{x}$ thus
bounds two \Trivial{} triangles.
Then, however, reembedding $\Edge{w}{x}$ such that it crosses $\Edge{u}{v}$
yields a \NIC{}-planar embedding with a face whose boundary consists of four
\Trivial{} edge segments, a contradiction to either \Lem{kite-or-triangle} or
the maximality of $G$.
Hence, $w$ and $x$ must be incident to a common pair of crossing edges, \ie,
$G$ contains a $K_4$ induced by $w,x$ and two further vertices $y,z$.
Let $\EmbeddingOther{G}$ be the \NIC{}-planar embedding obtained from
$\Embedding{G}$ by reembedding $\Edge{w}{x}$ such that it crosses
$\Edge{u}{v}$~(\Fig{proof-kiteflip-after}).
As $G$ is maximal, so must be $\EmbeddingOther{G}$.
The reembeddability of $\Edge{w}{x}$ together with \Lem{crossing-implies-kite}
implies that $\Edge{w}{x}$ is crossed in both $\Embedding{G}$ and
$\EmbeddingOther{G}$.
Consequently,
the generalized dual $\DualOther{G}$ with respect to $\EmbeddingOther{G}$
is obtained from $\Dual{G}$ by replacing $\DualNode{}$ and $\DualNodeOther$
with a \KiteFace{}-node and by replacing the \KiteFace{}-node representing
the kite $G[\{w,x,y,z\}]$ with two \TriangleFace{}-nodes, \ie,
$\DualOther{G}$ is obtained by a kite flip between this \KiteFace{}-node
and $\DualNode{}$ and $\DualNodeOther$.
Note that $y,z \neq u,v$, otherwise, $\DualNode$ and $\DualNodeOther$ were
adjacent to the \KiteFace{}-node representing the embedding of $G[\{w,x,y,z\}]$
and therefore \Kitonic{}.

Otherwise, assume that $\DualNode{}$ and $\DualNodeOther$ are adjacent
to a common \KiteFace{}-node $\DualNodeThird$ in $\Dual{G}$.
Then, the faces represented by $\DualNodeThird$ are incident to both $w$ and
$x$ as well as either $u$ or $v$.
\Wilog, assume the former and let $y$ denote the fourth vertex of the $K_4$
represented by $\DualNodeThird$, \ie, $\DualNodeThird$ represents the embedding
of the $K_4$ $G[u,w,x,y]$.
Figure~\ref{fig:proof-kiteflip-adj} shows an embedding that corresponds to
this situation in $\Dual{G}$.
Subsequently, $G$ contains the edge $\Edge{w}{x}$, which implies that
$\DualNode{}$ and $\DualNodeOther$ are \Tetrahedral{}.
As neither $\DualNode{}$ nor $\DualNodeOther$ is adjacent to a further
\KiteFace{}-node in $\Dual{G}$ by the requirements of the lemma, $\Edge{v}{x}$
and $\Edge{v}{w}$ do not bound a non-\Trivial{} triangle face.
Hence, neither $u, v$ nor $v, x$ nor $v, w$ are adjacent to a common
pair of crossing edges by \Lem{crossing-implies-kite}.
Furthermore, $u, w$ and $u, x$ are adjacent only to the pair of crossing edges
$\Edge{u}{y}$ and $\Edge{w}{x}$.
Let $\EmbeddingOther{G}$ be the \NIC{}-planar embedding obtained from
$\Embedding{G}$ by reembedding $\Edge{w}{x}$ such that it crosses
$\Edge{u}{v}$, \ie, by applying a kite flip to $\DualNodeThird$,
$\DualNode{}$ and $\DualNodeOther$~(\Fig{proof-kiteflip-adj-after}).
As $\Edge{u}{y}$ and $\Edge{w}{x}$ do not cross in $\EmbeddingOther{G}$, $u, w$
and $u, x$ are now again only adjacent to one common pair of crossing edges each,
which is in both cases $\Edge{u}{v}$ and $\Edge{w}{x}$.
Thus, $\EmbeddingOther{G}$ is a \NIC{}-planar embedding of $G$, and, as $G$ is
maximal, so is $\EmbeddingOther{G}$.
\Qed{}
\end{proof}
\begin{figure}[tb]
\centering
\begin{tikzpicture}[node distance=.3cm]
\matrix[matrix of nodes, column sep=.5cm,row sep=0.3cm,nodes={anchor=center}] (m) {%
\includegraphics[scale=.7]{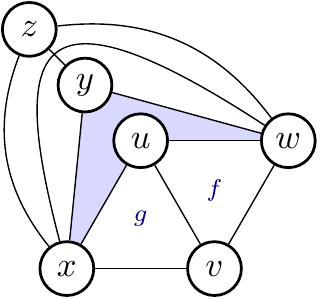}
\phantomsubcaption\label{fig:proof-kiteflip}
&
\includegraphics[scale=.7]{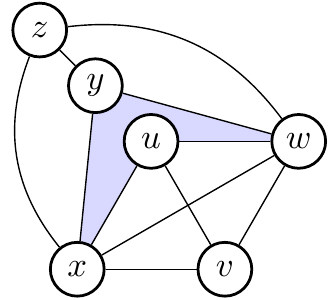}
\phantomsubcaption\label{fig:proof-kiteflip-after}
&
\includegraphics[scale=.7]{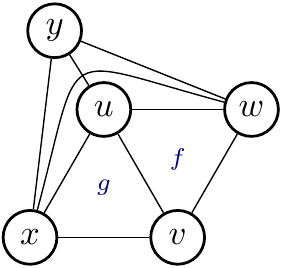}
\phantomsubcaption\label{fig:proof-kiteflip-adj}
&
\includegraphics[scale=.7]{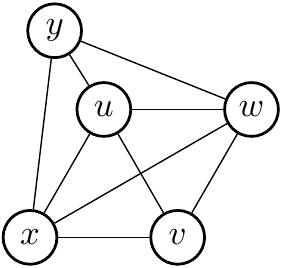}
\phantomsubcaption\label{fig:proof-kiteflip-adj-after}
\\
};
\node[fit=(m-1-1) (m-1-2) (m-1-3) (m-1-4)] (row1) {};
\path (m-1-1.north west) |- node[anchor=north east,inner sep=0pt] {(\subref*{fig:proof-kiteflip})} (row1.north east);
\path (m-1-2.north west) |- node[anchor=north east,inner sep=0pt] {(\subref*{fig:proof-kiteflip-after})} (row1.north east);
\path (m-1-3.north west) |- node[anchor=north east,inner sep=0pt] {(\subref*{fig:proof-kiteflip-adj})} (row1.north east);
\path (m-1-4.north west) |- node[anchor=north east,inner sep=0pt] {(\subref*{fig:proof-kiteflip-adj-after})} (row1.north east);
\end{tikzpicture}
\caption{%
Proof of \Lem{kite-flip}:
If the \protect\TriangleFace{}-nodes representing $f$ and $g$ are \Free{}~(\subref{fig:proof-kiteflip}),
the embedding obtained by a kite flip~(\subref{fig:proof-kiteflip-after}) is \NIC{}-planar.
The shaded region necessarily contains further vertices and edges.
Likewise, if the \protect\TriangleFace{}-nodes representing $f$ and $g$ are adjacent to a single,
common \protect\KiteFace{}-node~(\subref{fig:proof-kiteflip-adj}),
the embedding obtained by a kite flip~(\subref{fig:proof-kiteflip-adj-after}) is \NIC{}-planar.
}%
\FigLabel{fig:proof-kiteflips}
\end{figure}
The requirements of \Lem{dual-adjacencies} characterize a
\NIC{}-planar embedding $\Embedding{G}$ of a maximal \NIC{}-planar
graph $G$. They guarantee that $\Embedding{G}$ is maximal, but the
graph $G$ may still have another, non-maximal \NIC{}-planar embedding,
as \Fig{nic-maximality} shows.
Note that the second embedding emerges from a kite flip between the
\KiteFace{}-node representing the kite embedding of the $K_4$ induced by $a$,
$b$, and both red vertices and the two adjacent \TriangleFace{}-nodes
representing the \Trivial{} triangle faces that $e$, one red vertex and either
$b$ or the second red vertex are incident to.
\begin{figure}[tb]
  \centering
  \begin{tikzpicture}
    \node[inner sep=0pt, draw=none] (f1) {
      \includegraphics[scale=0.4]{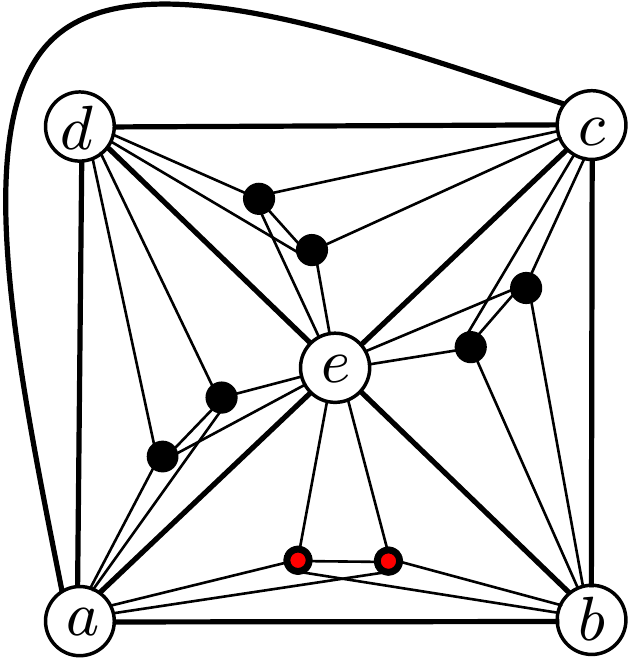}%
      \phantomsubcaption\FigLabel{nm-1}};
    \node[anchor=north east] at (f1.north west) {(\subref*{fig:nm-1})};
    \node[inner sep=0pt, draw=none, anchor=north west] (f2)
        at ($(f1.north east) + (1.5cm,0)$){
      \includegraphics[scale=0.4]{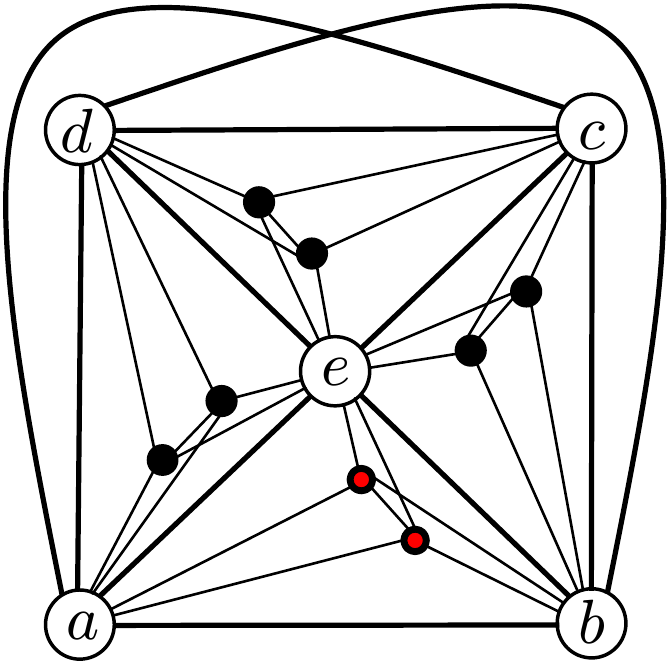}%
      \phantomsubcaption\FigLabel{nm-2}};
    \node[anchor=north east] at (f2.north west) {(\subref*{fig:nm-2})};
  \end{tikzpicture}
  \caption{The embedding in (\subref{fig:nm-1}) is maximal \NIC{}-planar,
     however, re-embedding the $K_5$ subgraph with the vertices $a, b, e$
     admits the addition of the edge $\Edge{b}{d}$ in the outer face, which
     yields the maximal \NIC{}-planar graph in (\subref{fig:nm-2}).}%
  \label{fig:nic-maximality}
\end{figure}
\begin{conjecture}\ConjLabel{characterization}
If $G$ is a graph with a maximal \NIC{}-planar embedding $\Embedding{G}$
that complies with \Lem{dual-adjacencies} and every sequence of
kite flips of a pair of adjacent \TriangleFace{}-nodes that are
either \Free{} or adjacent to a single, common \KiteFace{}-node yields
in turn a maximal \NIC{}-planar embedding, then $G$ is maximal \NIC{}-planar.
\end{conjecture}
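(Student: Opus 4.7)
The plan is to attempt a proof by contrapositive. Suppose $G$ is not maximal \NIC{}-planar; then there exist a non-edge $e$ of $G$ and some \NIC{}-planar embedding $\EmbeddingOther{G}$ of $G$ such that $e$ can be added to $\EmbeddingOther{G}$ without violating \NIC{}-planarity. In particular, $\EmbeddingOther{G}$ is not a maximal \NIC{}-planar embedding of $G$. To derive a contradiction to the hypothesis, it would suffice to show that $\EmbeddingOther{G}$ is reachable from $\Embedding{G}$ by a sequence of kite flips of the restricted type (pairs of adjacent \TriangleFace{}-nodes that are either \Free{} or adjacent to a single common \KiteFace{}-node), since the hypothesis then forces $\EmbeddingOther{G}$ to be maximal.

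The crux of the proof would thus be a \emph{flip connectivity} claim: any two \NIC{}-planar embeddings of $G$ are connected by a sequence of restricted kite flips. I would approach this by comparing the generalized dual graphs $\Dual{G}$ and $\DualOther{G}$ of $\Embedding{G}$ and $\EmbeddingOther{G}$. By \Lem{dual-triconnected}, both are triconnected and planar, so the underlying combinatorial planar structure is rigid; the embeddings of $G$ can therefore differ only in which $K_4$-subgraphs of $G$ are drawn as kites and which as (non-simple) tetrahedra, i.e., in the placement of \KiteFace{}-nodes versus pairs of \Tetrahedral{} \TriangleFace{}-nodes in the dual. Each such disagreement is precisely what a kite flip inverts, and by \Lem{kite-flip}, every restricted flip in a maximal \NIC{}-planar embedding of a maximal \NIC{}-planar graph preserves maximality. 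One would then aim to induct on the number of $K_4$-subgraphs embedded differently in $\Embedding{G}$ and $\EmbeddingOther{G}$, exhibiting at each step a disagreement whose flip is permissible under the restriction.

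The main obstacle is exactly this restriction: a pair of adjacent \TriangleFace{}-nodes is flippable only when it is \Free{} or adjacent to a single common \KiteFace{}-node, and an arbitrary pair of differing $K_4$-configurations need not be of this form. A careful case analysis based on the structural constraints imposed by \Lem{dual-adjacencies}~(\ref{itm:dual-adj-kite})--(\ref{itm:dual-adj-triangles-free}) would likely be needed to show that one can always first rearrange neighboring configurations via further restricted flips so that the intended flip becomes permissible. Should such a rearrangement fail in some configuration, the conjecture would either require strengthening its set of permissible operations---for example, allowing local reembeddings of larger subgraphs such as the $K_5$ highlighted in \Fig{nic-maximality}---or would need a proof strategy that sidesteps explicit flip connectivity, such as a direct local argument that manipulates a hypothetical added edge $e$ into a position already forbidden by the maximality of $\Embedding{G}$ together with \Lem{dual-adjacencies}.
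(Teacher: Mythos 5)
The first thing to note is that the paper does not prove this statement: it is stated as \Conj{characterization} and explicitly left open, so there is no proof of record to compare yours against. Judged on its own, your proposal is not a proof but a reduction of the conjecture to an unproven claim that is essentially as strong as the conjecture itself. The contrapositive skeleton is sound: if some \NIC{}-planar embedding $\EmbeddingOther{G}$ of $G$ admits a new edge and $\EmbeddingOther{G}$ were reachable from $\Embedding{G}$ by a sequence of restricted kite flips, the hypothesis would force $\EmbeddingOther{G}$ to be maximal, a contradiction. But everything then hinges on the ``flip connectivity'' claim, which you do not establish; your closing paragraph concedes that the required case analysis ``would likely be needed'' and might fail, in which case you propose weakening or changing the statement. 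That is a research plan with the hard part left open, not a proof.

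There is also a concrete technical obstruction to the route you sketch. The entire machinery of \Sect{dual} --- the definition of the generalized dual, \Lem{kite-or-triangle}, \Lem{dual-triconnected}, \Lem{dual-adjacencies}, and \Lem{kite-flip} --- is proved only for a \emph{maximal} \NIC{}-planar graph $G$, which is exactly what your contrapositive assumes $G$ is not. Without \Lem{kite-or-triangle} the witness embedding $\EmbeddingOther{G}$ need not be triangulated, so the generalized dual $\DualOther{G}$ need not even be well defined, let alone triconnected, and you cannot invoke \Lem{dual-triconnected} for it. Likewise, the claim that $\Embedding{G}$ and $\EmbeddingOther{G}$ ``differ only in which $K_4$ subgraphs are drawn as kites'' is unjustified: two \One{}-planar embeddings of the same graph may have different planar skeletons and rotation systems, and the uniqueness of the planar embedding of a single triconnected dual says nothing about how two \emph{distinct} duals of the same primal graph relate. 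The example of \Fig{nm-1} and \Fig{nm-2} exhibits precisely the phenomenon that must be controlled --- a maximal embedding coexisting with a non-maximal one related by a single kite flip --- and the conjecture's hypothesis is tailored to exclude exactly that case; whether \emph{every} embedding witnessing non-maximality is reachable by the restricted flips is the open content of the conjecture, and nothing in your outline advances it.
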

Even though the generalized dual has a unique planar embedding, the previous
results already demonstrated that this does not equally apply to maximal
\NIC{}-planar graphs in general.
To obtain a generalized dual, an embedding is needed, which in particular also
determines the pairs of crossing edges.
In fact, even a maximal \NIC{}-planar graph may admit many embeddings.
To begin with, consider the complete graph $K_5$. It has one \One{}-planar
embedding up to isomorphism~\cite{Kyncl-09} and admits three
\One{}-planar and even \IC{}-planar embeddings if the outer face is
fixed, see \Fig{K5fixedface}, which each uses one outer edge in a kite.
\begin{figure}[tb]
   \centering
   \includegraphics[scale=0.27]{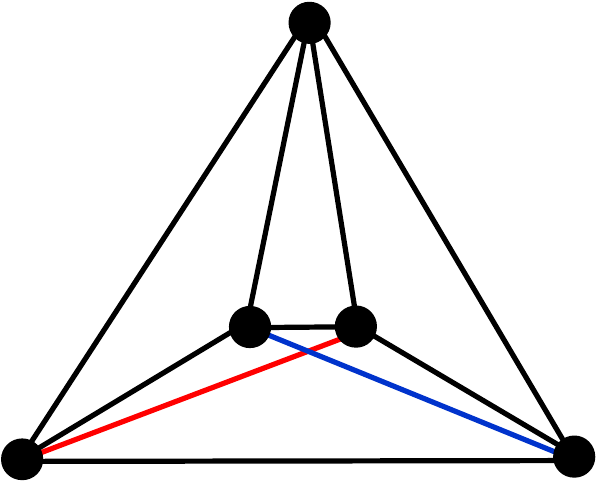}
   \quad
   \includegraphics[scale=0.27]{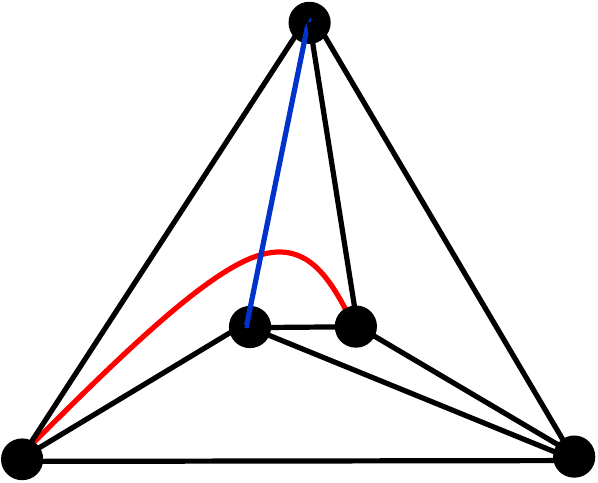}
   \quad
   \includegraphics[scale=0.27]{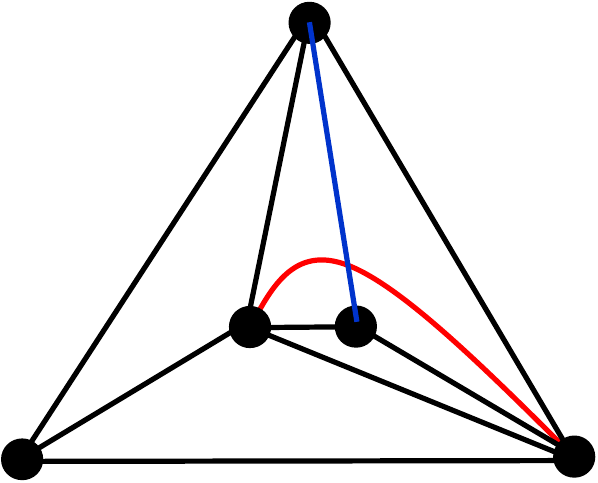}
   \caption{Three embeddings of $K_5$ with a fixed outer face. Each kite includes the edge
   between the inner vertices and one of the outer edges.}%
   \label{fig:K5fixedface}
\end{figure}
Next, we want to study common \One{}-planar embeddings of two $K_5$ graphs.
In general, two subgraphs $H$ and $H'$ are said to be \emph{$k$-sharing}
if they have at least $k$ common vertices.
They \emph{share a crossing}
in an embedding $\Embedding{G}$ of their common supergraph $G$
if there are edges $e$ of $H$ and $e'$ of $H'$ that cross
in $\Embedding{G}$.
\begin{figure}
\centering
\begin{tikzpicture}[node distance=.3cm]
\matrix[matrix of nodes, column sep=.2cm,row sep=0.3cm,nodes={anchor=center}] (m) {%
\includegraphics[scale=.55]{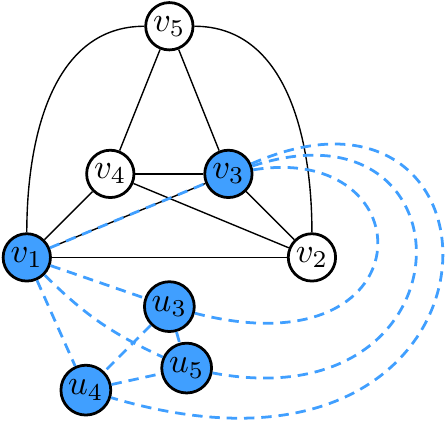}
\phantomsubcaption\label{fig:double-k5-xing-edge}
&
\includegraphics[scale=.55]{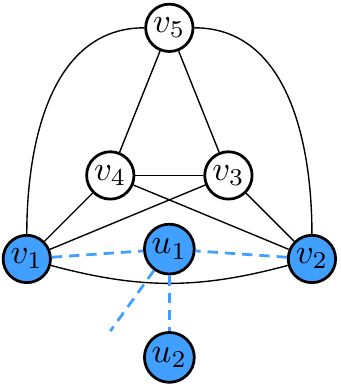}
\phantomsubcaption\label{fig:double-k5-kite-edge}
&
\includegraphics[scale=.55]{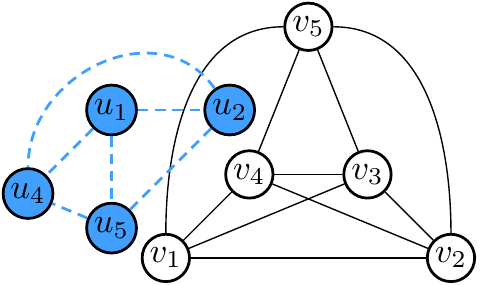}
\phantomsubcaption\label{fig:double-k5-plane-edge}
&
\includegraphics[scale=.55]{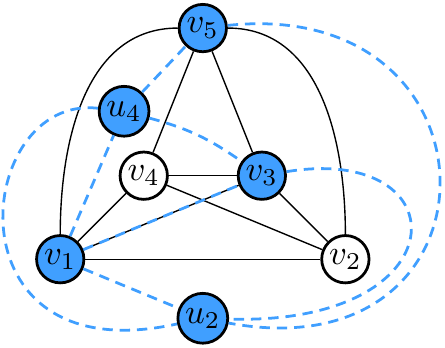}
\phantomsubcaption\label{fig:double-k5-3sharing}
\\
};
\node[fit=(m-1-1) (m-1-2) (m-1-3) (m-1-4)] (row1) {};

\path (m-1-1.north west) |- node[anchor=north west] {(\subref*{fig:double-k5-xing-edge})} (row1.north east);
\path (m-1-2.north west) |- node[anchor=north west] {(\subref*{fig:double-k5-kite-edge})} (row1.north east);
\path (m-1-3.north west) |- node[anchor=north west] {(\subref*{fig:double-k5-plane-edge})} (row1.north east);
\path (m-1-4.north west) |- node[anchor=north west] {(\subref*{fig:double-k5-3sharing})} (row1.north east);
\end{tikzpicture}
\caption{%
Two $K_5$s $\Kfive$ and $\Kfive'$ with a common edge that is crossed by
another edge of $\Kfive$~(\subref{fig:double-k5-xing-edge}),
an edge of $\Kfive'$ crosses a planar kite edge of $\Kfive$~(\subref{fig:double-k5-kite-edge}),
an edge of $\Kfive'$ crosses a planar non-kite edge of $\Kfive$~(\subref{fig:double-k5-plane-edge}),
and
two $3$-sharing $K_5$ subgraphs with a common \One{}-planar embedding~(\subref{fig:double-k5-3sharing}).
Black vertices and solid edges are those of $\Kfive$, colored vertices
and dashed edges (also) belong to $\Kfive'$.}%
\label{fig:double-k5}
\end{figure}
\begin{lemma}\label{lem:3sharing}
If two $K_5$ subgraphs $\Kfive$ and $\Kfive'$ of $G$ share a crossing
in a \One{}-planar embedding $\Embedding{G}$, then they are 3-sharing
and this bound is tight.
\end{lemma}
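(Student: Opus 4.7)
The plan rests on the fact that $K_5$ admits a unique 1-planar embedding up to isomorphism~\cite{Kyncl-09}: a kite on the four endpoints of its single crossing pair, together with a fifth vertex drawn externally and joined to the four kite corners by planar spokes. Consequently, restricting $\Embedding{G}$ to each of $\Kfive$ and $\Kfive'$ yields a well-defined crossing pair $C_\Kfive$ (resp.\ $C_{\Kfive'}$), and by the 1-planarity of $\Embedding{G}$ the shared-crossing edges $e \in \Kfive$ and $e' \in \Kfive'$ are distinct and are each other's unique crosser. I then branch on where $e, e'$ sit with respect to $\Kfive \cap \Kfive'$.

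If $e, e' \in \Kfive \cap \Kfive'$, then $(e, e')$ coincides with both $C_\Kfive$ and $C_{\Kfive'}$, so all four kite corners lie in $V(\Kfive) \cap V(\Kfive')$ and we obtain 4-sharing immediately. If only one of them, say $e' = uv \in \Kfive \cap \Kfive'$, is shared while $e = wx \in \Kfive \setminus \Kfive'$ (\Fig{double-k5-xing-edge}), then $u, v$ already give two shared vertices, $\{u,v,w,x\}$ are the corners of $\Kfive$'s kite, and let $z$ denote the fifth vertex of $\Kfive$. Assume for contradiction that $V(\Kfive') = \{u,v,p,q,r\}$ with $\{p,q,r\} \cap V(\Kfive) = \emptyset$. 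Since $e' = uv$ is planar in $\Kfive'$'s embedding, either both $u, v$ are kite corners of $\Kfive'$ (with $uv$ a 4-cycle edge of its kite) or one of $u, v$ is $\Kfive'$'s external vertex. In each sub-case I route the 4-cycle of $\Kfive'$'s kite together with the spokes of its external vertex through the faces of $\Kfive$'s drawing; since $uv$ and $wx$ are already saturated by their mutual crossing and every remaining edge of $\Kfive$ admits at most one further crossing, a face-and-crossing count over the eight triangular faces of $\Kfive$'s drawing shows that no placement of $p, q, r$ permits the nine $\Kfive'$-edges incident to $\{u,v,p,q,r\}$ to be routed within the crossing budget, yielding the contradiction.

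If neither $e$ nor $e'$ lies in the intersection (\FigAnd{double-k5-kite-edge}{double-k5-plane-edge}), then $e$ is a planar edge of $\Kfive$ (either a 4-cycle edge or a spoke) and symmetrically for $e'$; an analogous analysis, splitting on whether $e$ is a 4-cycle edge or a spoke, forces at least one additional common vertex beyond the at most two contributed by the endpoints of $e$ and $e'$. Tightness is witnessed by \Fig{double-k5-3sharing}, which exhibits two $K_5$ subgraphs with $|V(\Kfive) \cap V(\Kfive')| = 3$ sharing a crossing in a common 1-planar embedding. The main obstacle is the face-and-crossing bookkeeping in the 2-sharing sub-cases: one must enumerate over all placements of the three new vertices in the eight faces of $\Kfive$'s drawing and all choices of $\Kfive'$'s external vertex, and verify in every configuration that some $\Kfive'$-edge is forced to cross a saturated edge or two distinct edges of $\Kfive$, contradicting 1-planarity.
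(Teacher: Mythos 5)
Your setup is sound and matches the paper's: you invoke the unique 1-planar embedding of $K_5$, note that the two shared-crossing edges are each other's unique crossers, and split into cases according to how the crossing pair meets $\Kfive\cap\Kfive'$ (your first case, where both crossing edges are common to both $K_5$s, correctly gives 4-sharing, and your second case coincides with the paper's first case). The genuine gap is that the entire mathematical content of the lemma is deferred rather than proved. In your second case you assert that a ``face-and-crossing count over the eight triangular faces of $\Kfive$'s drawing shows that no placement of $p,q,r$ permits the nine $\Kfive'$-edges to be routed within the crossing budget,'' and you concede at the end that this enumeration over all placements and all choices of $\Kfive'$'s external vertex still has to be carried out. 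That enumeration \emph{is} the proof. The paper replaces it with two short structural arguments: a vertex of $\Kfive'$ outside $\Kfive$ cannot lie in a kite face of $\Kfive$, because that face is bounded by two already-crossed segments plus one planar kite edge and so admits at most one exiting edge, whereas such a vertex needs at least three edges leaving the face to reach the rest of $\Kfive'$; hence all three extra vertices lie in trivial faces incident to $v_5$, must in fact all lie in the same such face, and there the triangle they induce together with the spokes of $v_5$ forces a doubly crossed edge. Nothing equivalent appears in your write-up.

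Your third case is also too loose and, in one sub-case, aims at the wrong conclusion. When the crossed edge of $\Kfive$ is a planar kite edge, the correct statement (the paper's second case) is that the configuration is outright impossible: the endpoint of the $\Kfive'$-edge that is trapped inside the non-trivial kite face can be adjacent to at most three vertices ($u_2$ via the single permitted crossing, plus the two kite corners on the face boundary), so no $K_5$ through it exists at all -- this is not a matter of ``forcing an additional common vertex.'' When the crossed edge is a spoke of $v_5$, the contradiction is that the two endpoints of the crossing $\Kfive'$-edge lie in different trivial faces, while every triangle of $\Kfive'$ containing both would have to sit in a single face by the earlier argument. Your summary that the analysis ``forces at least one additional common vertex beyond the at most two contributed by the endpoints of $e$ and $e'$'' does not even arithmetically yield 3-sharing (at most two plus at least one more need not be at least three), and in this case the endpoints of $e$ and $e'$ need not contribute any common vertices to begin with. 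The tightness half, citing the 3-sharing example of \Fig{double-k5-3sharing}, is fine. To turn the proposal into a proof you must actually supply the per-face routing arguments you postponed.
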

\begin{proof}
Let $\Kfive = G[v_1,v_2,v_3,v_4,v_5]$
and $\Kfive' = G[u_1,u_2,u_3,u_4,u_5]$
and consider the embeddings $\Embedding{\Kfive}$ and $\Embedding{\Kfive'}$
inherited from $\Embedding{G}$.
Both $\Embedding{\Kfive}$ and $\Embedding{\Kfive'}$ are unique up
to isomorphism, as depicted in \Fig{K5fixedface}.
\Wilog, assume that $\Edge{v_1}{v_3}$ crosses $\Edge{v_2}{v_4}$ in
$\Embedding{\Kfive}$, \ie, $\Kfour = G[v_1,v_2,v_3,v_4]$ is embedded as a kite.

Suppose that one of $\Edge{v_1}{v_3}$ and $\Edge{v_2}{v_4}$
is also an edge of $\Kfive'$ (see~\Fig{double-k5-xing-edge}).
This in particular includes the cases that the shared crossing involves one of
$\Edge{v_1}{v_3}$ or $\Edge{v_2}{v_4}$ and that the edge of $\Kfive'$ that
crosses an edge of $\Kfive$ is part of $\Kfive$.
\Wilog, assume that $u_1 = v_1$ and $u_2 = v_3$.
If $\Kfive$ and $\Kfive'$ are at most $2$-sharing, there must be three
further vertices $u_3, u_4, u_5$ of $\Kfive'$ that are not part of $\Kfive$.
Note that $u_3, u_4, u_5$ form a $K_3$ $\Triangle$.
As neither of $\Edge{v_1}{v_3}$ and $\Edge{v_2}{v_4}$ may be crossed
a second time and $\Edge{v_1}{v_2}$, $\Edge{v_2}{v_3}$, $\Edge{v_3}{v_4}$,
$\Edge{v_4}{v_1}$ may be crossed at most once, none of
$\Triangle$'s vertices can be incident to a face of $\Kfour$.
Hence, they must be incident to one or more faces that are also incident to
$v_5$, \ie, one of the \Trivial{} triangle faces of $\Embedding{\Kfive}$.
Due to \One{}-planarity and $v_5$ having vertex degree four, $\Triangle$'s
vertices must all reside in the same face.
Then, however, either one of $\Triangle$'s is crossed twice by
edges incident to $v_5$ or vice versa, a contradiction to
the \One{}-planarity of $\Embedding{G}$.

Suppose that a planar edge of $\Embedding{\Kfour}$, \wilog, $\Edge{v_1}{v_2}$,
is crossed by an edge $\Edge{u_1}{u_2}$ of $\Kfive'$, where, again \wilog,
$u_1$ is the vertex in one of the non-\Trivial{} triangle faces
of $\Kfour$ (see~\Fig{double-k5-kite-edge}).
Then, besides $u_2$, $u_1$ can be adjacent to at most $v_1$ and $v_2$,
but not to any fifth vertex of $\Kfive'$, irrespective of whether this
vertex is also in $\Kfive$ or not.

Finally, suppose that one of the edges incident to $v_5$, \wilog, $\Edge{v_1}{v_5}$
is crossed by an edge $\Edge{u_1}{u_2}$ of $\Kfive'$.
This situation is depicted in \Fig{double-k5-plane-edge}.
Note that the case where $\Edge{u_1}{u_2}$ is also an edge of $\Kfive$ has
already been considered above.
Hence, we can assume \wilog, that $u_1$ is not a vertex of $\Kfive$.
If $\Kfive$ and $\Kfive'$ are at most $2$-sharing, there must be again
at least two further vertices $u_4$ and $u_5$ that are adjacent to
each other as well as to $u_1$ and $u_2$.
In particular, every triple of $u_1$, $u_2$, $u_4$, and $u_5$ forms a $K_3$.
By the above argument, if the vertices of such a $K_3$ are incident to a
\Trivial{} triangle face of $\Embedding{\Kfive}$, they must all reside in the
same face.
As $\Edge{u_1}{u_2}$ crosses $\Edge{v_1}{v_5}$, this does not apply
to the two $K_3$s containing both $u_1$ and $u_2$, a contradiction.
Note that our argumentation does not exclude the possibility that
$u_2 = v_4$ or that $v_1$ and/or $v_5$ are also vertices of $\Kfive'$.
However, if $\Kfive'$ contains $v_4$, $v_1$, and $v_5$, then $\Kfive$ and
$\Kfive'$ are $3$-sharing.

\Fig{double-k5-3sharing} shows that if $\Kfive$ and $\Kfive'$ are $3$-sharing,
they may share a crossing in $\Embedding{G}$, which testifies that this bound
is tight.
\end{proof}
Note that the converse of \Lem{3sharing} is not true. Any trivial triangle
of an embedded $K_5$ can contain two additional vertices such that these
plus the three triangle vertices form a second $K_5$.
We will now use this result to prove an exponential number of embeddings.
\begin{lemma}
\LemLabel{nestedtriangles}
There are maximal \NIC{}-planar graphs with an exponential number of
\NIC{}-planar embeddings.
\end{lemma}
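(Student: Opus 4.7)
The idea is to nest $K_5$ subgraphs and exploit the fact, visible in \Fig{K5fixedface}, that a $K_5$ admits three \NIC{}-planar embeddings with a fixed triangular outer face. I would construct, for every $k \geq 1$, a graph $G_k$ on $n = 5 + 2k$ vertices as follows. Start with $H_0 = K_5$ together with a chosen \NIC{}-planar embedding. For $i = 1, \ldots, k$, pick any trivial triangle face $T_{i-1}$ of the current embedding and add two new vertices inside it, joined to each other and to the three vertices of $T_{i-1}$; these two new vertices together with $T_{i-1}$ induce a new $K_5$ subgraph $H_i$ whose outer face is $T_{i-1}$. The remark following \Lem{3sharing} justifies that each such insertion preserves \NIC{}-planarity. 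Let $G_k$ denote the resulting graph.

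Next I would count embeddings. For every $i \geq 1$, the subgraph $H_i$ sits inside the face $T_{i-1}$ of the surrounding embedding, so the local embedding of $H_i$ can be chosen independently from the other $H_j$. By \Fig{K5fixedface}, each $H_i$ has three \NIC{}-planar embeddings with outer face $T_{i-1}$, and different choices produce different pairs of crossing edges inside $T_{i-1}$ and therefore distinct embeddings of $G_k$. This yields at least $3^k$ distinct \NIC{}-planar embeddings, which is exponential in $n$.

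Finally I would verify that $G_k$ is maximal \NIC{}-planar. Every non-adjacent pair $\{u,v\}$ in $G_k$ has its two endpoints separated by some nested triangle $T_\ell$, so the edge $\Edge{u}{v}$ in any purported \NIC{}-planar embedding of $G_k + \Edge{u}{v}$ must cross a boundary edge $\Edge{x}{y}$ of $T_\ell$. By \Lem{crossing-implies-kite}, the new crossing would induce a kite on $\{u,v,x,y\}$, but the vertices $x$ and $y$ are already incident to the crossing edges of the $K_5$ nested inside $T_\ell$; the new crossing pair would therefore share both $x$ and $y$ with an existing one, contradicting \NIC{}-planarity.

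The main obstacle is this last step: to rule out \emph{all} \NIC{}-planar embeddings of $G_k + \Edge{u}{v}$, one must use the uniqueness of the \One{}-planar embedding of $K_5$ up to isomorphism together with the $3$-sharing bound of \Lem{3sharing}, so that no re-embedding of the nested $K_5$ subgraphs can evade the obstruction above. A case distinction based on the relative positions of $u$ and $v$ in the nesting hierarchy then completes the argument.
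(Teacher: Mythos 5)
Your construction differs from the paper's in a way that breaks the argument. The paper builds a triconnected planar nested-triangle scaffold $T_k$ and places a $K_5$ in \emph{every} triangle, so that any two $K_5$s are only $2$-sharing; \Lem{3sharing} then forces all shared edges to be planar in every embedding, the scaffold is rigid, and the only remaining freedom is which $K_4$ of each $K_5$ becomes the kite, with the layer edges never in a kite -- this is what makes the per-layer choices genuinely independent. You instead nest each new $K_5$ $H_i$ so that it is $3$-sharing with its parent (it shares the whole triangle $T_{i-1}$), which is exactly the configuration where \Lem{3sharing} gives no information, and the independence you claim fails. Concretely: every trivial triangle of an embedded $K_5$ contains two of the four endpoints of its crossing pair, so of the three candidate kites of $H_i$ (each on the two inner vertices plus two vertices of $T_{i-1}$), one always shares two vertices with $H_{i-1}$'s crossing pair and is forbidden by \NIC{}-planarity; which two survive depends on $H_{i-1}$'s choice. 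Worse, the inner vertices of $H_i$ are adjacent only to $T_{i-1}$ and each other, so they must sit in a face spanned by $T_{i-1}$, and whether $T_{i-1}$ bounds such a face depends on which of the three embeddings $H_{i-1}$ uses; tracing the first two levels shows the admissible choices at consecutive levels essentially determine one another, so it is not established that your $G_k$ has more than polynomially many embeddings, let alone $3^k$.

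The maximality step also has a genuine hole. Your obstruction argument assumes that a new edge crossing a boundary edge $\Edge{x}{y}$ of some $T_\ell$ creates a crossing pair sharing \emph{both} $x$ and $y$ with an existing one. But \NIC{}-planarity only forbids two crossing \emph{pairs} sharing two vertices, and in your chain many edges of the $T_\ell$'s have their two endpoints lying in different crossing pairs (e.g.\ one endpoint in $H_{\ell-1}$'s kite and the other in $H_\ell$'s), so the new pair shares only one vertex with each and no contradiction arises. (You also invoke \Lem{crossing-implies-kite} for $G_k+\Edge{u}{v}$, but that lemma presupposes maximality and cannot be applied to the augmented graph.) For $k=1$ maximality happens to follow from the density bound for $n=7$, but for larger $k$ your graphs have growing slack below $\frac{18}{5}(n-2)$ and the claim needs a real proof. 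Finally, you do not address embeddings that coincide up to isomorphism, which the paper handles by pinning down the first two layers.
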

\begin{proof}
  We construct graphs $G_k$ in two stages. First, consider the nested
  triangle graph $T_k$~\cite{dlt-pepg-84, dett-gdavg-99} with vertices $u_i,
  v_i, w_i$ at the $i$-th layer for $i = 1, \ldots, k$ in counter-clockwise
  order. There are \emph{layer edges} $\Edge{u_i}{v_i}, \Edge{v_i}{w_i},
  \Edge{w_i}{u_i}$ for $i=1,\ldots,k$ and \emph{intra-layer edges}
  $\Edge{u_i}{u_{i-1}}, \Edge{v_i}{v_{i-1}}, \Edge{w_i}{w_{i-1}}$ and
  $\Edge{u_i}{w_{i-1}}$, $\Edge{v_i}{u_{i-1}}$, $\Edge{w_i}{v_{i-1}}$ for $1
  \leq i < k$, which triangulate $T_k$ by planar edges.

  For every triangle $t$ with vertices $u, v, w$ we insert two vertices $a$
  and $b$ and add edges $\Edge{u}{a}, \Edge{u}{b}, \Edge{v}{a}, \Edge{v}{b},
  \Edge{w}{a}, \Edge{w}{b}$ such that these five vertices together induce $K_5$.
	\Fig{one-layer} shows one layer of $G_k$.

  \begin{figure}[tb]
    \centering
    \includegraphics[scale=0.45]{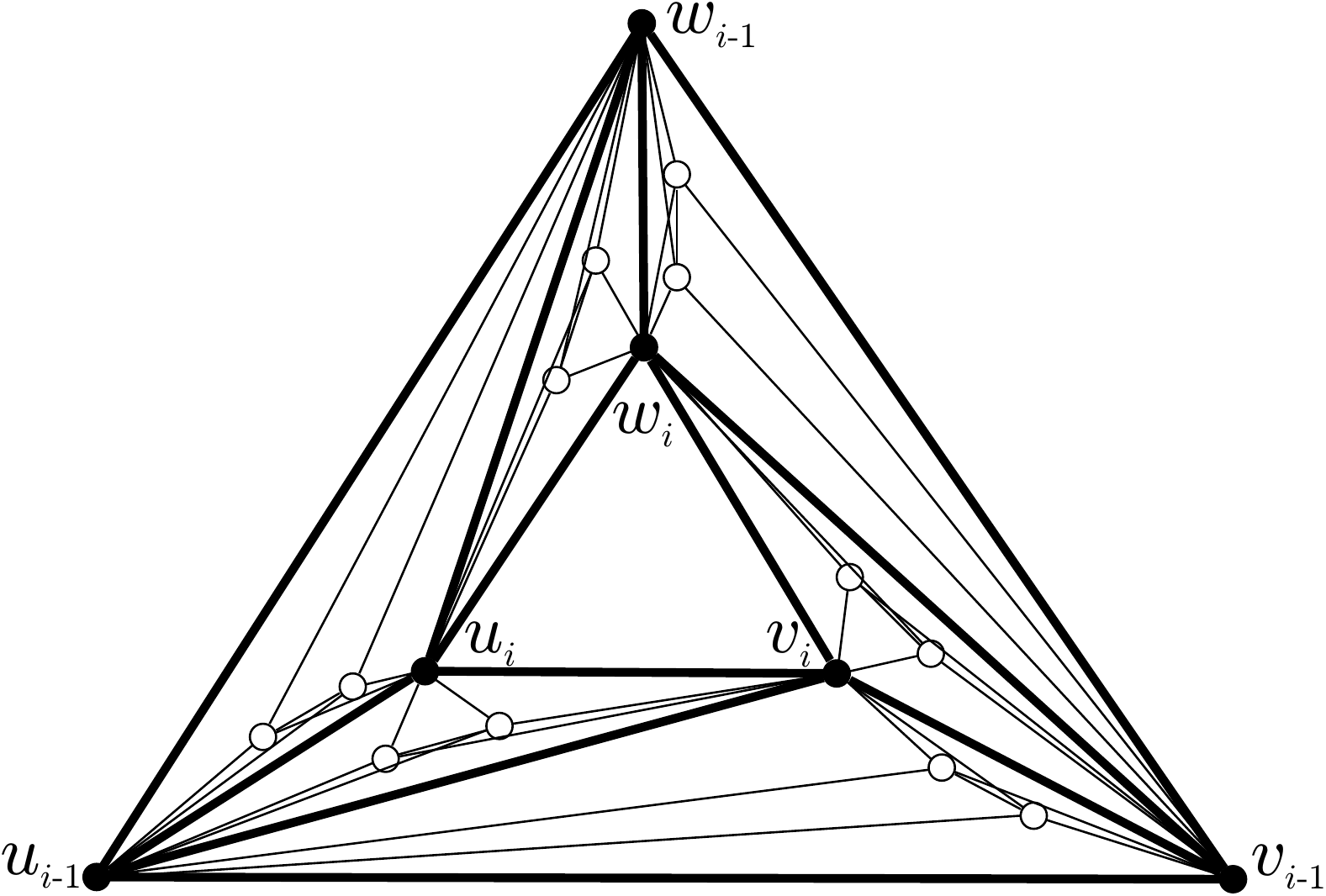}
    \caption{The layer $i$ of $G_k$.}%
    \label{fig:one-layer}
  \end{figure}

	As every pair of $K_5$s in $G_k$ is 2-sharing and every edge is part of at
	least one $K_5$, the edges shared by two $K_5$s are edges of $T_k$ and
	embedded planar in $\Embedding{G_k}$ by \Lem{3sharing}.
  Since $G_k[T_k]$ is triconnected it has a unique planar embedding. Every
  triangle $t$ of $\Embedding{T_k}$ includes two vertices such that there
  is a $K_5$. No further edge can be added without violating maximal
  \NIC{}-planarity. Hence, $G_k$ is maximal \NIC{}-planar.

  For each layer, there are at least two \NIC{}-planar embeddings, one with the
  edges $\Edge{u_i}{u_{i-1}}, \Edge{v_i}{v_{i-1}}, \Edge{w_i}{w_{i-1}}$ in
  kites and one with the edges $\Edge{u_i}{w_{i-1}}$, $\Edge{v_i}{u_{i-1}},
  \Edge{w_i}{v_{i-1}}$. In either case, the layer edges are not part of a
  kite so that the embeddings in the layers are independent.
  We thereby obtain at least $2^{k - 1}$ \NIC{}-planar embeddings. In order to
  distinguish identical embeddings up to a rotational symmetry, we fix the
  embedding of the $K_4$ in the first two layers so that at least $2^{k -
    3}$ different embeddings remain for graphs of size $15k - 12$.
  \Qed
\end{proof}
Planar embeddings have been generalized to maps~\cite{cgp-mg-02} so
that there is an adjacency between faces if their boundaries
intersect.
Also the intersection in a point suffices for an adjacency.
There is a $k$-point $p$ if $k$ faces meet at $p$.
 A hole-free map
$\mathcal{M}$ defines a hole-free graph $G$ so that the faces of
$\mathcal{M}$ correspond to the vertices of $G$ and there is an edge
if and only if the respective regions are adjacent.
(There is a hole if a region is not associated with a vertex).
Obviously,  a $k$-point induces $K_k$ as a subgraph of $G$.
If no more than $k$ regions meet at a point, then $\mathcal{M}$ is a $k$-map
and $G$ is a hole-free $k$-\emph{map graph}.
Chen \ea{}~\cite{cgp-mg-02,cgp-rh4mg-06} observed that a graph is triangulated
\One{}-planar if and only if it is a $3$-connected hole-free $4$-map graph.
For a triangulation it suffices that at least one embedding is triangulated.
For \NIC{}-planar graphs, this implies a characterization with an independent
set.
\begin{corollary}%
\label{cor:triang-vs-4map}
A graph $G$ is triangulated \NIC{}-planar if and only if $G$ is the graph of a
hole-free $4$-map $\mathcal{M}$ such that the $4$-points of $\mathcal{M}$
are an independent set.
\end{corollary}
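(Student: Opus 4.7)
The plan is to reduce the claim to Chen~\ea's characterization of triangulated \One{}-planar graphs as $3$-connected hole-free $4$-map graphs, and then translate the \NIC{} restriction directly into a condition on 4-points. The key auxiliary observation, which holds for any triangulated \One{}-planar embedding $\Embedding{G}$ of $G$ with associated hole-free $4$-map $\mathcal{M}$, is that 4-points of $\mathcal{M}$ are in bijection with kites of $\Embedding{G}$: at every 4-point $p$ the four incident regions correspond, by \Lem{crossing-implies-kite}, to the four vertices of a $K_4$ embedded as a kite with crossing at $p$, and conversely every kite arises this way. Under this bijection, two 4-points $p,q$ share exactly $k$ incident regions of $\mathcal{M}$ if and only if the two corresponding kites share exactly $k$ vertices of $G$.

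For the forward direction I would start from a triangulated \NIC{}-planar graph $G$. Since \NIC{}-planar implies \One{}-planar, Chen~\ea{} supplies a hole-free $4$-map $\mathcal{M}$ having $G$ as its graph. Suppose two 4-points $p,q$ of $\mathcal{M}$ were adjacent in the map, that is, endpoints of a common boundary arc; equivalently, they would share two incident regions. By the bijection, the two kites at $p$ and $q$ would share two vertices of $G$, so the two pairs of crossing edges would share two common endpoints, contradicting the \NIC{} condition. Hence the set of 4-points is independent.

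For the converse, let $\mathcal{M}$ be a hole-free $4$-map whose 4-points are independent, and let $G$ be its graph. Chen~\ea{} gives that $G$ is triangulated \One{}-planar in an embedding whose crossings lie precisely at the 4-points. By the bijection, independence of the 4-points says that no two kites share two vertices, i.e., no two pairs of crossing edges share two common endpoints, which is exactly \NIC{}-planarity. The main obstacle I anticipate is pinning down the intended reading of ``independent 4-points'' — namely, that no two 4-points lie on a common edge of the $1$-skeleton of $\mathcal{M}$, or equivalently that no two 4-points are incident to two common regions — and verifying that this reading mirrors the combinatorial \NIC{} restriction exactly through the kite/4-point bijection; once this is settled, both directions follow from Chen~\ea's theorem and \Lem{crossing-implies-kite}.
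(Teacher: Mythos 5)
Your proposal is correct and follows essentially the same route as the paper: both reduce to Chen~et~al.'s correspondence between triangulated \One{}-planar embeddings and hole-free $4$-maps, identify $4$-points one-to-one with kites, and observe that two $4$-points are adjacent (share two incident regions) exactly when the corresponding kites share an edge, which is precisely the violation of \NIC{}-planarity. Your closing concern about the intended reading of ``independent $4$-points'' is resolved the same way the paper resolves it, namely as no two $4$-points being incident to two common regions.
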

\begin{proof}
Consider an embedding $\Embedding{G}$ of a triangulated
\NIC{}-planar graph $G$ and remove all pairs of crossing edges.
Then, the planar dual of the obtained graph is a map $\mathcal{M}$
such that every $4$-point corresponds one-to-one to a kite in
$\Embedding{G}$. The statement follows from the observation that two
$4$-points are adjacent in $\mathcal{M}$ if and only if two kites in
$\Embedding{G}$ share an edge.
\Qed
\end{proof}

\section{Density}
\SectLabel{density}
For the analysis of densest and sparsest maximal \NIC{}-planar graphs, we
use the \emph{discharging method} that was successfully applied for the
proof of the 4-color theorem \cite{ah-epgfc-77} and improving upper bounds
on the density of quasi-planar graphs \cite{at-mneqp-07}. In this technique
one assigns ``charges'' to the vertices and faces of a planar graph and
computes the total charge after assigning all charges and after a
redistribution (discharging phase).

We assign charges as follows: Consider the \KiteFace{}-nodes and their neighborhood in $\Dual{G}$.
Let $\Embedding{G}$ be an arbitrary \NIC{}-planar embedding of a
maximal \NIC{}-planar graph $G$.
Define the \emph{level} $\Level(\DualNode)$ of a node $\DualNode$ of $\Dual{G}$
as its minimum distance to a \KiteFace{}-node.
Thus, every \KiteFace{}-node has level $0$, and every
\TriangleFace{}- or \TetrahedronFace{}-node that is adjacent to a
\KiteFace{}-node has level $1$.
A node with level $x$ is also called an $\Level_x$-node.
\Lem{dual-adjacencies} immediately implies:
\begin{corollary}
\CorLabel{levels}
Let $\Dual{G}$ be the generalized dual
with respect to a \NIC{}-planar embedding $\Embedding{G}$
of a maximal \NIC{}-planar graph $G$ with $n \geq 5$.
Then, the level of a node is at most $2$,
every \TetrahedronFace{}-node has level $1$ and
can be adjacent only to $\Level_0$- and $\Level_1$-nodes,
and every $\Level_2$-\TriangleFace{}-node %
is adjacent to at most one other $\Level_2$-\TriangleFace{}-node. %
\end{corollary}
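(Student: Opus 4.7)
The plan is to derive all three assertions directly from \Lem{dual-adjacencies}, handling them in the stated order; the only real content is how to read each item of that lemma in terms of levels, and no new combinatorial machinery is required.

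For the claim on \TetrahedronFace{}-nodes, I would just restate items (iii) and (ii) in level language. Item (iii) says that every \TetrahedronFace{}-node is \Kitonic{}, which is exactly the assertion that it has a \KiteFace{}-neighbor and therefore has level $1$. Item (ii) says that its only possible neighbors are \KiteFace{}-nodes or \Kitonic{} \TriangleFace{}-nodes, which together occupy exactly levels $0$ and $1$.

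For the bound ``level at most $2$'', the only nontrivial case is a \Free{} \TriangleFace{}-node $\DualNode$, since \KiteFace{}-nodes have level $0$ and \TetrahedronFace{}-nodes as well as \Kitonic{} \TriangleFace{}-nodes have level $1$. I would first invoke (ii) in its contrapositive form: since $\DualNode$ is not \Kitonic{}, it cannot be adjacent to any \TetrahedronFace{}-node, so all three of its neighbors (it has degree three, as noted before \Lem{dual-adjacencies}) must be \TriangleFace{}-nodes. Applying (v) to $\DualNode$ together with any two of these neighbors, and using that $\DualNode$ itself is \Free{}, at least one of the two selected neighbors is forced to be \Kitonic{}. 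Hence $\DualNode$ has a level-$1$ neighbor and therefore sits at level at most $2$.

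The third claim comes almost for free from the same argument: a level-$2$ \TriangleFace{}-node is \Free{} by definition, and the previous paragraph has shown that among any two of its three \TriangleFace{}-neighbors at most one can be \Free{}, i.e., at most one can itself be a level-$2$ \TriangleFace{}-node. The only subtle point I anticipate is the reading of (v) of \Lem{dual-adjacencies}: the statement literally mentions only ``two'' \TriangleFace{}-nodes, but its proof actually establishes the stronger symmetric form ``at least one of the three involved \TriangleFace{}-nodes is \Kitonic{}''. With that reading, pairing up neighbors two at a time immediately yields both the level bound and the at-most-one bound, and no further case analysis is needed.
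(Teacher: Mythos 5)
Your proposal is correct and matches the paper's intent exactly: the paper states this corollary as an immediate consequence of \Lem{dual-adjacencies} without spelling out the derivation, and your reading of items (ii), (iii), and (v) supplies precisely the missing details. Your observation that (v) must be read in the symmetric form established by its proof (``not all three involved \TriangleFace{}-nodes are \Free{}'') is the right way to handle the one genuinely delicate point.
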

Consider a \KiteFace{}-node $\DualNode$.
Call the set of all nodes whose level equals their distance to $\DualNode$
the \emph{sphere} $\Sphere(\DualNode)$ of $\DualNode$.
Note that the spheres of two \KiteFace{}-nodes need not be disjoint
(see, \eg, node $t_1$ in \Fig{kite-spheres})
and recall that a \KiteFace{}-node has four adjacencies.
If we only consider a single adjacency
and divide the \TriangleFace{}- and \TetrahedronFace{}-nodes of a sphere
equally among all spheres they belong to,
we obtain pairwise disjoint \emph{quarter spheres}.
These quarter spheres are the elements of discharging.
In consequence, we obtain fractions of nodes, \eg,
\TriangleFaceFrac{}{2}- or
\TetrahedronFaceFrac{}{3}-nodes.
\begin{figure}[tb]
\centering
\includegraphics[width=.55\textwidth]{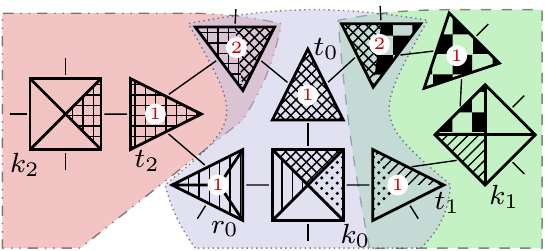}
\caption{%
Clipping of a generalized dual graph, showing parts of the spheres (shaded)
and quarter spheres (fill patterns) of three \protect\KiteFace{}-nodes.
The numbers correspond to the nodes' levels.}
\label{fig:kite-spheres}
\end{figure}
\Fig{kite-spheres} depicts a clipping of the generalized dual of a graph.
It shows three \KiteFace{}-nodes, whose spheres are indicated by the respective
shaded regions.
For the left, center, and right sphere, the patterns highlight one, three, and
two quarter spheres, respectively.
As no two \KiteFace{}-nodes can be adjacent, a quarter sphere can never be
empty, but contains at least a fraction of a \TriangleFace{}- or
\TetrahedronFace{}-node.
We denote the quarter sphere of $\DualNode$ with respect to its neighbor
$\DualNodeOther$ by $\QuarterSphere(\DualNode,\DualNodeOther)$.

\begin{lemma}
\LemLabel{quarter-spheres}
Every quarter sphere %
of a \KiteFace{}-node $\DualNode$
in the generalized dual of a maximal \NIC{}-planar graph with respect to
a \NIC{}-planar embedding consists of at least
either a \TriangleFaceFrac{}{3}- or
a \TetrahedronFaceFrac{}{3}-node.
It can at most be attributed either
two \TriangleFace{}-nodes,
or a \TriangleFace{}-, a \TetrahedronFaceFrac{}{4}-, and a \TriangleFaceFrac{}{4}-node,
or a \TetrahedronFaceFrac{}{2}- and a \TriangleFaceFrac{}{2}-node.
\end{lemma}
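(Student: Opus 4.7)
Our approach is to analyze the quarter sphere $\QuarterSphere(\DualNode, \DualNodeOther)$ by a case analysis on the type and local structure of the level-$1$ neighbor $\DualNodeOther$ of $\DualNode$. By \Cor{levels}, every node in $\DualNode$'s sphere has level at most $2$, so the quarter's content consists of $\DualNodeOther$ itself (a level-$1$ \TriangleFace{}- or \TetrahedronFace{}-node) together with the level-$2$ nodes reachable via $\DualNodeOther$; by \Lem{dual-adjacencies}~(\ref{itm:dual-tetra-marked}), every level-$2$ node is in fact a \TriangleFace{}-node. Since $\DualNodeOther$ has degree $3$ and lies in exactly $k \in \{1,2,3\}$ spheres, one per kite-neighbor, its contribution to this quarter is a $\frac{1}{k}$-fraction of its own type.

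For the lower bound I take the most-shared configuration, $k = 3$: all three of $\DualNodeOther$'s neighbors are kite-nodes, so no level-$2$ node is reached and the quarter contains only $\DualNodeOther$ itself, contributing either $\TriangleFaceFrac{}{3}$ or $\TetrahedronFaceFrac{}{3}$, which matches the claimed minimum.

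For the upper bound I enumerate cases on the type of $\DualNodeOther$ and on $k$, using \Lem{dual-adjacencies}~(\ref{itm:dual-adj-tetra}) to restrict a \TetrahedronFace{}-$\DualNodeOther$ to have only kite-nodes and \Kitonic{} \TriangleFace{}-nodes as neighbors (so its non-kite neighbors are already at level $1$ and provide no level-$2$ content), and \Lem{dual-adjacencies}~(\ref{itm:dual-adj-triangles-free}) to force a \TriangleFace{}-$\DualNodeOther$ with two non-kite triangle neighbors to have at most one of them unmarked. The three maximal patterns I expect are: $\DualNodeOther$ a \TriangleFace{}-node with $k=1$ plus an unshared level-$2$ \TriangleFace{}-neighbor, giving two full triangles; $\DualNodeOther$ a \TetrahedronFace{}-node with $k=2$ together with a level-$2$ triangle shared between two spheres, giving $\TetrahedronFaceFrac{}{2}+\TriangleFaceFrac{}{2}$; and $\DualNodeOther$ a \TriangleFace{}-node with $k=1$ whose level-$2$ triangle neighbor and an adjacent level-$1$ \TetrahedronFace{}-node each lie in two spheres and are reached from $\DualNode$ via two distinct level-$1$ neighbors, so that their contributions split four ways, yielding a full triangle together with $\TetrahedronFaceFrac{}{4}$ and $\TriangleFaceFrac{}{4}$.

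The main obstacle will be the careful fractional accounting: every level-$2$ node has to be divided both among the kite-spheres it belongs to (one share per kite at distance $2$) and among the quarters of $\DualNode$'s sphere from which it is reachable (one share per level-$1$ neighbor of $\DualNode$ it is adjacent to). Verifying that no configuration exceeds the three listed maxima then reduces to arguing that any heavier quarter would force either a third non-kite neighbor of $\DualNodeOther$ of an impossible level (contradicting \Lem{dual-adjacencies}~(\ref{itm:dual-adj-triangles-free}) or~(\ref{itm:dual-tetra-marked})) or a sharing pattern ruled out by the triconnectivity of $\Dual{G}$ established in \Lem{dual-triconnected}.
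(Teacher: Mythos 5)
Your overall skeleton --- a case analysis on the type of the level-$1$ neighbour $\DualNodeOther$ and on the number $k$ of its \KiteFace{}-neighbours, driven by \Cor{levels} and \Lem{dual-adjacencies} --- matches the paper's, and your lower-bound argument ($k=3$ forces a $\frac{1}{3}$-fraction) is exactly the paper's. The gap is in the upper bound, and it is not just a matter of ``careful fractional accounting'': two of your three extremal configurations are structurally impossible, and the one configuration that actually threatens the bound is missing. First, by \Lem{dual-adjacencies}~(\ref{itm:dual-adj-tetra}) a \TetrahedronFace{}-node is adjacent only to \KiteFace{}-nodes and \Kitonic{} (hence level-$1$) \TriangleFace{}-nodes, so no level-$2$ node is ever reachable through a \TetrahedronFace{}-node $\DualNodeOther$; your second pattern, which derives its \TriangleFaceFrac{}{2}-node from ``a level-$2$ triangle'' behind a \TetrahedronFace{}-node, cannot occur. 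Second, every level-$2$ \TriangleFace{}-node has at least two level-$1$ neighbours (it has degree three, no \KiteFace{}-neighbour, and at most one level-$2$ neighbour by \Cor{levels}), so it is always split among at least two quarter spheres; your ``unshared level-$2$ triangle-neighbour'' does not exist, and the value of two \TriangleFace{}-nodes is instead attained as $1+\frac{1}{2}+\frac{1}{2}$ from a fully contained $\DualNodeOther$ with two level-$2$ neighbours.

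The missing case is a \TetrahedronFace{}-node $\DualNodeOther$ with $k=1$, i.e., exactly one \KiteFace{}-neighbour and two \Kitonic{} \TriangleFace{}-neighbours, a configuration permitted by \Lem{dual-adjacencies}. Under your containment-based accounting this places an entire \TetrahedronFace{}-node (worth $\frac{9}{2}$ planar edges) into the single quarter sphere $\QuarterSphere(\DualNode,\DualNodeOther)$, exceeding the claimed maximum of $3$ planar edges in each of the three listed patterns. The lemma therefore cannot be proved by bounding what a quarter sphere \emph{contains}; one must bound what it can be \emph{attributed} after a redistribution. This is precisely what the paper's semi-quarter-sphere argument supplies: each of the two adjacent \Kitonic{} \TriangleFace{}-nodes leaves a half-triangle of slack in a quarter sphere of one of its own \KiteFace{}-neighbours, and the charges are rebalanced so that each semi-quarter sphere ends up with a \TetrahedronFaceFrac{}{4}- and a \TriangleFaceFrac{}{4}-node. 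The same remark applies to your third pattern: a level-$1$ \TetrahedronFace{}-node adjacent to $\DualNodeOther$ does not lie in $\Sphere(\DualNode)$ at all, so its \TetrahedronFaceFrac{}{4}-contribution can only arise by such a reattribution. Without an explicit discharging step of this kind your upper bound does not follow.
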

\begin{proof}
Observe that by definition, $\Sphere(\DualNode)$ cannot contain
another \KiteFace{}-node besides $\DualNode$ and thus, neither can
its quarter spheres.
Consider a quarter sphere $\QuarterSphere(\DualNode,\DualNodeOther)$ of a
\KiteFace{}-node $\DualNode$ that is adjacent to a node $\DualNodeOther$.
Then, $\DualNodeOther$ is a \TetrahedronFace{}-node or a
\TriangleFace{}-node.

We assume first that $\DualNodeOther$ is a \TetrahedronFace{}-node.
Let $\DualNodeThird \neq \DualNode$ and $\DualNodeFourth \neq \DualNode$ denote
the two other neighbors of $\DualNodeOther$ besides $\DualNode$.
By \Cor{levels}, $\Level(\DualNodeThird) \leq 1$ and
$\Level(\DualNodeFourth) \leq 1$.
Thus, no part of them can be contained in
$\QuarterSphere(\DualNode,\DualNodeOther)$, as their distance via
$\DualNodeOther$ to a \KiteFace{}-node would be $2$.
In consequence of \Lem{dual-adjacencies}, neither $\DualNodeThird$ nor
$\DualNodeFourth$ is a \TetrahedronFace{}-node.
If both are \KiteFace{}-nodes, $\DualNodeOther$ is shared
among three spheres, if only one of them is a \KiteFace{}-node,
it is shared among two spheres, and if both are \TriangleFace{}-nodes,
it belongs entirely to $\QuarterSphere(\DualNode,\DualNodeOther)$.
Hence, $\Sphere_{\DualNodeOther}(\DualNode)$ contains at least
a \TetrahedronFaceFrac{}{3}-node.

For the upper bound, we assume that $\DualNodeOther$
is contained entirely in $\QuarterSphere(\DualNode,\DualNodeOther)$.
Then, $\DualNodeThird$ and $\DualNodeFourth$ must be \TriangleFace{}-nodes,
which are in turn adjacent to at least one \KiteFace{}-node each.
We split $\QuarterSphere(\DualNode,\DualNodeOther)$ once more into
two \emph{semi-quarter spheres} that each contains one half of $\DualNodeOther$
and, depending on their belonging to $\QuarterSphere(\DualNode,\DualNodeOther)$,
the (fractions of) $\DualNodeThird$ and $\DualNodeFourth$,
respectively.
As $\Dual{G}$ is simple and a node's maximum level is $2$, the semi-quarter
spheres are disjoint.
\Wilog, consider $\DualNodeThird$ and let $\DualNodeThird'$ be
a \KiteFace{}-node adjacent to $\DualNodeThird$.
Then, one of the semi-quarter spheres of $\QuarterSphere(\DualNodeThird',\DualNodeThird)$
consists of exactly one \TriangleFaceFrac{}{2}-node.
Hence, for every semi-quarter sphere containing a \TetrahedronFaceFrac{}{2}-node,
there is at most one other semi-quarter sphere containing a \TriangleFaceFrac{}{2}-node.
We can therefore attribute to each of both semi-quarter spheres a \TetrahedronFaceFrac{}{4}-node
and a \TriangleFaceFrac{}{4}-node.

Assume now that $\DualNodeOther$ is a \TriangleFace{}-node.
Then, due to \Cor{levels}, $\QuarterSphere(\DualNode,\DualNodeOther)$ cannot
contain a \TetrahedronFace{}-node at all.
As above, $\QuarterSphere(\DualNode,\DualNodeOther)$ must contain
$\DualNodeOther$ with a proportion of at least $\frac{1}{3}$.
This lower bound is reached if $\DualNodeOther$ is adjacent to three
\KiteFace{}-nodes and hence shared among three quarter spheres.

We obtain a maximal quarter sphere if $\DualNodeOther$ is entirely part of
$\QuarterSphere(\DualNode,\DualNodeOther)$.
The case that $\DualNodeOther$ is adjacent to a \TetrahedronFace{}-node is
already covered in the above argument from the viewpoint of the
\TetrahedronFace{}-node.
Thus, we only address the case where a \TriangleFace{}-node is adjacent
to $\DualNodeOther$.
Consider again the semi-quarter spheres.
By \Cor{levels}, $\DualNodeOther$ may be adjacent to a $\Level_2$-\TriangleFace{}-node
$\DualNodeThird$.
As a $\Level_2$-\TriangleFace{}-node can be adjacent to at most one other
$\Level_2$-\TriangleFace{}-node, $\DualNodeThird$ must be adjacent to
another $\Level_1$-node besides $\DualNodeOther$, \ie, $\DualNodeThird$ must be
shared between at least two quarter spheres.
Subsequently, the semi-quarter sphere consists of at most two
\TriangleFaceFrac{}{2}-nodes.

As indicated above and depicted in \Fig{kite-spheres}, it is possible for a
$\Level_1$-\TriangleFace{}-node to be adjacent to both a
$\Level_2$-\TriangleFace{}-node and a \TetrahedronFace{}-node.
Hence, we obtain the three stated combinations.
\Qed{}
\end{proof}
The generalized dual graph introduced in the previous section
enables us to estimate the density of maximal \NIC{}-planar graphs.
\begin{lemma}
\LemLabel{maximal-minimal-spheres}
Let $G$ be a maximal \NIC{}-planar graph.
If $G$ is optimal, then every quarter sphere with respect to any \NIC{}-planar
embedding of $G$ consists of a \TriangleFaceFrac{}{3}-node and $G$ has four
planar edges for every pair of crossing edges.
If $G$ is sparsest maximal, then every quarter sphere
can at most be attributed either
two \TriangleFace{}-nodes,
or a \TriangleFace{}-, a \TetrahedronFaceFrac{}{4}-, and a \TriangleFaceFrac{}{4}-node,
or a \TetrahedronFaceFrac{}{2}- and a \TriangleFaceFrac{}{2}-node.
For every pair of crossing edges, $G$ has 14 planar edges.
\end{lemma}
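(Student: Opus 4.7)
The plan is to combine Euler's formula for $\Dual{G}$ with the per-quarter-sphere bounds of \Lem{quarter-spheres} and then optimize the resulting inequalities. Throughout, let $k$, $t$, and $r$ denote the numbers of \KiteFace{}-, \TetrahedronFace{}-, and \TriangleFace{}-nodes of $\Dual{G}$ and set $s := 3t + r$.

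First I would express $m$ algebraically. By \Lem{dual-triconnected}, $\Dual{G}$ is a simple $3$-connected planar graph whose \KiteFace{}-nodes have degree~$4$ and whose \TetrahedronFace{}- and \TriangleFace{}-nodes have degree~$3$, while its faces correspond one-to-one to the non-center vertices of $G$ (there are $n - t$ of them, since every \TetrahedronFace{}-node represents exactly one degree-$3$ center). Euler's formula then gives $2k + 3t + r = 2(n-2)$, equivalently $2k = 2(n-2) - s$. Counting planar edges as the edges of $\Dual{G}$ plus three internal edges per simple tetrahedron and adding the $2k$ crossed edges yields $m = 4(n-2) - s/2$, so the number of planar edges equals $2(n-2) + s/2$.

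Next I would translate \Lem{quarter-spheres} into bounds on $s$. A \TriangleFaceFrac{}{3}-node contributes $1/3$ to $s$ while a \TetrahedronFaceFrac{}{3}-node contributes $1$, so each of the $4k$ quarter spheres contributes at least $1/3$, yielding $s \geq 4k/3$; equality holds exactly when every quarter sphere is a single \TriangleFaceFrac{}{3}-node (which additionally forces $t = 0$, as a \TetrahedronFaceFrac{}{3}-node would contribute strictly more than $1/3$). A short direct calculation shows that each of the three maximum configurations in \Lem{quarter-spheres} contributes exactly $2$ to $s$: two \TriangleFace{}-nodes give $2$; a \TriangleFace{}-, a \TetrahedronFaceFrac{}{4}-, and a \TriangleFaceFrac{}{4}-node give $3/4 + 5/4 = 2$; and a \TetrahedronFaceFrac{}{2}- and a \TriangleFaceFrac{}{2}-node give $3/2 + 1/2 = 2$. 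Hence $s \leq 8k$, with equality exactly when every quarter sphere is one of these three forms.

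Substituting $2k = 2(n-2) - s$ into $s \geq 4k/3$ gives $s \geq \tfrac{4}{5}(n-2)$ and therefore $m \leq \tfrac{18}{5}(n-2)$; optimality forces equality, so every quarter sphere is a \TriangleFaceFrac{}{3}-node and $k = \tfrac{3}{5}(n-2)$, producing $m - 2k = \tfrac{12}{5}(n-2)$ planar edges, \ie, exactly $4$ per pair of crossing edges. Analogously, $s \leq 8k$ combined with $2k = 2(n-2) - s$ gives $s \leq \tfrac{8}{5}(n-2)$ and $m \geq \tfrac{16}{5}(n-2)$; if $G$ is sparsest maximal, equality forces every quarter sphere into one of the three maximum configurations, and $k = \tfrac{1}{5}(n-2)$ yields $m - 2k = \tfrac{14}{5}(n-2)$ planar edges, \ie, $14$ per pair of crossing edges. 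The main obstacle is the Euler accounting---specifically verifying the identity $F^* = n - t$, which requires showing that in a maximal \NIC{}-planar graph with $n \geq 5$ every degree-$3$ vertex is a tetrahedron center (a degree-$3$ vertex incident to a crossing would force a face outside the kite to fail \Lem{kite-or-triangle} or violate maximality). Once this bookkeeping is settled, both structural claims drop out from a single linear optimization over $(k,t,r)$.
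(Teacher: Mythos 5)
Your proof is correct, but it takes a genuinely different route from the paper's. The paper argues purely locally: it charges each face one half of every edge segment on its boundary, so that a \TriangleFace{}-node accounts for $\frac{3}{2}$ planar edges, a \TetrahedronFace{}-node for $\frac{9}{2}$, and a \KiteFace{}-node for $2$ planar edges plus one crossing pair; the content bounds of \Lem{quarter-spheres} then convert directly into at least $4\cdot\frac{1}{2}+2=4$ and at most $4\cdot 3+2=14$ planar edges per sphere, with no global counting at all. You instead run Euler's formula on $\Dual{G}$ to obtain $2k+s=2(n-2)$ and $m=4(n-2)-\frac{s}{2}$ with $s=3t+r$, and then optimize $s$ against the quarter-sphere bounds $\frac{4k}{3}\le s\le 8k$. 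Your arithmetic checks out: each extremal configuration of \Lem{quarter-spheres} contributes exactly $2$ to $s$, the fractional attributions sum to $t$ and $r$ because by \Cor{levels} every non-\KiteFace{} node lies in at least one sphere, and the identity $F^*=n-t$ that you flag as the main obstacle does hold --- a degree-$3$ vertex cannot be a kite corner, since the face behind such a corner would have to be a \Trivial{} triangle requiring the edge between the two opposite kite vertices a second time, so by \Lem{kite-or-triangle} every degree-$3$ vertex is the center of a simple tetrahedron (and no two such centers are adjacent by triconnectivity, so the $3t$ edges incident to centers are not double-counted); this is essentially the bookkeeping already implicit in the proof of \Lem{dual-triconnected}. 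What your route buys is that the density bounds of \Thm{density} fall out immediately as a by-product rather than via a second Euler argument on the planar reduction; what the paper's route buys is brevity and locality, never needing $F^*=n-t$. One caveat applies equally to both proofs: the ``14 planar edges per crossing pair'' conclusion for the sparsest case really presupposes that $m$ attains $\frac{16}{5}(n-2)$, which a sparsest maximal graph need not do for every $n$; you inherit this looseness from the lemma statement rather than introduce it.
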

\begin{proof}
Let $\Embedding{G}$ be a \NIC{}-planar embedding of a maximal \NIC{}-planar
graph $G$ and denote by $\Dual{G}$ the corresponding generalized dual graph.
Consider the quarter spheres of the \KiteFace{}-nodes in $\Dual{G}$.
As every sphere contains exactly one \KiteFace{}-node and otherwise only
nodes representing planar edges,
we obtain a lower bound on the density of $G$ if the quarter spheres' sizes
are maximized and an upper bound if they are minimized.
Recall that every edge segment separates two faces.
Thus, we assign to every face only one half of the edges on its boundary.
A \TriangleFace{}-node represents one face which is a \Trivial{} triangle.
Thus, every \TriangleFace{}-node corresponds to $\frac{3}{2}$ planar edges.
A \TetrahedronFace{}-node represents three faces each of which is a \Trivial{}
triangle and therefore corresponds to $\frac{9}{2}$ planar edges.
The set of faces represented by a \KiteFace{}-node consists of four
non-\Trivial{} triangles, which yields a pair of crossing edges and
$\frac{4}{2}$ planar edges.

Due to \Lem{quarter-spheres}, a quarter sphere contains at least
$\frac{1}{3}$ of a \TetrahedronFace{}-node or $\frac{1}{3}$ of
a \TriangleFace{}-node, which corresponds to $\frac{3}{2}$ and $\frac{1}{2}$
planar edges, respectively.
The smaller the ratio of planar edges per crossing edge the
larger the density of $G$.
Thus, we obtain a minimum of $4\cdot\frac{1}{2} + 2 = 4$ planar edges and one
pair of crossing edges for an entire sphere, including the edges represented by
the \KiteFace{}-node.
On the other hand, a quarter sphere can be attributed at most either two
\TriangleFace{}-nodes or a \TriangleFace{}-node, a
\TetrahedronFaceFrac{}{4}-node, and a \TriangleFaceFrac{}{4}-node, or a
\TetrahedronFaceFrac{}{2}-node and a \TriangleFaceFrac{}{2}-node---which
corresponds to $3$ planar edges in all three cases.
For the entire sphere, this yields a total of $4 \cdot 3 + 2 = 14$ planar edges
and one pair of crossing edges.
\Qed{}
\end{proof}
In case of optimal \NIC{}-planar graphs, \Lem{maximal-minimal-spheres}
implies:
\begin{corollary}
\CorLabel{densest-dual-structure}
The generalized dual graph of every \NIC{}-planar embedding
of an optimal \NIC{}-planar graph is bipartite with vertex
set $\Dual{V}_{\scalebox{.6}{\KiteFace}} \cupdot \Dual{V}_{\scalebox{.6}{\TriangleFace}}$
such that $\Dual{V}_{\scalebox{.6}{\KiteFace}}$
contains only \KiteFace{}-nodes and $\Dual{V}_{\scalebox{.6}{\TriangleFace}}$
contains only \TriangleFace{}-nodes.
\end{corollary}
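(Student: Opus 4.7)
The plan is to read off the corollary directly from \Lem{maximal-minimal-spheres}, using the two structural constraints it provides for the optimal case: every quarter sphere of every \KiteFace{}-node is exactly a \TriangleFaceFrac{}{3}-node, and the global ratio is four planar edges per crossing pair.

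First I would observe that the statement ``every quarter sphere consists of a \TriangleFaceFrac{}{3}-node'' immediately rules out \TetrahedronFace{}-nodes. Indeed, by \Cor{levels} every \TetrahedronFace{}-node has level~$1$, hence belongs to the sphere of some \KiteFace{}-node and contributes to at least one quarter sphere; but a quarter sphere of the prescribed form contains no fraction of a \TetrahedronFace{}-node. Hence $\Dual{V}$ contains only \KiteFace{}- and \TriangleFace{}-nodes.

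Next I would show that no two \TriangleFace{}-nodes are adjacent. Each \TriangleFace{}-node $\DualNodeOther \in \Dual{V}$ is the (possibly fractional) element populating one or more quarter spheres. Since in the optimal case each such contribution equals exactly $\frac{1}{3}$ and the total mass of $\DualNodeOther$ is~$1$, the node $\DualNodeOther$ must be split among exactly three quarter spheres. Because a quarter sphere is determined by a \KiteFace{}-node together with one of its neighbors, and $\DualNodeOther$ has degree three in $\Dual{G}$ by \Lem{dual-triconnected}, all three neighbors of $\DualNodeOther$ must be \KiteFace{}-nodes. Thus no \TriangleFace{}-node has a \TriangleFace{}-neighbor.

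Combining this with \Lem{dual-adjacencies}(\ref{itm:dual-adj-kite}), which forbids \KiteFace{}--\KiteFace{} adjacencies, yields that $\Dual{G}$ is bipartite with parts $\Dual{V}_{\scalebox{.6}{\KiteFace}}$ and $\Dual{V}_{\scalebox{.6}{\TriangleFace}}$, as claimed. The only subtlety worth double-checking is the counting step that forces each \TriangleFace{}-node to be split into exactly three equal thirds; this is the one place where ``optimal'' (as opposed to merely ``densest-possible on average'') is actually used, since a node appearing with multiplicity two in one quarter sphere and one in another would give the same total mass but violate the per-quarter-sphere equality.
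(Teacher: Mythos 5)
Your proposal is correct and follows exactly the route the paper intends: the paper states this corollary without proof as an immediate consequence of \Lem{maximal-minimal-spheres}, and your argument is precisely the spelling-out of that implication (no \TetrahedronFace{}-node can appear in any quarter sphere, each \TriangleFace{}-node must be split into exactly three thirds and hence has only \KiteFace{}-neighbors, and \Lem{dual-adjacencies}(\ref{itm:dual-adj-kite}) excludes \KiteFace{}--\KiteFace{} adjacencies). The only cosmetic quibble is that the degree-three property of \TriangleFace{}-nodes comes from the definition of the generalized dual rather than from \Lem{dual-triconnected}, but this does not affect the argument.
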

With respect to \NIC{}-planar embeddings of optimal \NIC{}-planar
graphs, this unambiguity immediately yields:
\begin{corollary}
\CorLabel{optimal-unique-embedding}
The \NIC{}-planar embedding of an optimal \NIC{}-planar graph is unique
up to isomorphism.
\end{corollary}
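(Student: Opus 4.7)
The plan is to derive uniqueness from the rigid structure of the generalized dual established in \Cor{densest-dual-structure}. Fix an optimal \NIC{}-planar graph $G$ and any \NIC{}-planar embedding $\Embedding{G}$, and consider its generalized dual $\Dual{G}$. By \Cor{densest-dual-structure}, $\Dual{G}$ is bipartite with all \KiteFace{}-nodes on one side and all \TriangleFace{}-nodes on the other; by \Lem{dual-triconnected} it is simple, triconnected, and planar, so by Whitney's theorem it admits a unique combinatorial planar embedding up to reflection.

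Next, I would invoke the reconstruction procedure of \Lem{dual-adj2}: taking the planar dual of $\Dual{G}$ yields the planar skeleton $\Skel{G}_{\Emb{}}$, a triconnected planar graph whose faces are triangles (from \TriangleFace{}-nodes) and quadrilaterals (from \KiteFace{}-nodes). Inserting the two crossing diagonals into each quadrilateral face is forced, since a kite is uniquely determined by the cyclic order of its four boundary vertices, and there are no \TetrahedronFace{}-nodes whose center-vertex expansion would require an additional choice. Hence $\Embedding{G}$ is uniquely recoverable from the abstract bipartite graph $\Dual{G}$.

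The remaining step, and the principal obstacle, is to verify that any two \NIC{}-planar embeddings of the same optimal graph $G$ yield isomorphic generalized duals. For this I would combine \Lem{kite-flip} with the bipartiteness from \Cor{densest-dual-structure}: the only local reembedding preserving maximal \NIC{}-planarity is a kite flip between a \KiteFace{}-node and two adjacent \TriangleFace{}-nodes, but a bipartite dual contains no adjacent \TriangleFace{}-pair, so no optimal embedding admits any kite flip. Combined with the discharging count of \Lem{maximal-minimal-spheres}, which fixes both the number of \KiteFace{}-nodes and their adjacency pattern with \TriangleFace{}-nodes in any optimal dual, this rigidity forces any two optimal duals of $G$ to coincide combinatorially, and therefore the corresponding embeddings of $G$ are isomorphic.
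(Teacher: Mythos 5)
Your first two steps are sound and match what the paper leaves implicit: once the generalized dual of an optimal embedding is known as an abstract graph, \Lem{dual-triconnected} together with Whitney's theorem fixes its planar embedding, and the reconstruction of \Lem{dual-adj2} (with no \TetrahedronFace{}-nodes to expand and the two crossing diagonals of every quadrilateral face forced) recovers the embedding of $G$ uniquely from the dual. The problem is your third step, which you correctly identify as the crux but do not actually close. From \Cor{densest-dual-structure} you deduce that an optimal embedding admits no kite flip, and from this ``flip-rigidity'' you conclude that any two optimal duals of $G$ coincide. That inference is invalid: nothing in the paper (and nothing you prove) asserts that any two maximal \NIC{}-planar embeddings of the same graph are connected by a sequence of kite flips, so the absence of available flips only shows that each optimal embedding is an isolated point in the flip graph, not that there is a single such point. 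Likewise, the discharging count of \Lem{maximal-minimal-spheres} fixes the \emph{number} of \KiteFace{}-nodes ($\frac{3}{5}(n-2)$ of them) and the degree pattern of the dual, but not \emph{which} $K_4$ subgraphs of $G$ are realized as kites --- and that is exactly the datum that could a priori differ between two embeddings.

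What actually closes the gap is an embedding-independent identification of the kites, which the paper develops in \Sect{densest-recognition}: by \Lem{kite-cover} every edge of an optimal graph lies in exactly one $K_4$ that is embedded as a kite, and by \Cor{densest-iff} a $K_4$ is embedded as a kite if and only if it contains a $1$-fold $K_4$-covered edge. The latter condition depends only on the abstract graph $G$, so the set of crossing pairs --- and hence the planar skeleton and the generalized dual --- is the same for every \NIC{}-planar embedding of $G$; combining this with your steps 1 and 2 yields the corollary. The paper itself states the result as following ``immediately'' from \Cor{densest-dual-structure} without spelling this out, but this combinatorial identification of the kites is the substantive missing ingredient, and it is not supplied by your flip argument.
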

Note that every optimal 1-planar graph has one, two, or eight 1-planar
embeddings \cite{s-o1pgts-10}, which are again unique up to isomorphism.

\begin{figure}[tb]
\centering
\begin{tikzpicture}[node distance=.2cm]
\node[inner sep=0pt,draw=none] (l) {
  \includegraphics[scale=.6]{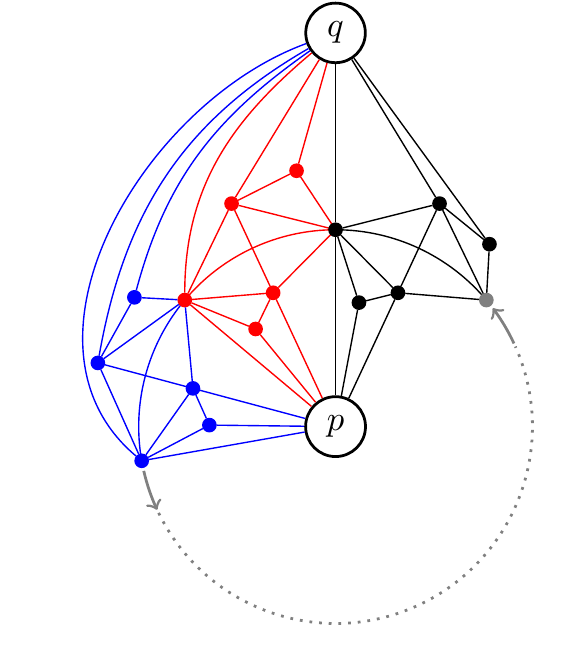}
  \phantomsubcaption\label{fig:sparse-example}};
\node[anchor=north] at (l.north west) {(\subref*{fig:sparse-example})};
\node[inner sep=0pt,draw=none] (m) [right=of l] {
  \includegraphics[scale=.6]{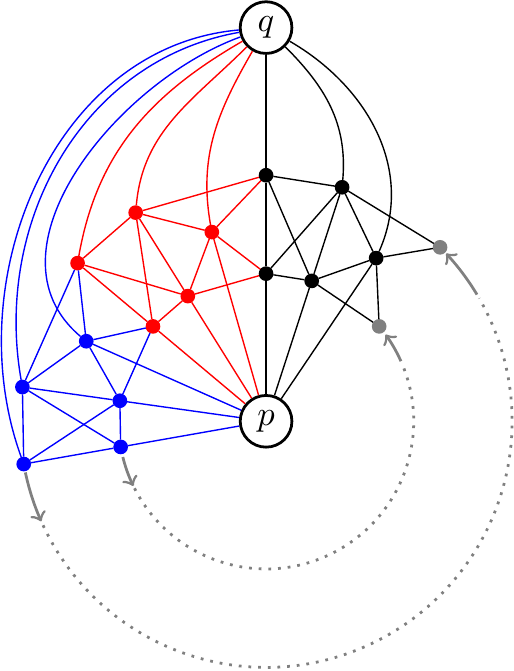}
  \phantomsubcaption\label{fig:dense-example2}};
\path (l.north east) -| node[anchor=north] {(\subref*{fig:dense-example2})} (m.west);
\node[inner sep=0pt,draw=none] (r) [right=of m] {
  \includegraphics[scale=.45]{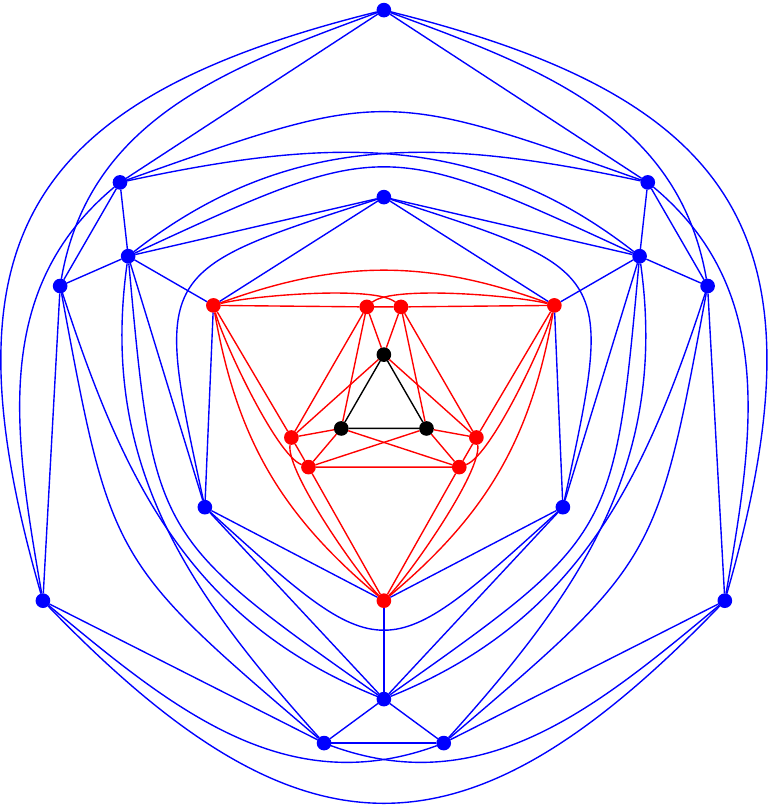}
  \phantomsubcaption\label{fig:dense-example}};
\path (l.north east) -| node[anchor=north] {(\subref*{fig:dense-example})} (r.west);
\end{tikzpicture}
\caption{Construction of sparsest maximal (\subref{fig:sparse-example})
and optimal (\subref{fig:dense-example2},\subref{fig:dense-example})
\NIC{}-planar graphs.}
\label{fig:sparse-dense-example}
\end{figure}
\Lem{maximal-minimal-spheres} suffices to prove upper
and lower bounds on the density of maximal \NIC{}-planar graphs.
Whereas the upper bound of $18/5(n-2)$ edges was already proven by
Zhang~\cite{z-dcmgprc-14} and shown to be tight by Czap and
{\v S}ugerek~\cite{cs-tc1pg-14}, the lower bound has never been assessed
before.
\begin{theorem}
\ThmLabel{density}
Every maximal \NIC{}-planar graph on $n$ vertices with $n \geq 5$
has at least $\frac{16}{5}(n-2)$
and at most $\frac{18}{5}(n-2)$ edges.
Both bounds are tight for infinitely many values of $n$.
\end{theorem}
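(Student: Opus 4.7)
The plan is to combine the triangulation structure from \Lem{kite-or-triangle} with the discharging decomposition of \Lem{maximal-minimal-spheres}, and then to exhibit matching constructions for infinitely many $n = 5k+2$.

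First I would fix a \NIC{}-planar embedding $\Embedding{G}$ of the maximal \NIC{}-planar graph $G$ and write $x$ for the number of pairs of crossing edges and $p = m - 2x$ for the number of planar edges. By \Lem{kite-or-triangle}, $\Embedding{G}$ is triangulated, so Euler's formula applied to the planarization (with $n + x$ vertices, $m + 2x$ edges, and triangular faces) collapses to the identity $m = 3n - 6 + x$, and hence $p = 3n - 6 - x$.

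Next I would invoke the quarter-sphere analysis. The proof of \Lem{maximal-minimal-spheres} establishes, independently of any ``optimal'' or ``sparsest'' hypothesis, that every sphere accounts for exactly one pair of crossing edges together with between $4$ and $14$ planar edges, the two extremes being the extremal quarter-sphere sizes supplied by \Lem{quarter-spheres}. Since the quarter spheres partition the planar-edge contributions of all dual nodes, summing over all $x$ spheres yields $4x \leq p \leq 14x$. Substituting $p = 3n - 6 - x$ and solving both inequalities produces $\frac{n-2}{5} \leq x \leq \frac{3(n-2)}{5}$, whence $\frac{16}{5}(n-2) \leq m \leq \frac{18}{5}(n-2)$.

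This argument is vacuous when $x = 0$, so I would treat that case separately: then $\Embedding{G}$ is planar and $G$ is maximal planar by \Cor{psub-maxplanar}. For $n \geq 5$ any such triangulation contains two adjacent triangular faces whose apex vertices $u,v$ are non-adjacent in $G$; re-embedding the shared edge together with $\Edge{u}{v}$ as a kite would yield a \NIC{}-planar embedding of $G + \Edge{u}{v}$, contradicting maximality. Hence $x \geq 1$, and the derived bounds apply. I expect this case distinction to require the most care, since one must verify that a kite-insertable non-edge always exists in a maximal planar graph on $n \geq 5$ vertices; a short counting argument on adjacent-triangle apex pairs should suffice.

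For the tightness claim I would exhibit two explicit families parameterised by $k \geq 2$, of order $n = 5k + 2$, as sketched in \Fig{sparse-dense-example}. The optimal family fills every triangle of a suitably chosen triangulated planar skeleton with a kite, so that the generalized dual is bipartite between \KiteFace{}- and \TriangleFace{}-nodes, matching \Cor{densest-dual-structure}; the edge count is then $\frac{18}{5}(n-2)$ by construction. The sparsest family attains the maximum quarter-sphere attribution of \Lem{quarter-spheres} at every kite, so that $p = 14x$ and $m = \frac{16}{5}(n-2)$. Maximality in both cases follows from \Lem{dual-adj2} after verifying that the described duals satisfy the adjacency constraints of \Lem{dual-adjacencies}.
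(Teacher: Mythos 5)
Your proposal follows essentially the same route as the paper's proof: \Lem{kite-or-triangle} gives the triangulation, \Lem{maximal-minimal-spheres} supplies the bound of between $4$ and $14$ planar edges per pair of crossing edges, Euler's formula (the paper phrases it via the planar reduction $\PSub{G}$ with $\PSub{m}=3n-6$ rather than the planarization, but the algebra is equivalent) yields $\frac{16}{5}(n-2)\le m\le\frac{18}{5}(n-2)$, and tightness is delegated to the constructions of \Fig{sparse-dense-example}. Your explicit treatment of the crossing-free case is a small but legitimate refinement that the paper leaves implicit (it follows from \Lem{dual-adjacencies}(v) and \Cor{levels}, which force at least one \KiteFace{}-node when $n\ge 5$), and does not change the character of the argument.
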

\begin{proof}
Let $G$ be a maximal \NIC{}-planar graph with $n \geq 5$ vertices
and $m$ edges.
Let $\Embedding{G}$ be a \NIC{}-planar embedding of $G$ and
$\Dual{G}$ be the corresponding generalized dual graph.
Consider the planar subgraph $\PSub{G} \subseteq G$ that is obtained
by removing exactly one of each pair of crossing edges.
Then, $\PSub{G}$ has $n$ vertices and $\PSub{m} \leq m$ edges.
As $\PSub{m}$ is a triangulated planar graph, $\PSub{m} = 3n - 6$.

By \Lem{maximal-minimal-spheres}, there are at least 4 and at most 14 planar
edges per pair of crossing edges.
Furthermore, $\PSub{G}$ contains all planar edges as well as one edge
from each pair of crossing edges.
Hence, $m$ and $\PSub{m}$ differ by at most $\frac{\PSub{m}}{5}$ and
at least $\frac{\PSub{m}}{15}$.
Thus, $\PSub{m} + \frac{\PSub{m}}{15} \leq m \leq \PSub{m} + \frac{\PSub{m}}{5}$,
which yields with $\PSub{m} = 3n - 6$ that
$\frac{16}{5}n - \frac{32}{5} \leq m \leq \frac{18}{5}n - \frac{36}{5}$.

\Fig{sparse-dense-example} shows how to construct a family of
maximal \NIC{}-planar graphs that meet the lower (\Fig{sparse-example})
and upper (\Fig{dense-example2},\subref{fig:dense-example}) bound exactly.
In both cases, the blue subgraph can be copied arbitrarily often and attached
either circularly (\Fig{sparse-example},\subref{fig:dense-example2}) or to the outside
(\Fig{dense-example}).
To obtain a sparsest maximal graph with $n = 7$, we take the red subgraph plus
the leftmost black vertex in \Fig{sparse-example}
and identify $p$ and $q$.
\Qed{}
\end{proof}
\begin{corollary}\CorLabel{construction}
For every $k \geq 1$ there is
a sparsest maximal \NIC{}-planar graph with $n = 5k+2$ and $m = 16k$.
There is an optimal \NIC{}-planar graph with $n = 5k+2$ and $m=18k$
if and only if $k \geq 2$.
\end{corollary}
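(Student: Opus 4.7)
The plan is to handle three independent subclaims: existence of a sparsest maximal \NIC{}-planar graph for every $k \geq 1$, existence of an optimal \NIC{}-planar graph for every $k \geq 2$, and non-existence of an optimal \NIC{}-planar graph at $k = 1$ (\ie{}, at $n = 7$). For the two existence claims I would directly verify the constructions already sketched in the proof of \Thm{density}. For the sparsest family I would take $k$ circularly arranged copies of the blue building block of \Fig{sparse-example} around the two fixed vertices and check, block by block, that the resulting graph has $n = 5k + 2$ vertices, exactly $m = 16k$ edges, and an embedding whose generalized dual complies with \Lem{dual-adjacencies}; the edge count then certifies sparsest maximality via \Thm{density}. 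The case $k = 1$ uses the modified construction mentioned in the proof of \Thm{density} (the red subgraph together with the leftmost black vertex and $p$, $q$ identified). For $k \geq 2$, the optimal family is constructed analogously from the circular construction of \Fig{dense-example2}, yielding $n = 5k + 2$ and $m = 18k$ and thus meeting the upper bound of \Thm{density}.

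The only non-routine step is the impossibility at $n = 7$, which I would obtain by counting on the generalized dual. Suppose, for contradiction, that an optimal \NIC{}-planar graph $G$ with $n = 7$ and $m = 18$ exists, and fix any \NIC{}-planar embedding $\Embedding{G}$. By \Cor{psub-maxplanar}, a planar reduction $\PSub{G}_{\Emb{}}$ is maximal planar and hence has $3n - 6 = 15$ edges, so $G$ contains exactly $c = 3$ pairs of crossing edges; equivalently, $\Dual{G}$ has exactly three \KiteFace{}-nodes. By \Cor{densest-dual-structure}, $\Dual{G}$ is bipartite with \KiteFace{}-nodes of degree $4$ and \TriangleFace{}-nodes of degree $3$, so handshaking forces $|\Dual{V}_{\scalebox{.6}{\TriangleFace}}| = 4$. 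Hence $\Dual{G}$ would be a simple, 3-connected, bipartite planar graph on $7$ nodes with $12$ edges, but its bipartite girth of at least $4$ together with Euler's formula forces $|\Dual{E}| \leq 2 \cdot 7 - 4 = 10$, a contradiction.

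The main obstacle I expect is the careful, block-local verification that the two families are indeed maximal \NIC{}-planar; in particular, I would have to check that no further edge can be added across the seams between consecutive blocks, and separately handle the $k = 1$ subcase of the sparsest family where $p$ and $q$ are identified. Once that bookkeeping is in place, the non-existence half of the corollary reduces to the short Euler-style counting argument above.
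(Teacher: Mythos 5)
Your proposal is correct, and the existence half coincides with the paper's treatment (both simply point to the circular constructions of \Fig{sparse-example} and \Fig{dense-example2}, with the same ad-hoc modification for the sparsest graph at $k=1$). Where you genuinely diverge is the non-existence of an optimal graph at $n=7$. The paper argues entirely on the primal side: an optimal graph with $18$ edges forces $3$ pairs of crossing edges, hence three kites, and since two kites share at most one vertex they already consume all seven vertices, leaving no room for a third. You instead transfer the count to the generalized dual: three \KiteFace{}-nodes of degree $4$ and the bipartiteness from \Cor{densest-dual-structure} force four \TriangleFace{}-nodes and $12$ dual edges, contradicting the bound $|\Dual{E}| \leq 2|\Dual{V}| - 4 = 10$ for simple bipartite planar graphs. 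Both arguments are sound; yours leans on more machinery (\Lem{dual-triconnected} for simplicity and planarity of the dual, \Cor{densest-dual-structure} for bipartiteness, plus Euler's formula), whereas the paper's is a two-line pigeonhole on vertex counts. The dual-side argument has the modest advantage of generalizing mechanically to rule out other small parameter combinations via the same inequality $4|\Dual{V}_{\scalebox{.6}{\KiteFace}}| \leq 2(|\Dual{V}_{\scalebox{.6}{\KiteFace}}| + |\Dual{V}_{\scalebox{.6}{\TriangleFace}}|) - 4$, but for the single case $k=1$ the paper's direct count is shorter and self-contained.
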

\begin{proof}
The constructions given in \Fig{sparse-dense-example} show
how to obtain sparsest and densest graphs with $n = 5k+2$.
However, there is no optimal \NIC{}-planar graph with 7 vertices and 18 edges:
Such a graph must have $18 - (3 \cdot 7 - 6) = 3$ pairs of crossing edges.
Consequently, any \NIC{}-planar embedding must have three kites, which is
impossible, as two kites in a \NIC-planar embedding can share at most one
vertex and need seven vertices.
\Qed
\end{proof}
\begin{figure}[tb]
\centering
\begin{tikzpicture}[node distance=.2cm]
\node[inner sep=0pt,draw=none] (l) {
  \includegraphics[scale=.6]{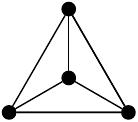}
  \phantomsubcaption\label{fig:maximum-example-n1}};
\node[anchor=north east] at (l.north west) {(\subref*{fig:maximum-example-n1})};
\node[inner sep=0pt,draw=none,anchor=north west] (ll) at ($(l.south west) + (0,-.5cm)$) {
  \includegraphics[scale=.6]{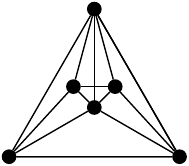}
  \phantomsubcaption\label{fig:maximum-example-n3}};
\node[anchor=north east] at (ll.north west) {(\subref*{fig:maximum-example-n3})};
\node[inner sep=0pt,draw=none,anchor=north west] (m) at ($(l.north east) + (.5cm,0)$) {
  \includegraphics[scale=.6]{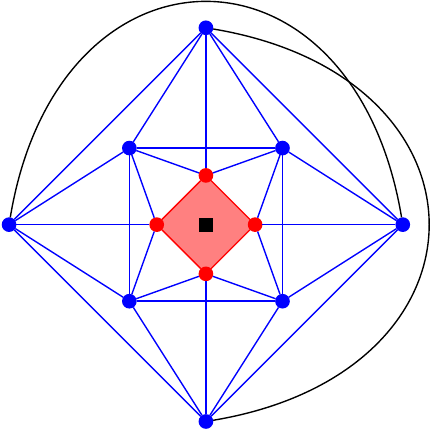}
  \phantomsubcaption\label{fig:maximum-example-n2}};
\path (l.north east) -| node[anchor=north] {(\subref*{fig:maximum-example-n2})} (m.west);
\node[inner sep=0pt,draw=none,anchor=north west] (r) at ($(m.north east) + (.3cm,0)$) {
  \includegraphics[scale=.6]{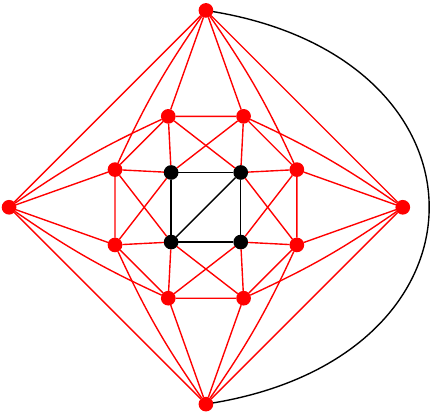}
  \phantomsubcaption\label{fig:maximum-example-n4}};
\path (l.north east) -| node[anchor=north] {(\subref*{fig:maximum-example-n4})} (r.west);
\end{tikzpicture}
\caption{Construction of densest graphs with $n$ vertices such that
$n = 5k + 2 + 1$
(\subref{fig:maximum-example-n1}),
$n = 5k + 2 + 3$
(\subref{fig:maximum-example-n3}),
$n = 5k + 2 + 2$ and $n = 5k + 2 + 4$
(\subref{fig:maximum-example-n2},\subref{fig:maximum-example-n4}).}
\label{fig:nonopt-maximum-examples}
\end{figure}
\begin{theorem}
\ThmLabel{density-intermediate}
For every $i \in \{ 1, \dots, 4 \}$ and infinitely many $k \geq 2$ there is
a \Maximum{} \NIC{}-planar graph with $n = 5k + 2 + i$ and
$m=\lfloor \frac{18}{5}(n-2) \rfloor$.
\end{theorem}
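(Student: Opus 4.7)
The approach is purely constructive: for each residue $i \in \{1,2,3,4\}$ we exhibit an explicit family of maximal \NIC{}-planar graphs meeting $\lfloor \frac{18}{5}(n-2)\rfloor$ edges, depicted in \Fig{nonopt-maximum-examples}. A short arithmetic check shows the target density to be $18k + 3, 18k + 7, 18k + 10, 18k + 14$ for $i = 1, 2, 3, 4$, respectively (since $\frac{18}{5}(5k+i) = 18k + \frac{18i}{5}$ and we take the floor).

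The backbone of each construction is a densest \NIC{}-planar graph on $5k+2$ vertices and $18k$ edges, as obtained from \Cor{construction}: this is a planar quadrangulation whose quadrilateral faces each host a kite. To it we attach an $i$-vertex ``cap'' that sits inside a suitably chosen face of the backbone. For $i=1$ we insert a single vertex of degree three inside a \Trivial{} triangle face (adding $3$ edges, so $18k + 3$). For $i=2$ we glue a kite into a quadrilateral region obtained by merging two triangles sharing an edge, contributing $2$ vertices and $7$ edges ($6$ around the kite plus the crossing pair, minus the ones counted in the backbone — the figure shows the exact pattern). For $i=3$ the cap is a degree-three vertex plus the two endpoints of a planar edge wedged in, adding $10$ edges. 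For $i=4$ we attach a two-kite patch joined to the backbone in the way shown in \Fig{maximum-example-n4}, producing $14$ new edges.

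For each construction I would then verify three things: (a) the embedding drawn in the figure is \NIC{}-planar, which follows because every newly introduced crossing sits in a kite that shares at most one vertex with any pre-existing kite (inherited from the backbone's structural property by \Cor{densest-dual-structure}); (b) the graph is maximal \NIC{}-planar — by \Lem{kite-or-triangle} it suffices to check that no face of the embedding admits an additional planar edge and no \Trivial{} triangle pair admits a crossing insertion without creating two kites sharing an edge; this is a local check on the boundary of the cap; (c) the edge count matches the claimed $\lfloor \frac{18}{5}(n-2)\rfloor$. Finally, since the backbone can be taken with arbitrarily large $k$, each family is infinite.

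The main technical obstacle is case (b): maximal \NIC{}-planarity must be checked against \emph{every} other \NIC{}-planar embedding of the same graph, not only the drawn one, because as noted after \Lem{kite-flip} a graph can admit non-isomorphic \NIC{}-planar embeddings. However, \Cor{psub-maxplanar} tells us that in any \NIC{}-planar embedding the planar reduction is maximal planar, so any candidate insertable edge $\Edge{u}{v}$ must create a new crossing with an existing edge; a case distinction by the location of $\{u,v\}$ relative to the cap (both inside the cap, both in the backbone, or split) together with the kite-sharing restriction of \NIC{}-planarity rules out all possibilities. I expect that this local verification is straightforward but tedious, and it is the only place where the specific geometry of the cap in each of the four subfigures enters in an essential way.
\Qed
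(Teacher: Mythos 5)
Your constructions for $i=1$ and $i=3$ coincide with the paper's (insert a degree-three vertex, respectively a three-vertex, ten-edge gadget forming exactly one new kite, into a \Trivial{} triangle of an optimal graph), and your arithmetic for the four target values $18k+3$, $18k+7$, $18k+10$, $18k+14$ is correct. However, for $i=2$ and $i=4$ there is a genuine gap: the ``cap attached to an optimal backbone'' strategy provably cannot work there, and no local verification will save it. By \Cor{densest-dual-structure}, the generalized dual of an optimal \NIC{}-planar graph is bipartite between kite-nodes and triangle-nodes, so every planar edge $\Edge{u}{v}$ of the backbone bounds a kite on one side, and the crossing pair of that kite already has both $u$ and $v$ among its end vertices. (This also contradicts your description of the backbone as a quadrangulation hosting a kite in every face --- that is the structure of optimal \One{}-planar graphs with $4n-8$ edges; optimal \NIC{}-planar graphs have four planar edges per crossing pair.) Hence no new edge may cross a backbone edge, and no new kite may contain two adjacent backbone vertices, since either would create two crossing pairs sharing the two end vertices $u$ and $v$. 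Because the cap vertices sit inside \Trivial{} triangles whose three corners are mutually adjacent, every new kite must use at least three new vertices, and two kites share at most one vertex.

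Now count crossing pairs via \Cor{psub-maxplanar}: the target $m=\lfloor\frac{18}{5}(n-2)\rfloor$ forces $m-(3n-6)=3k+1$ pairs for $i=2$, i.e.\ one kite more than the backbone's $3k$, which requires three new vertices while only two are available; for $i=4$ it forces $3k+2$ pairs, i.e.\ two additional kites needing at least five new vertices while only four are available. So no such cap exists for $i\in\{2,4\}$. This is exactly why the paper abandons the optimal backbone in these two cases and instead assembles a different graph from the building blocks of \Fig{maximum-example-n4} and \Fig{maximum-example-n2}, withholding certain crossing edges until the very end of the construction; you would need to switch to a construction of that global kind to close these two cases.
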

\begin{proof}
For $i = 1$, a \Maximum{} \NIC{}-planar graph $G_1$ with
$m = \lfloor \frac{18}{5}(n-2) \rfloor
= \lfloor \frac{18}{5}(5k + 1) \rfloor
= 18k + 3$ edges can be obtained from any optimal \NIC{}-planar graph $G_0$
by selecting any \Trivial{} triangle in any \NIC{}-planar embedding of $G_0$ and
inserting a vertex $v$ along with three edges, each of which connects $v$ to
one of the three vertices of the triangle
as depicted in \Fig{maximum-example-n1}.
In terms of the generalized dual, this corresponds to replacing one
\TriangleFace{}-node by a \TetrahedronFace{}-node.

For $i = 2$, a \Maximum{} \NIC{}-planar graph $G_2$ with
$m = \lfloor \frac{18}{5}(5k + 2) \rfloor
= 18k + 7$ edges can be constructed
as depicted in \Fig{maximum-example-n2} and \Fig{maximum-example-n4}:
First, we start with the graph shown in \Fig{maximum-example-n4},
which consists of four black and twelve red vertices
and five black and $44$ red edges. %
We, however, do not add the black edge connecting two red vertices.
Instead, we attach eight blue vertices
and $28$ blue edges %
as shown in \Fig{maximum-example-n2}.
This yields a graph with $24 = 5\cdot 4 + 2 + 2$ vertices (\ie, $k = 4$) and
$77$ edges.
The construction can be finished by adding the two black crossing edges
that connect blue vertices as shown again in \Fig{maximum-example-n2}.
$G_2$ thus has $24$ vertices and $79 = 18 \cdot 4 + 7$ edges.
To obtain larger graphs, attach the twelve red vertices and 44 red edges shown
in \Fig{maximum-example-n4} and once more the blue subgraph of
\Fig{maximum-example-n2}. Repeat these two steps arbitrarily often before
finally adding the two black crossing edges.
For each repetition $r$, this yields another $12+8=20$ vertices and $44+28=72$
edges.
Hence, each such graph has $24 + 20r$ vertices (\ie, $k = 4 + 4r$) and $79 +
72r = 18 \cdot (4+4r) + 7$ edges.

For $i = 3$, a \Maximum{} \NIC{}-planar graph $G_3$ with
$m = \lfloor \frac{18}{5}(5k + 3) \rfloor
= 18k + 10$ edges can again be obtained from any optimal \NIC{}-planar graph
$G_0$ by selecting any \Trivial{} triangle in any \NIC{}-planar embedding of $G_0$ and
inserting the subgraph depicted in \Fig{maximum-example-n3}.
More precisely, we add three vertices and ten edges such that the
\TriangleFace{}-node in the corresponding generalized dual is replaced by a
\KiteFace{}-node and four \TriangleFace{}-nodes.

Finally, for $i=4$, a \Maximum{} \NIC{}-planar graph $G_4$ with
$m = \lfloor \frac{18}{5}(5k + 4) \rfloor
= 18k + 14$ edges can be constructed
similarly to the case for $i=2$.
We start again with the graph depicted in \Fig{maximum-example-n4},
which has four black and twelve red vertices
as well as five black and $44$ red edges. %
Here, the construction can already be finished by adding the black
edge connecting two red vertices,
which yields a graph with $4+12 = 16 = 5 \cdot 2 + 2 + 4$ vertices
(\ie, $k = 2$) and $5 + 44 + 1 = 50 = 18 \cdot 2 + 14$ edges.
For a larger graph, we add again the blue subgraph with eight vertices
and $28$ edges as depicted in \Fig{maximum-example-n2}
as well as once more the red subgraph with twelve vertices and $44$ edges
arbitrarily often
prior to connecting the two red vertices by the black edge.
For each repetition $r$, this yields another $8+12=20$ vertices
and $28+44 = 72$ edges.
Hence, each such graph has $16 + 20r$ vertices (\ie, $k = 2 + 4r$) and $50 +
72r = 18 \cdot (2+4r) + 14$ edges.

Note that by \Cor{construction}, optimal graphs, which provide the basis for
the construction in case of $i = 1$ and $i = 3$, only exist for $k \geq 2$ and
that the graphs constructed for $i = 2$ and $i=4$ have $k \geq 4$ and $k \geq
2$, respectively.
\Qed
\end{proof}
From \Cor{construction} and the proof of \Thm{density-intermediate} we obtain:
\begin{corollary}
There are densest \NIC{}-planar graphs for all $n = 5k + 2 + i$ with $i \in
\{0, 1, 3\}$ and all $k \geq 2$.
\end{corollary}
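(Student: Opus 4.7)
My plan is to observe that this corollary is essentially a bookkeeping exercise that collects results already proved or implicit in the preceding arguments: for each of the three values $i \in \{0,1,3\}$ the construction either \emph{is} Corollary~\ref{cor:construction} or is a generic transformation applied to a graph supplied by Corollary~\ref{cor:construction}, so the ``for all $k \geq 2$'' clause follows immediately from the availability of an optimal base graph at every such $k$.

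The case $i = 0$ is literally the statement of Corollary~\ref{cor:construction}: for each $k \geq 2$ there exists an optimal \NIC{}-planar graph on $n = 5k+2$ vertices with $18k = \frac{18}{5}(n-2)$ edges, which is trivially densest.

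For $i = 1$ I would revisit the construction from the proof of \Thm{density-intermediate}, case $i=1$. That construction operates on \emph{any} optimal \NIC{}-planar graph $G_0$ and \emph{any} \Trivial{} triangle in a \NIC{}-planar embedding $\Embedding{G_0}$: insert a new vertex $v$ inside the triangle and connect it to the triangle's three vertices, as in \Fig{maximum-example-n1}. By \Cor{densest-dual-structure}, every face of $\Embedding{G_0}$ is a \Trivial{} triangle, so such a triangle always exists. Hence, feeding in the optimal graph $G_0$ on $5k+2$ vertices guaranteed by \Cor{construction} for each $k \geq 2$, we obtain a \NIC{}-planar graph on $5k+3$ vertices with exactly $18k+3 = \lfloor \frac{18}{5}((5k+3)-2) \rfloor$ edges, which is densest. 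The argument works uniformly for every $k \geq 2$, not just infinitely many, precisely because \Cor{construction} supplies the base graph at \emph{every} admissible $k$.

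The case $i = 3$ is entirely analogous. The construction from the proof of \Thm{density-intermediate}, case $i=3$, also takes an arbitrary optimal \NIC{}-planar graph $G_0$, picks an arbitrary \Trivial{} triangle in a \NIC{}-planar embedding, and replaces the corresponding \TriangleFace{}-node of the generalized dual by a \KiteFace{}-node with four surrounding \TriangleFace{}-nodes (three new vertices and ten new edges as in \Fig{maximum-example-n3}). Applying this to the optimal graph from \Cor{construction} for each $k \geq 2$ yields a densest \NIC{}-planar graph on $5k+5$ vertices with $18k+10$ edges. There is no real obstacle; the only thing to check---and the one place where the reader should not take things on faith---is that each of these insertion gadgets preserves the structural requirements of \Lem{dual-adjacencies} when attached at an arbitrary \TriangleFace{}-node, which is the content of the relevant portions of the proof of \Thm{density-intermediate} and requires no further work.
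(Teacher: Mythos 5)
Your proposal is correct and matches the paper's (implicit) argument exactly: the paper derives this corollary in one line ``From \Cor{construction} and the proof of \Thm{density-intermediate}'', which is precisely your decomposition---$i=0$ is \Cor{construction} itself, and for $i\in\{1,3\}$ the triangle-insertion gadgets from the proof of \Thm{density-intermediate} apply to the optimal base graph that \Cor{construction} supplies at \emph{every} $k\geq 2$, upgrading ``infinitely many $k$'' to ``all $k\geq 2$'' for exactly those two values of $i$. Nothing further is needed.
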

Concerning \IC{}-planar graphs, there are optimal ones
with $\frac{13}{4}n - 6$ edges for all $n = 4k$ and $k \geq 2$
\cite{zl-spgic-13}.
A densest \IC{}-planar graph of size $n \geq 8$ is obtained from an optimal one
of size $4k$ with $k = \lfloor \frac{n}{4} \rfloor$ by the replacement of $i$
\TriangleFace-nodes by \TetrahedronFace{}-nodes, for $i = 1, 2, 3$.
Each \TetrahedronFace{}-node adds one vertex and three edges.
There are sparsest \IC{}-planar graphs with $3n - 5$ edges for all $n
\geq 5$, and this bound is tight \cite{bbh-nipg-17}.

\section{Recognizing Optimal \NIC{}-Planar Graphs in Linear Time}
\SectLabel{densest-recognition}
The study of maximal \NIC{}-planar graphs in \Sect{dual}
also provides a key to a linear-time recognition algorithm for
optimal \NIC{}-planar graphs.
Therefore, we
establish a few more properties of optimal \NIC{}-planar graphs and
their embeddings $\Embedding{G}$.
An edge $e$ is called $k$-fold $K_4$-\emph{covered} in $G$ if $e$ is
part of $k$ $K_4$ subgraphs.
We start with the following observation:
\begin{lemma}
\LemLabel{kite-cover}
Let $\Embedding{G}$ be a \NIC{}-planar embedding of an optimal
\NIC{}-planar graph $G$.
Every edge $e \in E$ is at least 1-fold $K_4$-covered
and there is exactly one $K_4$
in $G$ that contains $e$ and is embedded as a kite in $\Embedding{G}$.
\end{lemma}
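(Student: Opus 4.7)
The plan is to exploit the bipartite structure of the generalized dual $\Dual{G}$ granted to every optimal \NIC{}-planar graph by \Cor{densest-dual-structure}: every node of $\Dual{G}$ is either a \KiteFace{}- or a \TriangleFace{}-node, and no edges run inside either part. Combining this with \Lem{dual-adjacencies}(i), which forbids any two \KiteFace{}-nodes from being adjacent, forces all three neighbors of every \TriangleFace{}-node to be \KiteFace{}-nodes. In primal terms, this says that each of the three planar edges bounding a \Trivial{} triangle face of $\Embedding{G}$ separates it from a face of some kite, and hence is one of the four planar edges of the corresponding $K_4$.

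With this observation in hand, existence is immediate. If $e$ is crossed, then \Lem{crossing-implies-kite} directly exhibits a kite-embedded $K_4$ containing $e$ as one of its two crossing edges. If $e$ is planar, then \Lem{kite-or-triangle} guarantees both faces adjacent to $e$ are triangles, and by the above observation at least one of them must be a kite face; the $K_4$ forming that kite contains $e$. In either case, $e$ is at least $1$-fold $K_4$-covered. For uniqueness, suppose $K \neq K'$ are two kite-embedded $K_4$s of $G$ that both contain $e$. If $e$ is crossed in $\Embedding{G}$, then $e$ is a crossing edge of both $K$ and $K'$; but \One{}-planarity gives exactly one edge crossing $e$, which together with \Lem{crossing-implies-kite} determines the kite uniquely, so $K = K'$. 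If $e$ is planar, it must be a planar edge of both kites, so the two faces on either side of $e$ are kite faces, producing adjacent \KiteFace{}-nodes in $\Dual{G}$ and contradicting \Lem{dual-adjacencies}(i).

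The main subtlety is being precise about where optimality is used: \Cor{densest-dual-structure} is invoked only in the existence argument for planar edges, to rule out a planar edge having \emph{both} adjacent faces be \Trivial{} triangles (a configuration that can appear in a non-optimal maximal \NIC{}-planar graph, for instance when a \TetrahedronFace{}-node is present). The uniqueness half, by contrast, relies purely on the general non-adjacency of \KiteFace{}-nodes asserted by \Lem{dual-adjacencies}(i).
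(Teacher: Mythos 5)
Your proof is correct and follows essentially the same route as the paper: existence of a covering kite comes from \Cor{densest-dual-structure} (the bipartite dual forces every \Trivial{} triangle to border only kite faces), and uniqueness comes from the fact that two kites cannot share an edge without violating \NIC{}-planarity. The only cosmetic difference is that you split uniqueness into a crossed and a planar case, routing the latter through \Lem{dual-adjacencies}(\ref{itm:dual-adj-kite}), whereas the paper invokes the underlying ``two common pairs of crossing edges'' argument directly and uniformly.
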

\begin{proof}
\Cor{densest-dual-structure} implies that every edge $e \in E$ is at least once
$K_4$-covered.
Moreover, if $e$ is contained in at least two different $K_4$ inducing
subgraphs that are both embedded as a kite, then $e$'s end vertices
are incident to two common pairs of crossing edges, a contradiction
to \NIC{}-planarity.
\Qed
\end{proof}
In \Sect{basics} we already noted that in any \NIC{}-planar
embedding, every subgraph that induces $K_4$ must be embedded
either as a kite or as a tetrahedron, which may in turn be simple
or not.
Recall that in case of a non-simple tetrahedron embedding, $K_4$'s edges may
not cross each other, but they may be crossed by edges that do not belong to
this subgraph.
The following lemma limits the possibilities of how the $K_4$ subgraphs and
their embeddings can interact.
\begin{lemma}
\LemLabel{tetrahedron-uncovered}
Let $\Embedding{G}$ be a \NIC{}-planar embedding of a maximal \NIC{}-planar
graph $G$
and let $\Edge{a}{c}$ and $\Edge{b}{d}$ be two edges that cross each other in
$\Embedding{G}$.
Then one of $\Edge{a}{c}$ and $\Edge{b}{d}$ is 1-fold $K_4$-covered.
\end{lemma}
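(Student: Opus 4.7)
The plan is to argue by contradiction: suppose that both $\Edge{a}{c}$ and $\Edge{b}{d}$ are contained in at least two $K_4$ subgraphs of $G$.

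By \Lem{crossing-implies-kite}, $G[\{a,b,c,d\}]$ is itself a $K_4$ embedded as a kite, which accounts for one $K_4$-cover of each crossing edge. Thus the assumption yields a pair $\{x,y\}\neq\{b,d\}$ with $G[\{a,c,x,y\}]$ a $K_4$ containing $\Edge{a}{c}$, and a pair $\{p,q\}\neq\{a,c\}$ with $G[\{b,d,p,q\}]$ a $K_4$ containing $\Edge{b}{d}$. Since $a$ and $c$ are already incident to the crossing pair $\{\Edge{a}{c},\Edge{b}{d}\}$, \NIC{}-planarity forbids them from being incident to another crossing pair. In particular, no edge of $G[\{a,c,x,y\}]$ can cross another edge of the same subgraph, so this $K_4$ is embedded as a tetrahedron; symmetrically, $G[\{b,d,p,q\}]$ is also embedded as a tetrahedron.

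In the tetrahedron embedding of $G[\{a,c,x,y\}]$, the edge $\Edge{a}{c}$ is planar within the subgraph and borders two of its four triangular subgraph faces---one incident to $x$, the other to $y$. Transferred to $\Embedding{G}$, where the Jordan arc of $\Edge{a}{c}$ bisects the kite interior, this forces $x$ and $y$ to lie on opposite sides of that arc; \wilog{} $y$ lies on the $b$-side and $x$ on the $d$-side. Analogously, \wilog{} $p$ lies on the $a$-side and $q$ on the $c$-side of $\Edge{b}{d}$, so that $y$ and $p$ occupy the ``northwest'' sector just outside the kite.

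I would then invoke \Lem{kite-or-triangle}: every face of $\Embedding{G}$ is a triangle, so the face of $\Embedding{G}$ outside the kite and incident to $\Edge{a}{b}$ is a triangle $(a,b,z)$ with $z$ a common neighbor of $a$ and $b$. Repeating this for the remaining three outer kite edges, and combining it with the positions of $x,y,p,q$ and the rotations at $a$ and $b$, one identifies $z$ (typically as $y$ or $p$) and forces further edges such as $\Edge{b}{y}$ or $\Edge{a}{p}$. A systematic case analysis---splitting on whether $\{x,y\}$ or $\{p,q\}$ overlaps the kite vertices---then either exhibits a face whose triangulation admits a new edge insertable without violating \NIC{}-planarity (contradicting the maximality of $G$), or produces a second crossing pair sharing at least two vertices with $\{\Edge{a}{c},\Edge{b}{d}\}$ (violating \NIC{}-planarity). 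The main obstacle is precisely this case analysis, as the overlap subcases ($x=b$, $q=c$, and so on) require invoking triconnectivity (\Cor{triconnected}) and the existence of a further vertex beyond the kite to pin down the face or edge that yields the desired contradiction.
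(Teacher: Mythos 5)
Your setup matches the paper's: you correctly reduce to the case where both additional $K_4$s containing $\Edge{a}{c}$ and $\Edge{b}{d}$ must be embedded as tetrahedra (since $a,c$ and $b,d$ cannot be incident to a second pair of crossing edges), and you correctly observe that their extra vertices lie outside the kite on $\{a,b,c,d\}$. But the proof is not finished: everything after that is a promissory note for a ``systematic case analysis'' that you explicitly identify as the main obstacle and do not carry out. The paper closes the argument with one short topological step that your sketch is missing. Take a single extra vertex $x$ of the tetrahedron containing $\Edge{a}{c}$ and a single extra vertex $y$ of the tetrahedron containing $\Edge{b}{d}$. The planar cycle $(a,b,c,d)$ bounding the kite separates the plane, and by \Lem{kite-or-triangle} its interior consists only of the four kite faces, so every face incident to $x$ or $y$ lies in the exterior; hence the $2$-paths $(a,x,c)$ and $(b,y,d)$ are routed entirely in the exterior. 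Since $a,c$ interleave with $b,d$ in the cyclic order along that quadrilateral, these two paths must cross, and whichever of the four edge pairs crosses, the resulting pair of crossing edges shares two end vertices with $\{\Edge{a}{c},\Edge{b}{d}\}$, contradicting \NIC{}-planarity directly. No maximality argument, face-by-face triangulation, or triconnectivity is needed.

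Two further cautions. First, your claim that the arc of $\Edge{a}{c}$ ``bisects the kite interior'' and thereby places $x$ and $y$ on opposite sides is not well-founded as stated: an open arc does not separate the plane, and both extra vertices of the tetrahedron on $\{a,c,x,y\}$ sit outside the kite, so pinning them to ``sectors'' requires the quadrilateral separation argument anyway --- at which point the sector bookkeeping becomes unnecessary. Second, your case split on whether $\{x,y\}$ or $\{p,q\}$ overlaps the kite vertices is largely avoidable: each additional $K_4$ must contain at least one vertex outside $\{a,b,c,d\}$ (otherwise it would coincide with $G[\{a,b,c,d\}]$), and one such vertex per tetrahedron suffices for the crossing argument above.
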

\begin{figure}[tb]
\centering
\begin{tikzpicture}
\node[inner sep=0pt,draw=none] (l) {
  \includegraphics[scale=.7]{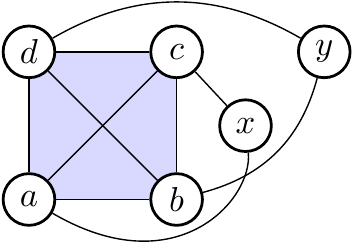}%
  \phantomsubcaption\label{fig:kite-uncover-a}};
\node[anchor=center] at (l.north west) {(\subref*{fig:kite-uncover-a})};
\node[inner sep=0pt,draw=none,anchor=west] (r) at ($(l.east) + (.3cm,0)$) {
  \includegraphics[scale=.7]{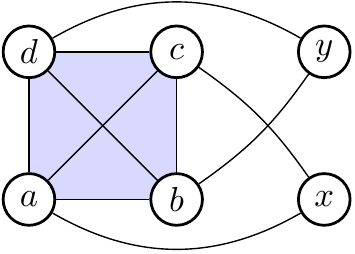}%
  \phantomsubcaption\label{fig:kite-uncover-b}};
\node[anchor=center] at (r.north west) {(\subref*{fig:kite-uncover-b})};
\end{tikzpicture}
\caption{%
Proof of \Lem{tetrahedron-uncovered}.
}
\label{fig:proof-tetra-uncover}
\end{figure}
\begin{proof}
Let $\Kfour = G[a,b,c,d]$.
By \Lem{crossing-implies-kite}, $\Kfour$ is $K_4$.
Suppose for the sake of contradiction that there are subgraphs $\Kfour' \neq
\Kfour$ and $\Kfour'' \neq \Kfour$ of $G$ that both induce $K_4$ and such that
$\Kfour'$ contains $\Edge{a}{c}$ and $\Kfour''$ contains $\Edge{b}{d}$.
Note that $\Kfour' \neq \Kfour''$, otherwise both had vertex set $\{a,b,c,d\}$
and hence, $\Kfour = \Kfour' = \Kfour''$.
Furthermore, both $\Kfour'$ and $\Kfour''$ must be embedded as tetrahedrons,
otherwise, $a,c$ and $b,d$ were incident to two pairs of crossing edges.
Denote by $x \neq y \in V \setminus \{a,b,c,d\}$ two further vertices of $G$
such that $\Kfour'$ contains $x$ and $\Kfour''$ contains $y$.

Consider the closed path $(a,b,c,d)$ of planar edges, which partitions the
set of faces of $\Embedding{G}$ into $P'$ and $P''$.
Due to \Lem{kite-or-triangle}, one of these partitions, \wilog{} $P'$, contains
only the non-\Trivial{} triangles that form the kite embedding of $\Kfour$.
Hence, all faces incident to $x$ and $y$ must reside within $P''$
and the edges or edge segments of $\Edge{a}{x}$, $\Edge{c}{x}$, $\Edge{b}{y}$,
and $\Edge{d}{y}$ only bound faces contained in $P''$.
Consequently, the paths $(a,x,c)$ and $(b,y,d)$ must cross each other in
$P''$.

Recall that $a$, $b$, $c$, and $d$ are already pairwisely incident to a pair of
crossing edges, namely $\Edge{a}{c}$ and $\Edge{b}{d}$.
\Fig{proof-tetra-uncover} shows two of the four possible pairs of additional
crossing edges.
Suppose that $\Edge{a}{x}$ crosses $\Edge{b}{y}$.
Then, $a$ and $c$ are incident to another pair of crossing edges, a
contradiction to the \NIC{}-planarity of $\Embedding{G}$.
Likewise, if $\Edge{c}{x}$ crosses $\Edge{b}{y}$,
or $\Edge{a}{x}$ crosses $\Edge{d}{y}$,
or $\Edge{c}{x}$ crosses $\Edge{d}{y}$,
then $b$ and $c$, or $a$ and $d$, or $c$ and $d$,
respectively, are incident to two common pairs of crossing edges, thereby again
contradicting the \NIC{}-planarity of $\Embedding{G}$.

Subsequently, $\Kfour'$ and $\Kfour''$ cannot both exist.
\Qed
\end{proof}
The combination of \Lem{kite-cover} and \Lem{tetrahedron-uncovered} yields
a characterization of those $K_4$ inducing subgraphs that are embedded as kite:
\begin{corollary}
\CorLabel{densest-iff}
Let $\Kfour$ be a subgraph inducing $K_4$ in an optimal \NIC{}-planar graph $G$
and let $\Embedding{G}$ be a \NIC{}-planar embedding of $G$.
Then, $\Kfour$ is embedded as a kite in $\Embedding{G}$ if and only if
$\Kfour$ has a 1-fold $K_4$-covered edge.
\end{corollary}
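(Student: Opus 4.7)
The plan is to prove the two directions separately and independently, relying directly on \Lem{kite-cover} and \Lem{tetrahedron-uncovered}. Neither direction appears to require any new construction; both follow by combining the uniqueness assertions of the two lemmas with the fact that an optimal \NIC{}-planar graph is in particular maximal.

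For the forward direction, I would assume that $\Kfour$ is embedded as a kite in $\Embedding{G}$ and invoke \Lem{crossing-implies-kite} to name its two crossing edges, say $\Edge{a}{c}$ and $\Edge{b}{d}$. Since $G$ is maximal \NIC{}-planar, \Lem{tetrahedron-uncovered} applies and yields that one of these two edges is $1$-fold $K_4$-covered. That edge lies in $\Kfour$, which gives the desired witness.

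For the converse, I would let $e$ be a $1$-fold $K_4$-covered edge of $\Kfour$; by hypothesis the unique $K_4$ of $G$ containing $e$ is $\Kfour$ itself. \Lem{kite-cover} guarantees that among the $K_4$ subgraphs of $G$ containing $e$, exactly one is embedded as a kite in $\Embedding{G}$. Since $\Kfour$ is the only candidate, $\Kfour$ must be embedded as a kite.

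The argument is essentially a bookkeeping combination of the two preceding lemmas, so there is no real obstacle; the only point to be careful about is to cite the right hypothesis of each lemma (maximality for \Lem{tetrahedron-uncovered}, optimality for \Lem{kite-cover}) and to note that $e$ being $1$-fold $K_4$-covered forces $\Kfour$ to be the unique $K_4$ through $e$, which is what collapses the kite-uniqueness statement of \Lem{kite-cover} onto $\Kfour$.
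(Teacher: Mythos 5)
Your proposal is correct and follows essentially the same route as the paper: the forward direction is exactly the paper's application of \Lem{tetrahedron-uncovered} to the crossing edges of the kite, and your converse is just the direct (rather than contrapositive) form of the paper's observation that \Lem{kite-cover} forces the unique $K_4$ through a $1$-fold covered edge to be the one embedded as a kite. Your added care about which hypothesis each lemma needs (maximality versus optimality, with optimal implying maximal) is consistent with, and slightly more explicit than, the paper's two-line proof.
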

\begin{proof}
Let $\{a,b,c,d\}$ denote the vertex set of $\Kfour$.
By \Lem{kite-cover}, every edge of $\Kfour$ is at least once covered by
a $K_4$ which is embedded as a kite.

If $\Kfour$ is embedded as a kite, then one of its crossing edges is 1-fold
$K_4$-covered by \Lem{tetrahedron-uncovered}, and if $\Kfour$ is embedded as a
tetrahedron, then each of its edges is at least $2$-fold $K_4$-covered.
\Qed
\end{proof}
\begin{algorithm}[tb]
\caption{A recognition algorithm for optimal \NIC{}-planar graphs.}
\label{alg:densest-recognition}
\begin{algorithmic}[1]
\Require{graph $G = (V, E)$ with $n = |V| \geq 5$ and $m=|E|$}
\Ensure{\NIC{}-planar embedding $\Embedding{G}$ if $G$ is optimal \NIC{}-planar, else
$\bot$}
\Procedure{TestOptimalNIC}{G}
\If{$m \neq \frac{18}{5}(n-2)$}
\Return $\bot$
\label{algl:num-edges}
\Comment{$G$ is not optimal.}
\EndIf
\State $\KfourSet \gets$ set of $K_4$s in $G$ or $\bot$ in case of timeout
\If{$\KfourSet = \bot$}
\Return $\bot$
\label{algl:arboricity}
\EndIf
\State $\KiteSet \gets \emptyset$
\State create an empty bucket $B[e] = \emptyset$ for each edge $e \in G$
\ForAll{$\Kfour \in \KfourSet$}
\State add $\Kfour$ to every bucket $B[e]$ for every edge $e$ of $\Kfour$
\EndFor
\ForAll{$e \in E$}
\If{$B[e] = \{ \Kfour \}$}
\State $\KiteSet \gets \KiteSet \cup \{\Kfour\}$
\label{algl:lonely-edge}
\Comment{$\Kfour$ must be embedded as kite by \Cor{densest-iff}.}
\EndIf
\EndFor
\ForAll{$e \in E$}
\If{$|B[e] \cap \KiteSet| \neq 1$}
\Return $\bot$
\label{algl:one-kite}
\Comment{\Lem{kite-cover} is violated.}
\EndIf
\EndFor
\State $G' \gets G$
\ForAll{$\Kfour \in \KiteSet$}
\State remove all edges of $\Kfour$ in $G'$
\State add a dummy vertex $z_{\Kfour}$ along with edges to all vertices of $\Kfour$ in $G'$
\EndFor
\If{$G'$ is not planar}
\Return $\bot$
\label{algl:nonplanar-planarization}
\EndIf
\State $\Embedding{G'} \gets$ planar embedding of $G'$
\State $\Embedding{G} \gets$ $\NIC{}$-planar embedding of $G$ obtained from $\Embedding{G'}$
\State \Return $\Embedding{G}$
\EndProcedure
\end{algorithmic}
\end{algorithm}

Now we are ready to prove the main result of this section.
\begin{theorem}
\ThmLabel{densest-recognition}
There is a linear-time algorithm that decides whether a graph is
optimal \NIC{}-planar and, if positive, returns a
\NIC{}-planar embedding.
\end{theorem}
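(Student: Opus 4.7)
The plan is to verify that Algorithm~\ref{alg:densest-recognition} is correct and runs in linear time, invoking the structural results already proved for maximal and optimal \NIC{}-planar graphs together with the Hopcroft--Tarjan planarity test and the Chiba--Nishizeki clique enumeration.

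\textbf{Correctness.} First, if $G$ is optimal \NIC{}-planar, then $m=\tfrac{18}{5}(n-2)$ by \Thm{density}, so line~\ref{algl:num-edges} is passed. Let $\Embedding{G}$ be any \NIC{}-planar embedding. By \Lem{kite-cover}, every edge lies in exactly one $K_4$ that is embedded as a kite; by \Cor{densest-iff}, a $K_4$ is embedded as a kite in $\Embedding{G}$ if and only if it contains a 1-fold $K_4$-covered edge. Hence the set $\KiteSet$ collected in line~\ref{algl:lonely-edge} equals exactly the set of kite-$K_4$s of $\Embedding{G}$. The scan in line~\ref{algl:one-kite} then verifies the conclusion of \Lem{kite-cover} that each edge lies in exactly one kite; failure of this test rules out optimality. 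Conversely, if $G$ passes all tests, consider the graph $G'$ built in the final loop: from $G$ we remove the six edges of each $\Kfour \in \KiteSet$ and insert a dummy vertex $z_{\Kfour}$ of degree~$4$ joined to the four vertices of $\Kfour$. If an optimal \NIC{}-planar embedding exists, then $G'$ is precisely the planarization of that embedding after contracting the four half-edges at each crossing to the dummy vertex; by \Cor{densest-dual-structure} the generalized dual contains only \KiteFace{}- and \TriangleFace{}-nodes, so $G'$ is a triangulated planar graph and hence planar. Line~\ref{algl:nonplanar-planarization} correctly rejects in the other direction. If $G'$ is planar, its (essentially unique) planar embedding $\Embedding{G'}$ yields a \NIC{}-planar embedding of $G$ by replacing every dummy vertex $z_{\Kfour}$ with the kite drawing of $\Kfour$; by \Cor{optimal-unique-embedding} this embedding is unique up to isomorphism, so the output is well defined.

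\textbf{Linear running time.} Line~\ref{algl:num-edges} is $O(n+m)$. For line~\ref{algl:arboricity} we use the Chiba--Nishizeki enumeration, which lists all $K_4$s in time $O(a(G)^2\,m)$, where $a(G)$ denotes the arboricity. An input that passed line~\ref{algl:num-edges} has $m=\tfrac{18}{5}(n-2)$, hence $a(G)=O(1)$, and the enumeration therefore costs $O(n)$. The timeout aborts in $O(n)$ steps on inputs violating this bound. The number of reported $K_4$s is itself $O(n)$: each $K_4$ contributes six bucket insertions, and by maximality one checks that $|\KfourSet|=O(n)$ (in an optimal graph each kite is one $K_4$, and any additional $K_4$ consumes at least one planar edge by \Cor{densest-iff}, so the total is bounded by $O(m)=O(n)$). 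Consequently, bucket construction and the two verification scans take $O(n)$ time. Building $G'$ takes $O(n)$, the Hopcroft--Tarjan planarity test on $G'$ takes $O(n)$, and recovering $\Embedding{G}$ from $\Embedding{G'}$ by substituting kites for dummy vertices is linear.

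\textbf{Main obstacle.} The non-trivial point is bounding the cost of line~\ref{algl:arboricity}: an arbitrary graph may contain $\Theta(n^4)$ copies of $K_4$, so a naive enumeration destroys the time bound. The argument that passing the edge-count test forces $m=O(n)$, hence bounded arboricity and hence linear enumeration via Chiba--Nishizeki, is the key ingredient; the timeout in line~\ref{algl:arboricity} is only a safety net on inputs that already satisfy this density bound but for which we wish a uniform linear bound independent of hidden constants. The remaining steps are straightforward bookkeeping on top of the structural results \Lem{kite-cover}, \Lem{tetrahedron-uncovered}, and \Cor{densest-iff}.
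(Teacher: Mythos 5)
Your algorithm and correctness argument coincide with the paper's: the same use of \Lem{kite-cover} and \Cor{densest-iff} to identify the kite $K_4$s via 1-fold covered edges, the same per-edge verification, and the same reduction to a planarity test after replacing each kite by a star with a dummy center. The genuine gap is in the running-time analysis of the $K_4$ enumeration. You claim that passing the edge-count test, i.e.\ $m = \frac{18}{5}(n-2)$, forces $a(G) = O(1)$ and hence a linear-time Chiba--Nishizeki run, and you accordingly downgrade the timeout to a ``safety net.'' This is false: by Nash--Williams, arboricity is governed by the densest \emph{subgraph}, not by the global edge count. A graph containing a clique on $\Theta(\sqrt{n})$ vertices can still have exactly $\frac{18}{5}(n-2)$ edges in total, yet it has arboricity $\Theta(\sqrt{n})$ and $\Theta(n^2)$ copies of $K_4$, so an untruncated enumeration is superlinear on such inputs.

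The timeout is therefore essential, not cosmetic, and it carries a proof obligation you do not discharge: one must argue that rejecting on timeout is \emph{correct}, i.e.\ that every maximal (hence every optimal) \NIC{}-planar graph has arboricity at most four and is therefore guaranteed to complete the enumeration within a fixed budget of $O(n)$ essential steps; the paper fixes the budget at $256n$, invokes Nash--Williams for the arboricity bound, and uses the $27n$ bound of Chen~\ea{} on the number of $K_4$s in triangulated \One{}-planar graphs. Relatedly, your bound $|\KfourSet| = O(n)$ is argued only under the assumption that $G$ is already optimal \NIC{}-planar, which is circular for a time bound that must hold on all inputs; the correct source of that bound is again the step budget (or the $27n$ bound once the input survives the timeout). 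The remainder of the proposal matches the paper's proof.
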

\begin{proof}
Consider the algorithm given in \Alg{densest-recognition}, which
takes a graph $G$ as input and either returns a \NIC{}-planar
embedding $\Embedding{G}$ if $G$ is optimal \NIC{}-planar
and otherwise returns $\bot$.

Let $G = (V,E)$.
First, if the number of edges $m$ of $G$ does not meet the
upper bound of $\frac{18}{5}(n-2)$, $G$ cannot be
optimal \NIC{}-planar.
The algorithm therefore returns $\bot$ in line~\ref{algl:num-edges} if this
check fails.
For the remainder of the algorithm, we can assume that $m \in \bigO(n)$.

Next, we identify those $K_4$ inducing subgraphs of $G$ that must
be embedded as a kite.
To this end, enumerate all subgraphs of $G$ that induce $K_4$ and keep them as
set $\KfourSet$.
This can be accomplished in linear time by running
the algorithm of Chiba and Nishizeki \cite{cn-asl-85} for at most $256 n$
essential steps.
A step is essential if it marks a vertex or an edge.
The inessential steps, like unmark and print, take linear time in the number
of essential steps.
If $G$ is maximal \NIC{}-planar, then $G$ has arboricity four
\cite{n-dfgf-61} and the algorithm completes the computation of the
$K_4$ listing within $256 n$ essential steps.
Otherwise, if the algorithm by Chiba and Nishizeki exceeds the bound on the
running time, $G$ does not have arboricity four and is hence not \NIC{}-planar.
Then our algorithm returns $\bot$ in line~\ref{algl:arboricity}.
Chen \ea \cite{cgp-rh4mg-06} have shown that triangulated 1-planar graphs of
size $n$ have at most $27 n$ $K_4$ subgraphs.
The subset $\KfourSet_{\times}$ of $\KfourSet$ that will later contain
those with kite embeddings is set to $\emptyset$.
Initialize an empty bucket $B[e]$ for every edge $e \in E$.
We now employ a variant of bucket sort on $\KfourSet$ and place a copy
of every element $\Kfour \in \KfourSet$ in all six buckets that represent an
edge of $\Kfour$.
As the size of $\KfourSet$ is linear in the size of $G$, this takes $\bigO(n)$
time.

Afterwards, we loop over the edges of $G$ and apply \Cor{densest-iff}:
If an edge is contained in exactly one $\Kfour \in \KfourSet$, then $\Kfour$
must be embedded as a kite and is therefore added to $\KfourSet_{\times}$ in
line~\ref{algl:lonely-edge}.
This can be accomplished in time $\bigO(m) = \bigO(n)$.
Having identified the $K_4$ inducing subgraphs that must be embedded with a
crossing, we can check whether every edge is contained in exactly one
kite as required by \Lem{kite-cover} again in $\bigO(n)$ time.

The last step in the algorithm consists in identifying the pairs of crossing
edges and, if possible, obtaining a \NIC{}-planar embedding of $G$.
For this purpose, we construct a graph $G'$ from $G$ as follows:
For every $\Kfour \in \KfourSet_{\times}$ with vertex set $\{a,b,c,d\}$,
remove all edges connecting $a$, $b$, $c$, and $d$.
Then, add a new dummy vertex $z_{\Kfour}$ along with edges
$\Edge{a}{z_{\Kfour}}$, $\Edge{b}{z_{\Kfour}}$,
$\Edge{c}{z_{\Kfour}}$, $\Edge{d}{z_{\Kfour}}$.
Thus, $\Kfour$ is replaced by a star with center $z_{\Kfour}$.
This construction requires again $\bigO(n)$ time.

Observe that $G'$ is a subgraph of every planarization of
$G$ with respect to any \NIC{}-planar embedding of $G$.
Hence, if $G'$ is not planar, then $G$ cannot be optimal
\NIC{}-planar, so the algorithm returns $\bot$ in
line~\ref{algl:nonplanar-planarization}.
Otherwise, we obtain a planar embedding $\Embedding{G'}$ of $G$.
This can be done in time $\bigO(n)$.

Next, construct an embedding $\Embedding{G}$ of $G$ from $\Embedding{G'}$
by replacing every dummy vertex $z_{\Kfour}$ and its incident edges
by the edges of $\Kfour$.
The edges incident to $z_{\Kfour}$ are taken as non-\Trivial{} edge segments
and the remaining edges are routed close to these edges.
This corresponds to replacing the four \Trivial{} triangles incident to
$z_{\Kfour}$ in $\Embedding{G'}$ one-to-one by four non-\Trivial{} triangles
forming the kite embedding of $\Kfour$ and takes again $\bigO(n)$ time.
$\Embedding{G}$ now is a \NIC{}-planar embedding of $G$ such that exactly the
elements of $\KfourSet_{\times}$ are embedded with a crossing.
As the number of edges in $G$ meets the upper bound of $3.6(n-2)$
exactly, $G$ is optimal \NIC{}-planar
and the algorithm returns $\Embedding{G}$ as a witness.
The overall running time is in $\bigO(n)$.
\Qed
\end{proof}

\section{Drawing \NIC{}-Planar Graphs}
\SectLabel{nic-is-not-rac}
Every \NIC{}-planar graph is a subgraph of a maximal \NIC{}-planar graph and
every \NIC{}-planar embedding with $n \geq 5$ has a \Trivial{} triangle which
we use as outer face.
By \Cor{triconnected} maximal \NIC{}-planar graphs are triconnected.
Hence, the linear-time algorithm of Alam \ea{}~\cite{abk-sld3c-13}
for \One{}-planar graphs can be used.
\begin{corollary}
  Every \NIC{}-planar graph has a \NIC-planar straight-line drawing on
  an integer grid of $\bigO(n^2)$ size.
\end{corollary}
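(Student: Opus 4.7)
The plan is to reduce the statement to the known straight-line grid-drawing result of Alam~\ea{}~\cite{abk-sld3c-13} for triconnected \One{}-planar graphs. Let $G$ be a \NIC{}-planar graph on $n$ vertices; I will first augment $G$ to a maximal \NIC{}-planar supergraph $G^+$ on the same vertex set together with a \NIC{}-planar embedding $\Embedding{G^+}$ (which is possible by definition of maximality). For $n < 5$ the claim is trivial, so assume $n \geq 5$.

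By \Cor{triconnected}, $G^+$ is triconnected, and by \Lem{kite-or-triangle} every face of $\Embedding{G^+}$ is a (possibly non-\Trivial{}) triangle; in particular, since not every face can bound a kite, $\Embedding{G^+}$ contains at least one \Trivial{} triangle $f_0$. Picking $f_0$ as the outer face, I would feed $G^+$ together with the embedding $\Embedding{G^+}$ into the algorithm of Alam~\ea{}, which produces a \One{}-planar straight-line drawing of $G^+$ on an integer grid of size $\bigO(n^2)$ that realizes this fixed embedding. Since the realized embedding is $\Embedding{G^+}$, which is \NIC{}-planar by construction, the resulting drawing is a \NIC{}-planar straight-line drawing of $G^+$.

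Finally, to obtain a drawing of $G$ itself, delete the straight-line segments corresponding to the edges of $G^+ \setminus G$. The vertex positions are unchanged, so the grid size remains in $\bigO(n^2)$, and removing edges from a \NIC{}-planar drawing cannot introduce any new crossings or violate the \NIC{} property. Hence $G$ has the required drawing.

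The only mild obstacle is verifying that Alam~\ea{}'s algorithm actually accepts the fixed embedding $\Embedding{G^+}$ with the chosen \Trivial{} outer face: their result applies to triconnected \One{}-planar graphs and, as noted in the excerpt right before the statement, its hypotheses are precisely those guaranteed by \Cor{triconnected} together with the existence of a \Trivial{} triangular face. Everything else is bookkeeping: augmentation preserves the vertex set, triangulation by \Lem{kite-or-triangle} supplies the outer face, and edge deletion preserves \NIC{}-planarity and grid size.
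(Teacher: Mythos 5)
Your proposal is correct and follows essentially the same route as the paper: augment to a maximal \NIC{}-planar supergraph, invoke \Cor{triconnected} and the existence of a \Trivial{} triangle to serve as outer face, apply the algorithm of Alam~\ea{}~\cite{abk-sld3c-13}, and delete the added edges. The paper states exactly this chain of observations in the paragraph preceding the corollary, so there is nothing to add.
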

A graph $G$ has \emph{geometric thickness} $k$ if $G$ admits a straight-line
drawing in the plane such that there is a $k$-coloring of the edges and
edges with the same color do not cross \cite{deh-dtcg-00}.
\begin{corollary}
  Every \NIC{}-planar graph has geometric thickness two.
\end{corollary}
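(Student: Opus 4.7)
The plan is to combine the preceding corollary with the structural fact that in a \One{}-planar (and hence \NIC{}-planar) drawing every edge participates in at most one crossing. First, invoke the previous corollary to obtain a straight-line \NIC{}-planar drawing $\Drawing{G}$ of $G$. Every edge of $\Drawing{G}$ is either planar or is one of the two edges of a unique crossing pair.

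Next, define a 2-coloring $\chi : E \to \{1,2\}$ as follows. For each crossing pair $\{\Edge{a}{c},\Edge{b}{d}\}$ in $\Drawing{G}$, assign $\chi(\Edge{a}{c}) = 1$ and $\chi(\Edge{b}{d}) = 2$ (the choice of which edge receives which color is arbitrary; crossing pairs are pairwise disjoint as edge-sets because every edge is crossed at most once). Assign every planar edge the color $1$ (or $2$; the choice is irrelevant). This is well-defined because the crossing pairs form a matching in the ``crosses'' relation on $E$.

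Finally, verify that no two edges of the same color cross in $\Drawing{G}$. If an edge $e$ is planar, it has no crossings at all, so it cannot conflict with any edge. If $e$ is one member of a crossing pair, then the only edge it crosses in $\Drawing{G}$ is its partner $e'$, and by construction $\chi(e)\neq \chi(e')$. Hence the two color classes each induce a straight-line subdrawing without crossings, which witnesses geometric thickness at most two. Since \NIC{}-planar graphs in general need not be planar, the thickness is exactly two for non-planar instances.

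There is essentially no obstacle beyond verifying that the straight-line drawing supplied by the preceding corollary really is \One{}-planar (so that every edge has at most one crossing), which is precisely the content of that corollary; the 2-coloring step is immediate from the matching structure of crossings.
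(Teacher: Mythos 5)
Your proof is correct and is exactly the argument the paper intends: the paper states this as an immediate corollary of the preceding straight-line drawing result, relying on precisely the observation you spell out, namely that crossings in a 1-planar (hence \NIC{}-planar) drawing pair up edges into a matching, so crossing partners can be given distinct colors. No gap; you have simply written out the step the paper leaves implicit.
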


However, \NIC{}-planar graphs do not necessarily admit straight-line
drawings with right angle crossings. In consequence, the classes of
\NIC{}-planar graphs and \RAC{} graphs are incomparable, since \RAC{} graphs
may be too dense \cite{del-dgrac-11}.
\begin{theorem}\ThmLabel{notrac}
There are infinitely many \NIC{}-planar graphs that are not \RAC{} graphs, and
conversely.
\end{theorem}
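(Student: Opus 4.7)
The statement has two directions that I would handle by completely different means. For the easy direction (infinitely many \RAC{} graphs that are not \NIC{}-planar), I would just compare densities: an optimal \RAC{} graph on $n = 3k+5$ vertices has $4n-10$ edges \cite{del-dgrac-11}, while every \NIC{}-planar graph has at most $\frac{18}{5}(n-2) = 3.6n - 7.2$ edges by \Thm{density}. Since $4n-10 > 3.6n - 7.2$ for all $n \geq 8$, and Didimo et al.\ exhibit such optimal \RAC{} graphs for every $k \geq 3$, all but finitely many members of that family are too dense to be \NIC{}-planar, yielding infinitely many examples.

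For the harder direction, the plan is to first exhibit a single \NIC{}-planar graph $H$ that is not \RAC{}, and then propagate this to infinitely many graphs by a subgraph argument. Being \RAC{} is not monotone in general, but by \Lem{crossing-implies-kite} the non-\RAC{}-ness of a forbidden configuration in any \NIC{}-planar embedding of a supergraph $G \supseteq H$ should persist: if $H$ already forces two pairs of crossing edges to be incident at a shared vertex in a way that is inconsistent with any right-angle assignment, then this obstruction lives inside any \NIC{}-planar drawing of $G$. Thus, once $H$ is in hand, I would attach it (by identifying one of its triangular faces with a triangular face of an optimal \NIC{}-planar graph from \Cor{construction}) to produce a family of \NIC{}-planar graphs of unbounded size containing $H$, and verify that the resulting graphs are still \NIC{}-planar by \Lem{dual-adj2}.

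To build $H$ itself, I would exploit the one structural feature that distinguishes \NIC{}-planarity from \IC{}-planarity, namely that two kites may share a single common vertex. The key observation is that in a \RAC{} drawing, each crossing forces the two crossing edges to be perpendicular, so a vertex $v$ incident to edges of two distinct kites has its rotation constrained by two independent orthogonality relations. By taking several kites meeting pairwise at common vertices along a short cycle, the composition of these orthogonality relations around the cycle produces an equation on edge directions that is over-determined and has no solution consistent with the angles around any vertex summing to $2\pi$. A concrete candidate for $H$ is a small gadget consisting of three kites arranged so that consecutive pairs share a vertex and the three shared vertices themselves form a triangle, which I would verify is \NIC{}-planar via its generalized dual but not \RAC{} via the angular argument above.

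The main obstacle is, as expected, the explicit geometric analysis of $H$: one must carefully enumerate the candidate \NIC{}-planar embeddings of $H$ (of which there may be several, in contrast to the optimal case treated in \Cor{optimal-unique-embedding}) and rule out \RAC{}-ness for each of them. A useful simplification is that by \Lem{crossing-implies-kite} every crossing in a \NIC{}-planar embedding of a maximal supergraph of $H$ sits inside a uniquely determined $K_4$, so the pairs of orthogonal edges are forced combinatorially by the graph structure. This will make the angular bookkeeping finite and checkable.
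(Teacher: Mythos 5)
The easy direction (dense \RAC{} graphs that cannot be \NIC{}-planar) is correct and is exactly the paper's argument. Your propagation step is also essentially fine, though for a simpler reason than you give: \RAC{} graphs are closed under taking subgraphs (delete edges and vertices from a straight-line right-angle drawing and it remains one), so non-\RAC{}-ness of $H$ transfers to \emph{every} supergraph automatically; no appeal to \Lem{crossing-implies-kite} is needed or appropriate there.

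The genuine gap is in establishing that your gadget $H$ is not \RAC{}. A \RAC{} drawing of $H$ is an arbitrary straight-line drawing with right-angle crossings; it is under no obligation to realize any \NIC{}-planar embedding of $H$, to draw the $K_4$ subgraphs as kites, or to produce any crossings at all. So ``enumerating the candidate \NIC{}-planar embeddings of $H$'' is the wrong universe of drawings, and \Lem{crossing-implies-kite} (a statement about \NIC{}-planar embeddings of maximal \NIC{}-planar graphs) cannot force which edges cross in a \RAC{} drawing. Concretely, your proposed $H$ --- three $K_4$'s pairwise sharing a vertex around a triangle, i.e.\ one $K_4$ glued along each edge of a $K_3$ --- is a \emph{planar} graph (draw the triangle and attach each $K_4$ planarly along its edge), hence trivially \RAC{} with zero crossings, and no orthogonality relations ever arise. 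This is precisely the difficulty the paper's proof is built to overcome: it takes a larger, $4$-connected base graph $G$ and augments every planar edge with seven vertex-disjoint $2$-paths (``fat edges''), then proves a cascade of lemmas (no fan-crossings, at most two $2$-paths can cross an edge, fat neighbors cannot be separated by a triangle, hence every $K_4$ \emph{must} be drawn as an empty kite, fat edges must be planar, and the outer face must be a \Trivial{} triangle) before the angle-sum contradiction --- three kites attached to the outer triangle each consume $\pi/2$ at their crossing point, forcing the triangle's interior angles to sum to at least $3\pi/2 > \pi$ --- can be applied. Your angular intuition is in the right spirit and close to the paper's final step, but without a rigidification mechanism of this kind the argument never gets off the ground.
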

For the harder part, we construct a \NIC{}-planar graph that is not \RAC{}.
Infinitely many graphs are obtained by multiple copies.
Let $G^+ = (V^+,E^+)$ be obtained from graph $G = (V,E)$ in \Fig{RAC} by
augmenting every planarly drawn edge between two vertices $u$ and $v$ with
seven vertex-disjoint $2$-paths, as shown in \Fig{7-2-paths}.
Every edge of $G$ that is augmented is called a \emph{fat edge}.
If $u$ and $v$ are connected by a fat edge, then $v$ is a \emph{fat neighbor}
of $u$.
We show that graph $G^+$ does not admit a \RAC{} drawing.

Observe that $G$ consists of six $K_4$ and eight $K_3$
subgraphs such that a $K_4$ is attached to each side of a $K_3$.
As shown in \Fig{RAC}, $G$ is \NIC{}-planar and likewise is $G^+$, since the
$2$-paths of each fat edge can be embedded planarly.
We obtain the \emph{induced} \RAC{} drawing $\Drawing{G}$ of $G$ from a \RAC{}
drawing $\Drawing{G^+}$ of $G^+$ by removing the $2$-paths of each fat edge.

Graph $G$ is 4-connected and $G$ and $G^+$ remain biconnected if all pairs
of crossing edges are removed and additionally either a single $K_3$ or a
single $K_4$. After the removal, each vertex of $G$ and $G^+$ still has two
fat neighbors and the remaining graph is connected using only fat edges.
\begin{figure}[tb]
  \centering
  \begin{tikzpicture}
    \node[inner sep=0pt, draw=none] (f1) {
      \includegraphics[scale=1.1]{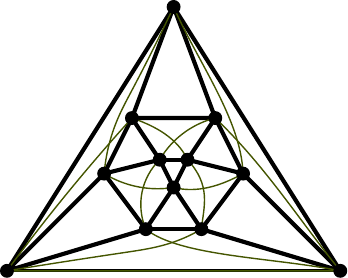}%
      \phantomsubcaption\FigLabel{RAC}};
    \node[anchor=center] at (f1.north west) {(\subref*{fig:RAC})};
    \node[inner sep=0pt, draw=none, anchor=north west] (f2)
        at ($(f1.north east) + (0.3cm,0)$){
      \includegraphics[scale=0.55]{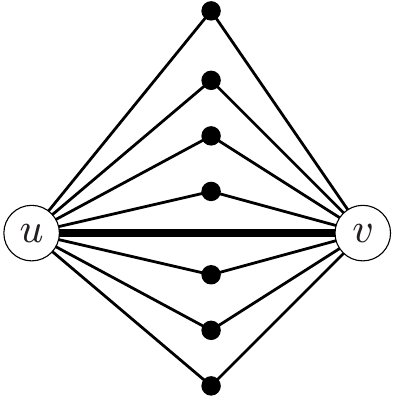}%
      \phantomsubcaption\FigLabel{7-2-paths}};
    \node[anchor=center] at (f2.north west) {(\subref*{fig:7-2-paths})};
    \node[inner sep=0pt, draw=none, anchor=north west] (f3)
        at ($(f2.north east) + (0.5cm,0)$){
      \includegraphics[scale=0.6]{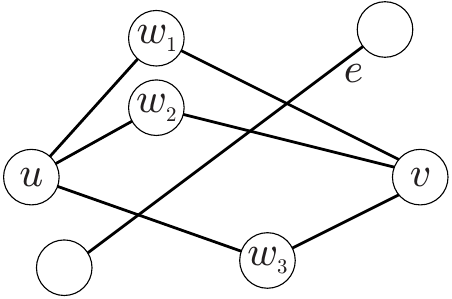}%
      \phantomsubcaption\FigLabel{2-path}};
    \node[anchor=center] at (f3.north west) {(\subref*{fig:2-path})};
  \end{tikzpicture}
  \caption{(\subref{fig:RAC}): The graph $G$. The supergraph $G^+$ extends
        $G$ by seven $2$-paths for each fat edge. $G$ corresponds
        to the red and black part of \Fig{dense-example}.
      (\subref{fig:7-2-paths}): Seven $2$-paths augment an edge to a fat
        edge.
      (\subref{fig:2-path}): A forced fan crossing.}
  \FigLabel{construction}
\end{figure}
The following properties %
are immediate (see \Fig{2-path}):
\begin{lemma}[\!\!\cite{del-dgrac-11}]\LemLabel{fan-crossing}
  A \RAC{} drawing does not admit a \emph{fan-crossing}, \ie, no edge may
  cross two edges with a common end vertex.
\end{lemma}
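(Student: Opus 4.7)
The plan is to argue by elementary Euclidean geometry, exploiting the defining properties of a RAC drawing, namely that edges are straight-line segments and every pair of crossing edges intersects at a right angle. Suppose, towards a contradiction, that in a RAC drawing an edge $e$ crosses two distinct edges $e_1 = \Edge{v}{a}$ and $e_2 = \Edge{v}{b}$ that share the common end vertex $v$.

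Since all three edges are drawn as straight-line segments, I can regard $e$, $e_1$, $e_2$ as lying on straight lines $\ell$, $\ell_1$, $\ell_2$ respectively. By the RAC property, the crossing of $e$ with $e_1$ forces $\ell \perp \ell_1$, and similarly the crossing of $e$ with $e_2$ forces $\ell \perp \ell_2$. Two lines in the Euclidean plane that are perpendicular to a common line $\ell$ are parallel to each other, so $\ell_1 \parallel \ell_2$. However, $\ell_1$ and $\ell_2$ both pass through the point corresponding to $v$, hence parallel lines through a common point must coincide, \ie, $\ell_1 = \ell_2$.

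This forces the straight-line segments $e_1$ and $e_2$ to be collinear and to share the endpoint $v$, so one of $a$, $b$ lies on the segment of the other, or the two segments overlap on a subsegment. In either case the drawing is not a simple drawing (edges share infinitely many points, not just discrete crossing points), contradicting the standing assumption that $e_1$ and $e_2$ are two distinct edges drawn as Jordan arcs meeting only at their endpoints (and at isolated crossings). Hence no such $e$ can exist.

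The argument is essentially a one-line geometric observation and carries no real obstacle; the only subtlety is making explicit that the ``straight-line'' and ``right angle crossing'' conditions together rule out the degenerate case where $e$ is perpendicular to $e_1$ and $e_2$ simultaneously, which is precisely where parallelism forces collinearity through the shared endpoint $v$.
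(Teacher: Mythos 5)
The paper itself offers no proof of this lemma: it is stated as ``immediate'' and attributed to Didimo~et~al.~\cite{del-dgrac-11}, so there is no in-paper argument to compare against. Your proposal reconstructs the standard geometric argument from that reference, and the core idea is right: a straight-line edge $e$ crossing both $e_1=\Edge{v}{a}$ and $e_2=\Edge{v}{b}$ at right angles forces the supporting lines of $e_1$ and $e_2$ to be parallel, and since both pass through $v$ they coincide.

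There is, however, one small gap in your final step. From ``$e_1$ and $e_2$ are collinear and share the endpoint $v$'' you conclude that one segment must contain part of the other, violating simplicity. That does not follow: $a$ and $b$ could lie on \emph{opposite} sides of $v$ on the common line, in which case $e_1$ and $e_2$ intersect only at $v$ and the drawing is perfectly simple. The contradiction in that subcase comes from a different place: the segment $e$ lies on a single line $\ell$ perpendicular to the common line $\ell_1=\ell_2$, so $\ell$ meets $\ell_1$ in exactly one point, and for $e$ to cross both $e_1$ and $e_2$ that one point would have to be an interior point of both segments --- impossible when their interiors are disjoint. Adding this one sentence closes the argument; with it, your proof is complete and is exactly the elementary Euclidean argument the cited reference has in mind.
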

\begin{lemma}\LemLabel{2-paths}
  If an edge is crossed by $k$ $2$-paths $p_i = (u, w_i, v)$ for $i = 1,
  \ldots, k$ connecting two vertices $u$ and $v$ in a \RAC{} drawing, then $k
  \leq 2$.
\end{lemma}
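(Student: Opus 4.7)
The plan is to reduce the claim directly to the no-fan-crossing property established in Lemma~\ref{lem:fan-crossing}. Label the crossed edge $e$ and consider the $k$ pairwise vertex-disjoint (apart from $u,v$) two-paths $p_i = (u, w_i, v)$, each consisting of the two straight-line segments $\Edge{u}{w_i}$ and $\Edge{w_i}{v}$. The first observation is that if $p_i$ crosses $e$, then the crossing point lies on at least one of its two constituent segments; so each $p_i$ contributes at least one of the edges $\Edge{u}{w_i}, \Edge{w_i}{v}$ to the set $C(e)$ of edges crossing $e$ in the drawing.

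Next I would group the potentially crossing edges by their endpoint at $u$ or $v$. The edges $\Edge{u}{w_1}, \dots, \Edge{u}{w_k}$ all share the common endpoint $u$; by \Lem{fan-crossing}, the single edge $e$ cannot cross two of them, so at most one edge of the form $\Edge{u}{w_i}$ lies in $C(e)$. By the symmetric argument applied to $v$, at most one edge of the form $\Edge{w_i}{v}$ lies in $C(e)$. Since the $w_i$ are pairwise distinct, these two sets of edges are disjoint, and together they account for all segments of the $p_i$. Hence $|C(e) \cap \bigcup_i p_i| \leq 2$.

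Finally, combine the two observations: every crossing 2-path contributes at least one edge to $C(e) \cap \bigcup_i p_i$, and these contributions come from disjoint edge sets (one per $i$), so the number of crossing 2-paths is at most $2$. I do not anticipate any real obstacle: the argument is essentially an application of pigeonhole on top of the fan-crossing prohibition, and there are no geometric subtleties involving the right-angle condition beyond what is already baked into \Lem{fan-crossing}. The only thing to be careful about is the disjointness of the two-paths (ensured by distinct interior vertices $w_i$), which guarantees that the per-path contributions to $C(e)$ cannot be double-counted.
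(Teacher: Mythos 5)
Your proof is correct and follows exactly the route the paper intends: the paper states this lemma as ``immediate'' from \Lem{fan-crossing} (the accompanying figure is captioned ``a forced fan crossing''), and your pigeonhole argument---at most one segment incident to $u$ and at most one incident to $v$ can cross the edge---is precisely that omitted argument, spelled out.
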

\begin{lemma}\LemLabel{pigeon}
If there is a triangle $\Triangle$ in a \RAC{} drawing and a fat edge
$\Edge{u}{v}$ so that $u, v$ are not vertices of $\Triangle$, then $u$ and
$v$ are either both inside or both outside $\Triangle$.
\end{lemma}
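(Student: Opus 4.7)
The plan is a compact pigeonhole argument against \Lem{2-paths}. Suppose, for the sake of contradiction, that $u$ lies in the interior of $\Triangle$ while $v$ lies in its exterior (the other case is symmetric). Since $\Edge{u}{v}$ is a fat edge of $G^+$, there are seven internally vertex-disjoint augmenting 2-paths $p_i = (u, w_i, v)$ for $i = 1, \dots, 7$, each drawn as a polygonal arc consisting of two straight line segments.

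The first step is a Jordan-curve observation: each $p_i$ is a continuous arc from a point inside $\Triangle$ to a point outside, so it must meet the boundary of $\Triangle$. Under the general-position assumption implicit in a straight-line \RAC{} drawing, no vertex of $G^+$ lies on the relative interior of an edge of $\Triangle$, and the middle vertices $w_i$ differ from the three vertices of $\Triangle$ (which differ from $u$ and $v$ by hypothesis). Hence every such intersection is a genuine edge crossing between a segment of $p_i$ and one of the three edges of $\Triangle$, and in particular $p_i$ crosses at least one edge of $\Triangle$ in the sense required by \Lem{2-paths}.

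The second step is the count. By \Lem{2-paths}, each of the three edges of $\Triangle$ can be crossed by at most two of the seven 2-paths $p_i$, which all share the common endpoints $u$ and $v$. Summing over the three edges yields an upper bound of $3 \cdot 2 = 6$ on the number of $p_i$ that can cross $\Triangle$, contradicting the Jordan-curve observation that all seven must. Therefore $u$ and $v$ lie on the same side of $\Triangle$.

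The only subtle point is aligning the Jordan-curve boundary intersections with the edge crossings counted by \Lem{2-paths}; this is handled by the general-position assumption for straight-line \RAC{} drawings. After that bookkeeping, the argument is purely combinatorial and the contradiction is immediate. The role of the number seven is simply to exceed the threshold $3 \cdot 2$, which is why the construction of $G^+$ chose seven augmenting 2-paths per fat edge in the first place.
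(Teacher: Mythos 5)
Your proof is correct and is essentially the paper's argument: each of the seven $2$-paths must cross the boundary of $\Triangle$, and by pigeonhole with \Lem{2-paths} at most $3\cdot 2 = 6$ of them can, a contradiction. The only cosmetic difference is that you phrase the pigeonhole as an upper bound of six crossing paths while the paper says some edge of $\Triangle$ is crossed by at least three $2$-paths; these are the same count, and your side remark that the $w_i$ cannot be vertices of $\Triangle$ matches the paper's degree-two observation.
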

\begin{proof}
Every vertex $w_i$ of a $2$-path $(u,w_i,v)$ has degree $2$ and thus cannot be
a vertex of $\Triangle$.
If, \wilog, $u$ is inside and $v$ is outside of $\Triangle$, at least one edge
of $\Triangle$ is crossed by at least three $2$-paths, which contradicts
\Lem{2-paths}. \Qed
\end{proof}
\begin{lemma}\LemLabel{K4}
  Let $\Drawing{G^+}$ be a \RAC{} drawing of $G^+$. Then every $K_4$ is
  drawn with a pair of crossing edges.
\end{lemma}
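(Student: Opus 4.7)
\emph{Proof plan.} The plan is to argue by contradiction, leveraging the fact that a straight-line $K_4$ on four points in convex position necessarily has its two diagonals crossing. Suppose some $K_4$ subgraph $\Kfour$ of $G^+$ with vertex set $\{a,b,c,d\}$ is drawn in $\Drawing{G^+}$ without any crossing among its six edges. Then $a,b,c,d$ cannot be in convex position, so one of the four vertices, say $d$, lies strictly inside the triangle $T=abc$ formed by the other three. This subdivides the interior of $T$ into three inner triangles $T_1=abd$, $T_2=bcd$, $T_3=cad$.

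The next step is to bring in the structure of $G$ as shown in \Fig{RAC}. Every $K_3$ subgraph of $G$ is the face of at least one further $K_4$ of $G$ besides the one we already have in hand, and each edge on the boundary of such a shared $K_3$ is planar in the NIC-planar embedding of $G$ and is therefore fat in $G^+$. In particular, there is a vertex $e$ of $G$ such that $\{a,b,c,e\}$ induces a $K_4$ and the three edges $\Edge{a}{e}$, $\Edge{b}{e}$, $\Edge{c}{e}$ are fat. I then apply \Lem{pigeon} three times: to $T_1$ with the fat edge $\Edge{c}{e}$, to $T_2$ with $\Edge{a}{e}$, and to $T_3$ with $\Edge{b}{e}$. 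In each case the first endpoint is a vertex of $T$ lying strictly outside the corresponding inner triangle, so \Lem{pigeon} forces $e$ onto the same (outer) side. Since $T_1,T_2,T_3$ tile the interior of $T$, it follows that $e$ lies outside $T$ altogether.

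To close the contradiction, I would run the symmetric argument on the opposite side of one of the inner triangles. The inner face $\{a,b,d\}$ is itself a $K_3$ of $G$ incident to another $K_4$, say $\{a,b,d,e'\}$ in $G$, with fat edges $\Edge{a}{e'}$, $\Edge{b}{e'}$, $\Edge{d}{e'}$. Applying \Lem{pigeon} to the outer triangle $T$ together with the fat edge $\Edge{d}{e'}$ puts $e'$ on the same side of $T$ as $d$, i.e.\ inside $T$; and applying it to $T_2$ with $\Edge{a}{e'}$ and to $T_3$ with $\Edge{b}{e'}$ puts $e'$ outside both $T_2$ and $T_3$. Hence $e'$ must lie in $T_1$. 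Repeating the argument that produced $e\notin T$ but for the inner triangle $T_1$ and the corresponding covering vertex then forces $e'$ outside of $T_1$ as well, contradicting the previous location.

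The main obstacle will be the careful bookkeeping of which edges of $G$ are planar in the NIC-planar embedding (and thus fat in $G^+$) versus which are crossing edges, since \Lem{pigeon} only applies to fat edges. The argument sketched above takes it for granted that every $K_3$ face of the given $K_4$ is shared with a further $K_4$ of $G$ whose remaining three edges are fat; once that is verified directly from \Fig{RAC}, the tiling property of $T_1,T_2,T_3$ inside $T$ and the symmetric attachment on each $K_3$ face make the inside/outside constraints of \Lem{pigeon} irreconcilable, yielding the contradiction.
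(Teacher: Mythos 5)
The reduction to ``some vertex $d$ lies strictly inside the triangle $abc$'' is fine (a straight-line, crossing-free $K_4$ must be a tetrahedron), but the pivotal structural claim you build on is false. You assert that every $K_3$ of $G$ --- in particular the face $\{a,b,c\}$ of the $K_4$ $\{a,b,c,d\}$ --- is a face of a \emph{second} $K_4$ $\{a,b,c,e\}$ whose three edges $\Edge{a}{e},\Edge{b}{e},\Edge{c}{e}$ are all fat. This misreads ``a $K_4$ is attached to each \emph{side} of a $K_3$'': the attachment in $G$ is along a shared \emph{edge}, not along a shared triangle, so the $K_4$ glued to the side $\Edge{a}{b}$ of a $K_3$ $\{a,b,c\}$ does not contain $c$. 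Indeed, $G$ is an optimal \NIC{}-planar graph with $12$ vertices and $36$ edges; its six $K_4$s account for exactly $6\cdot 6=36$ edge slots, and since every edge lies in at least one $K_4$ (\Lem{kite-cover}), the six $K_4$s are pairwise edge-disjoint. Hence no triangle of $G$ extends to two distinct $K_4$s, and the vertex $e$ your argument needs does not exist. (Even if it did, one of $\Edge{a}{e},\Edge{b}{e},\Edge{c}{e}$ could be a crossing edge of a kite and hence not fat --- the very ``bookkeeping'' you defer is where the argument breaks.) The closing step is also not a genuine repetition: expelling $e$ from $T$ used the interior vertex $d$ and the induced tiling $T_1,T_2,T_3$, whereas $T_1$ has no analogous interior vertex, so ``repeating the argument for $T_1$'' presupposes yet another nonexistent $K_4$.

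The paper's proof avoids all of this and uses only the fat-edge connectivity of $G$ recorded just before \Fig{construction}: the interior vertex $x$ of the hypothetical tetrahedron has a fat neighbor $y$ outside $\{u,v,w\}$; \Lem{pigeon} applied to the outer triangle $f_{uvw}$ forces $y$ inside it, hence into one of the three subtriangles, say $f_{uvx}$; and because $G$ stays connected on fat edges even after deleting the $K_3$ on $\{u,v,x\}$, repeated application of \Lem{pigeon} to $f_{uvx}$ propagates along fat edges and traps \emph{every} vertex of $G$ inside $f_{uvx}$ --- in particular $w$, which lies outside it. If you want to salvage your approach, replace the appeal to a second $K_4$ on $\{a,b,c\}$ by this connectivity/propagation argument; the inside/outside dichotomy of \Lem{pigeon} is the right tool, but the contradiction must come from fat-edge connectivity, not from extra $K_4$s that $G$ does not contain.
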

\begin{proof}
  Suppose that
  $G^+[\{u,v,w,x\}]$ is a $K_4$ which
  is not drawn with a pair of crossing edges.
  Then it is drawn as a tetrahedron \cite{Kyncl-09}. 
  \Wilog, let %
  $x$ be inside the triangle
  $f_{uvw}$.
  Then,
  $f_{uvw}$
  is partitioned into three triangles
  $f_{uvx}$, $f_{uwx}$, and $f_{vwx}$.

  Vertex $x$ has a fat neighbor $y \not\in \{v,w,x\}$. %
  By \Lem{pigeon}, $y$ must be inside $f_{uvw}$.
  \Wilog{}, let $y$ be in $f_{uvx}$.
  By the same reasoning, the fat neighbors of $y$ not in $\{u, v, x\}$ must be in
  the same triangle as $y$, which due to the connectivity of $G^+$ and $G$ on
  fat edges even without the $K_3$ $G^+[\{u,v,x\}]$ implies that all vertices
  of $G$ must be in $f_{uvx}$.
  However, $w$ is outside $f_{uvx}$, a contradiction.
  \Qed
\end{proof}
\begin{lemma}\LemLabel{fat-crossings}
  No edge $e \in E$ can cross a fat edge $f \in E$
  in any \RAC{} drawing $\Drawing{G^+}$ of $G^+$.
\end{lemma}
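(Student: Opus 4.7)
The plan is to argue by contradiction: assume some edge $e = \{a,b\} \in E$ crosses the fat edge $f = \{u,v\}$ in a RAC drawing $\Drawing{G^+}$. Since the two curves cross transversally, $\{a,b\} \cap \{u,v\} = \emptyset$. I would split the argument depending on whether $e$ itself is a fat edge or one of the non-fat (crossing-pair) edges of $G$.

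In the first case, where $e$ is fat, I would exploit the seven augmenting $2$-paths $(a, y_j, b)$ of $e$, each bounding together with $e$ a triangle $\Delta_j$ on vertices $a, y_j, b$---a convex region in the straight-line drawing. Since $y_j$ has degree two, $u, v \notin \{a, y_j, b\}$, so \Lem{pigeon} applied to $\Delta_j$ with the fat edge $f$ places both $u$ and $v$ on the same side of $\Delta_j$. They cannot both be inside (otherwise the straight segment $f$ would be trapped in the convex region and could not cross the side $e$), so both lie outside, and $f$ must cross $\partial\Delta_j$ an even number of times. Since $f$ already crosses $e$ once, it must cross exactly one of $\{a, y_j\}, \{y_j, b\}$ as well, and that second crossed edge shares endpoint $a$ or $b$ with $e$; hence $f$ fan-crosses $e$ with an adjacent edge, contradicting \Lem{fan-crossing}.

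In the second case, $e$ is non-fat, so $e$ is a crossing edge of some $K_4$ subgraph $G[\{a,b,c,d\}]$ of $G$ whose sibling crossing edge in the NIC-planar embedding is $\{c,d\}$; the remaining four $K_4$-edges $\{a,c\}, \{a,d\}, \{b,c\}, \{b,d\}$ are planar in $G$ and thus fat in $G^+$. Because $f$ is fat while $\{c,d\}$ is not, $\{u,v\} \neq \{c,d\}$, so \wilog{} $c \notin \{u,v\}$. Now I would bring in the seven triangles $T_i$ formed by $f$ with each $2$-path $(u, w_i, v)$. By \Lem{fan-crossing}, $e$ cannot cross any edge incident to $u$ or $v$, in particular $\{u, w_i\}$ or $\{w_i, v\}$; thus $e$ meets $\partial T_i$ only at its crossing with $f$ and exactly one endpoint of $e$ lies inside $T_i$. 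Fix $T_1$ and \wilog{} suppose $a$ is inside and $b$ outside. Applying \Lem{pigeon} first to $T_1$ with the fat edge $\{a,c\}$ (using $c \notin \{u,v,w_1\}$) forces $c$ inside $T_1$; a second application to $\{c,b\}$ forces $b$ inside $T_1$, contradicting $b$ being outside.

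The main obstacle is the second case: one fears that $\{u,v\}$ might coincide with $\{c,d\}$ and knock out every usable fat intermediate, but this collision is ruled out precisely because $f$ is fat while the sibling crossing edge $\{c,d\}$ of $e$ is not. The first case leans essentially on the straight-line convexity of $\Delta_j$, which together with \Lem{fan-crossing} supplies the parity argument forcing $f$ to cross a forbidden second side of $\Delta_j$.
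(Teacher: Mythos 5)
Your proof is correct, but it does not follow the paper's route. The paper gives a single uniform argument that never distinguishes whether $e$ is fat: it forms the triangle $f_{xyz}$ from $f$ and one of \emph{$f$'s own} $2$-paths, uses \Lem{fan-crossing} to conclude that $e$ crosses this triangle's boundary only once and hence that the triangle separates the endpoints of $e$, and then invokes the globally stated fact that $G$ stays connected on fat edges even after removing $x$ and $y$ to obtain a fat path between $e$'s endpoints; some edge of that path must cross $f_{xyz}$, contradicting \Lem{pigeon}. Your second case is essentially a local instantiation of this: instead of appealing to global fat-connectivity you exhibit the explicit length-two fat path $a$--$c$--$b$ inside the $K_4$ containing the non-fat edge $e$, which is arguably more self-contained since the paper's connectivity claim is asserted rather than proved in detail. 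Your first case is genuinely different: rather than working with $f$'s triangle, you work with a triangle of \emph{$e$'s} own $2$-paths, use \Lem{pigeon} plus convexity to push both endpoints of $f$ outside it, and then a parity argument forces $f$ to cross a second side adjacent to $e$, yielding a fan-crossing of $f$. What the paper's approach buys is brevity and uniformity; what yours buys is locality, at the price of the case split. One small point worth flagging in your Case~1: you apply \Lem{pigeon} to a triangle $\Delta_j=(a,y_j,b)$ that contains the degree-two vertex $y_j$, whereas the paper's proof of that lemma excludes the $2$-path vertices $w_i$ of the fat edge from the triangle solely via a degree argument; the conclusion still holds here because the $2$-paths of distinct fat edges are vertex-disjoint by construction, so $w_i\neq y_j$, but you should say so explicitly.
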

\begin{proof}
  Suppose edge $e = \Edge{u}{v} \in E$ crosses the fat edge $f = \Edge{x}{y}
  \in E$ in $\Drawing{G^+}$. As a fat edge, $f$ shares an edge with a $K_3$
  $\Kthree = G^+[\{x, y, z\}]$, where $z \in V^+ \setminus V$ is a vertex of
  a $2$-path associated with $f$. Edge $e$ cannot cross both $\Edge{x}{y}$ and
  $\Edge{x}{z}$ or both $\Edge{x}{y}$ and $\Edge{y}{z}$ due to forbidden
  fan-crossings by \Lem{fan-crossing}. Thus, \wilog, $u$
  is inside the triangle $f_{xyz}$ and $v$ outside. As there is a
  path $p$ between $u$ and $v$ which consists only of fat edges and is
  vertex disjoint with $\Kthree$,
  there is an edge of $p$ crossing an edge of triangle $f_{xyz}$, a
  contradiction to \Lem{pigeon}. \Qed
\end{proof}

\begin{lemma}\LemLabel{convex}
  Let $\Drawing{G^+}$ be a \RAC{} drawing of $G^+$ and let $\Drawing{G}$ be
  the induced drawing. Then every $K_4$ is drawn as a kite with planar fat
  edges in $\Drawing{G}$.
\end{lemma}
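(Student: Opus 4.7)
The plan is to combine \Lem{K4} with \Lem{fat-crossings}. By \Lem{K4}, every $K_4$ subgraph of $G^+$---in particular every $K_4$ of $G$---is drawn in $\Drawing{G^+}$ with a pair of crossing edges, so its six edges form a kite. Since $\Drawing{G}$ is obtained from $\Drawing{G^+}$ simply by deleting the $2$-path vertices augmenting the fat edges, every such $K_4$ is also drawn as a kite in $\Drawing{G}$. The kite structure is therefore immediate; what remains is to identify which two edges play the role of the crossing diagonals.

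To this end I would first record the combinatorial input coming from $G$: in the NIC-planar embedding of $G$ depicted in \Fig{RAC}, each of the six $K_4$ subgraphs is embedded as a kite with four planar sides and two crossing diagonals. By the definition of $G^+$, the four planar edges of each such $K_4$ become fat edges, while its two originally crossing edges remain non-fat. Hence every $K_4$ of $G$ contains exactly four fat edges and exactly two non-fat edges.

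Applying \Lem{fat-crossings}, which forbids any edge of $E$ from crossing a fat edge in $\Drawing{G^+}$, we conclude that the crossing pair guaranteed by \Lem{K4} cannot contain any fat edge; hence it must consist precisely of the two non-fat edges of the $K_4$. The four remaining edges---all fat---are therefore drawn without crossings and form the planar sides of the kite in $\Drawing{G}$, which is exactly the claim. The substantive work is done by \Lem{K4} and \Lem{fat-crossings}, so no real obstacle remains; the only step needing mild care is the bookkeeping that every $K_4$ of $G$ has four fat and two non-fat edges, which follows directly from inspecting \Fig{RAC} and the construction of $G^+$.
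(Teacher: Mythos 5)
There is a genuine gap. You jump from ``drawn with a pair of crossing edges'' (\Lem{K4}) to ``so its six edges form a kite,'' but in this paper a kite is more than a $K_4$ with a crossing: the four triangles cut out by the crossing pair must be actual faces of the drawing, i.e.\ no other vertex of $G$ (or $G^+$) may lie inside the drawn $K_4$, and no other edge may enter it. This emptiness is exactly what the subsequent lemmas consume --- \Lem{outer} and the final contradiction argument rely on the kites' faces being pairwise disjoint --- so it cannot be waved away. The paper's proof spends most of its effort here: it invokes \Lem{fan-crossing} to rule out an edge of $G$ crossing two edges of $\Kfour$, and then reuses the argument from the proof of \Lem{K4} (via \Lem{pigeon} and the fact that $G$ stays connected on fat edges even after deleting a $K_3$ or $K_4$) to show that if any vertex sat in one of the four triangles, \emph{all} vertices of $G$ would have to sit in that same triangle, which the two remaining vertices of $\Kfour$ cannot do. Your proposal contains none of this.

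The remainder of your argument is fine and matches the paper: \Lem{fat-crossings} forbids the crossing pair from containing a fat edge, and since each $K_4$ of $G$ has exactly four fat and two non-fat edges by the construction from \Fig{RAC}, the crossing diagonals must be the two non-fat edges and the fat edges are planar. So the fix is local: before concluding ``kite,'' add the interior-emptiness argument (fan-crossing plus fat-edge connectivity), after which your bookkeeping about which edges are fat completes the proof as in the paper.
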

\begin{proof}
  By \Lem{K4}, every $K_4$ $\Kfour$ is drawn with a right angle crossing.
  Let $\Drawing{\Kfour}$ be the induced drawing of $\Kfour$.
  Due to \Lem{fan-crossing}, no other edge of $G$ can
  cross two (or more) edges of $\Kfour$. Furthermore, no other vertex of $G$
  can be inside $\Drawing{\Kfour}$ by the same argument as in the proof of
  \Lem{K4}: The pair of crossing edges partitions $\Drawing{\Kfour}$ into
  four triangles. If a vertex is in such a triangle, then all vertices of
  $G$ must be in the same triangle, since they are connected by fat edges.
  However, the remaining two vertices of $\Kfour$ cannot be in the
  triangle. Hence, $\Drawing{\Kfour}$ is empty and every $K_4$ subgraph of
  $G^+$ is drawn as kite. By \Lem{fat-crossings}, the crossing edges of the
  kites cannot be fat. \Qed
\end{proof}
As a consequence, in the induced drawing $\Drawing{G}$ every $K_3$ in $G$ which is
no subgraph of a $K_4$ is drawn as a \Trivial{} triangle and every $K_4$ in $G$
is drawn as kite, \ie, with no further vertex inside. Finally, consider the outer
face:
\begin{lemma}\LemLabel{outer}
  Let $\Drawing{G^+}$ be a \RAC{} drawing of $G^+$ and let $\Drawing{G}$ be
  the induced drawing. Then the outer face of $\Drawing{G}$ is a \Trivial{} triangle.
\end{lemma}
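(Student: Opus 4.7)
By \Lem{convex}, every $K_4$ of $G$ is drawn as a kite in $\Drawing{G}$, and by the discussion preceding this lemma, every $K_3$ of $G$ not contained in a $K_4$ is drawn as a \Trivial{} triangle. The outer face $f_\infty$ of $\Drawing{G}$ is unbounded, so it cannot coincide with any non-\Trivial{} inner triangle of a kite: such a face lies strictly inside the bounded region enclosed by the kite's outer $4$-cycle.

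To rule out the possibility that $f_\infty$ is a larger, non-triangular face, consider the planar skeleton $H$ obtained from $G$ by removing the two diagonal edges in each of the six kites, and let $\Drawing{H}$ be the induced planar drawing. From the structure of $G$ (six $K_4$ subgraphs and eight $K_3$ subgraphs, each $K_4$-diagonal not lying in any $K_3$) one computes $|V(H)| = 12$ and $|E(H)| = 36 - 12 = 24$, so Euler's formula yields $F = 2 - V + E = 14$ faces. These must be exactly the six quadrilateral kite interiors together with the eight \Trivial{} triangles of the standalone $K_3$'s, which is confirmed by the face-degree sum $6 \cdot 4 + 8 \cdot 3 = 48 = 2|E(H)|$. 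Because every face of $\Drawing{G}$ whose boundary uses only planar edges is a face of $\Drawing{H}$, $f_\infty$ must be one of these fourteen faces. Suppose for contradiction that $f_\infty$ were the quadrilateral interior of some kite $Q$; then the remaining $|V| - 4 = 8$ vertices of $G$ would have to lie inside the bounded region enclosed by the $4$-cycle of $Q$, contradicting the fact, established in the proof of \Lem{K4}, that no vertex of $G$ lies inside a kite. Hence $f_\infty$ must be a \Trivial{} triangle.

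The main obstacle is justifying that the faces of $\Drawing{H}$ are precisely the six kite quadrilaterals and the eight standalone $K_3$ triangles, rather than some coarser recombination thereof. The Euler-formula sketch above provides the count; alternatively one can invoke \Cor{optimal-unique-embedding}, since $G$ has $n = 12$ vertices and $m = 36 = \frac{18}{5}(n-2)$ edges and is therefore optimal \NIC{}-planar with a unique \NIC{}-planar embedding up to isomorphism, so the combinatorial face structure can be read off directly from \Fig{RAC}.
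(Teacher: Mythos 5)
Your proposal is correct, and its crux coincides with the paper's: the outer face cannot be one of the four non-\Trivial{} triangles of a kite. You argue this via boundedness (the four vertices of a kite are in convex position in a straight-line drawing, so the crossing lies inside the quadrilateral and each non-\Trivial{} face is a bounded open triangle), whereas the paper argues it via the angle at the crossing point $c$: if such a face were outer, the edges would have to bend at $c$, contradicting straight-lineness. These are two phrasings of the same geometric fact, both resting essentially on \Lem{convex} and the straight-line property of \RAC{} drawings. Where you genuinely add something is in justifying that the remaining candidates for the outer face are exactly the eight \Trivial{} triangles of the standalone $K_3$s --- a step the paper passes over with ``it must be one of the four triangles of a $K_4$'s kite drawing.'' Your Euler count on the skeleton $H$ ($V=12$, $E=24$, $F=14$) only \emph{confirms} rather than \emph{forces} the face structure as you state it, but it can be tightened: the six kite quadrilaterals are faces of $\Drawing{H}$ (their interiors are empty by \Lem{convex} and \Lem{fat-crossings}), they absorb $24$ of the $48$ units of face degree, and since every face of a simple connected plane graph has degree at least $3$, the remaining eight faces are all triangles, hence bounded by standalone $K_3$s. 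Your alternative patch via \Cor{optimal-unique-embedding} also works ($G$ is optimal \NIC{}-planar with $n=12$, $m=36$), but note that it requires first observing that the embedding induced by $\Drawing{G}$ is \NIC{}-planar at all --- which does follow from \Lem{fan-crossing}, \Lem{fat-crossings}, and \Lem{convex}, since the only crossings left are the six kite crossings --- and that the corollary fixes the face set only up to isomorphism, so the boundedness argument is still needed to single out which face can be outer. In short: same key idea as the paper, plus a more explicit (and essentially repairable) accounting of the face structure that the paper leaves implicit.
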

\begin{proof}
If the outer face of $\Drawing{G}$ is not a \Trivial{} triangle, it
must be one of the four triangles of a $K_4$'s kite drawing.
Let $c$ denote the crossing point of the kite's edges $e$ and $f$.
Then, the interior angle at $c$ must be less than $\pi$, which
implies a bend at $c$ for both $e$ and $f$,
a contradiction to $\Drawing{G}$ being straight-line.
\Qed
\end{proof}
So far, we conclude that the induced embedding $\Embedding{G}$
must be as depicted in \Fig{RAC} if there is a \RAC{} drawing of $G^+$.
However, this embedding is not realizable with right angle crossings.
\begin{lemma}
  Graph $G^+$ does not admit a \RAC{} drawing.
\end{lemma}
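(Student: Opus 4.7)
The plan is to combine the previous lemmas, which have already pinned down the combinatorial structure of $\Drawing{G}$, with a geometric contradiction arising from the right-angle constraints imposed on every $K_4$-kite. By \Lem{K4}, \Lem{fat-crossings}, \Lem{convex}, and \Lem{outer}, any supposed \RAC{} drawing $\Drawing{G^+}$ of $G^+$ induces on $G$ a straight-line realization of the embedding shown in \Fig{RAC}: each of the six $K_4$ subgraphs is drawn as a kite with perpendicular diagonals, each $K_3$ not contained in a $K_4$ is a \Trivial{} triangle, every fat edge is planar, and the outer face is the prescribed \Trivial{} triangle. It therefore suffices to show that this combinatorial embedding admits no straight-line realization consistent with all kite crossings being at exactly $90^\circ$.

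My approach is to derive a local angular contradiction. I would pick a vertex $v$ (or a short edge path) around which several $K_4$-kites and \Trivial{} triangles meet in \Fig{RAC} and examine the angles each incident face must contribute at $v$. In a right-angle kite with diagonals $\Edge{a}{c}$ and $\Edge{b}{d}$ crossing at $p$, the two triangles incident to $a$ are right triangles sharing the leg $\overline{ap}$; their two angles at $a$ are each strictly less than $\pi/2$, and their sum is tied to the kite's aspect ratio, which in turn is forced by the directions of the planar edges shared with the neighboring kites and triangles. Summing these contributions over all kites meeting at $v$ in the fixed embedding, and combining them with the angle-sums inside the surrounding \Trivial{} triangles, should produce a total strictly exceeding (or falling short of) the $2\pi$ required at an interior vertex---or, if $v$ lies on the outer boundary permitted by \Lem{outer}, exceeding the angle available there.

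The main obstacle is making this angular accounting tight enough to yield a strict inequality. The right-angle condition inside a single kite leaves one free parameter (its aspect ratio), so the contradiction cannot arise from one kite alone; it must come from propagating the constraints across shared fat edges until all such parameters are eliminated. The construction of $G$ with a $K_4$ attached to each side of eight $K_3$s in a rotationally symmetric pattern is precisely what couples the shape parameters of adjacent kites and ultimately forces the angular mismatch. Once this local contradiction is established, the lemma follows: any \RAC{} drawing of $G^+$ would induce the forbidden configuration on $G$, so no such drawing exists, and infinitely many examples are obtained by taking disjoint unions of copies of $G^+$, completing \Thm{notrac}.
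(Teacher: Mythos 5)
Your reduction to the purely geometric question is exactly the paper's: you invoke \Lem{K4}, \Lem{fat-crossings}, \Lem{convex}, and \Lem{outer} to conclude that a \RAC{} drawing of $G^+$ would induce a straight-line drawing of $G$ in which every $K_4$ is an empty kite with a $90^\circ$ crossing, fat edges are planar, and the outer face is a \Trivial{} triangle. The gap is in the final step. Your plan---pick a vertex $v$, sum the angles of the incident kites and triangles, and hope to overshoot $2\pi$---is not carried out, and as you yourself observe it cannot work locally: each kite has a free aspect-ratio parameter, so the angle a kite contributes at any single vertex can be made arbitrarily small, and at an interior vertex the full angle sum is exactly $2\pi$ with no slack to exploit. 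Your fallback, ``propagate the constraints across shared fat edges until all parameters are eliminated,'' is a research programme, not an argument; no inequality is ever derived, so the contradiction is never established.

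The paper's actual contradiction is much more rigid and sidesteps the aspect-ratio issue entirely by working at the outer face rather than at a vertex. By \Lem{outer} the outer face of $\Drawing{G}$ is a \Trivial{} triangle $\Triangle$; each of its three sides is a fat edge and hence (by \Lem{convex}) a planar edge of a kite $\Kfour_i$ lying \emph{inside} $\Triangle$. The non-\Trivial{} triangle $\Kthree_i$ of $\Kfour_i$ that is bounded by that side has a $\frac{\pi}{2}$ angle at the crossing point $c_i$, so its two remaining angles---which sit at the two endpoints of that side, \ie{} at vertices of $\Triangle$---sum to exactly $\frac{\pi}{2}$, \emph{independently of the kite's shape}. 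Since the three kites are interior-disjoint and inside $\Triangle$, these contributions nest inside the interior angles of $\Triangle$, forcing the angle sum of $\Triangle$ to be at least $3\cdot\frac{\pi}{2} > \pi$, which is impossible for a straight-line triangle. This is the single idea your proposal is missing: the right-angle condition fixes the \emph{sum} of the two base angles of each kite's outer triangle at $\frac{\pi}{2}$ regardless of the free parameter, and the outer face provides a region whose total available angle is only $\pi$. Without it (or an equivalent global argument), your proof is incomplete.
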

\begin{proof}
  Assume that $G^+$ has a \RAC{} drawing $\Drawing{G^+}$. By \Lem{outer},
  the outer face of the induced drawing $\Drawing{G}$ is a \Trivial{} triangle $\Triangle$.
  Every fat edge of $G$ is a planar edge of a $K_4$ in $\Drawing{G}$, or more
  specifically of a kite (\Lem{convex}). Hence, every edge bounding $\Triangle$ is
  shared with a kite $\Kfour_i$ for $i = 1, 2, 3$, which is located inside
  $\Kthree$. Let $c_i$ denote the point in the plane where the two edges of
  $\Kfour_i$ cross each other and let $\Kthree_i$ be the non-\Trivial{}
  triangle of the kite embedding of $\Kfour_i$ that is bounded by one of the
  edges of $\Triangle$. Then, the interior angle of $\Kthree_i$ at $c_i$
  must be $\frac{\pi}{2}$ and subsequently, the two remaining interior
  angles of $\Kthree_i$ sum up to $\pi - \frac{\pi}{2} = \frac{\pi}{2}$.
  Observe that the kites' faces are pairwisely disjoint by \Lem{convex}.
  Hence, the sum of $\Triangle$'s interior angles must be strictly greater
  than $\frac{3\pi}{2}$, a contradiction to $\Triangle$ being a triangle.
  \Qed
\end{proof}

\section{Recognition}
\SectLabel{recognition}
The recognition problem for \One{}-planar and \IC{}-planar graphs is
\NP-complete, even if the graphs are $3$-connected and are given with a
rotation system \cite{GB-AGEFCE-07, abgr-1prs-15, bdeklm-IC-16}.
However, triangulated, maximal, and optimal \One{}-planar
graphs can be recognized in time $\bigO(n^3)$ \cite{cgp-rh4mg-06},
$\bigO(n^5)$ \cite{b-4mg1p-15}, and $\bigO(n)$ \cite{b-optlin-16}, respectively.

\begin{figure}[tb]
\centering
\includegraphics{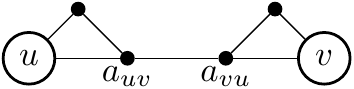}%
\caption{Gadget replacing every edge $\Edge{u}{v}$ in the \NP{}-reduction.}
\FigLabel{np-gadget}
\end{figure}

Our \NP-hardness result solves an open problem by Zhang \cite{z-dcmgprc-14}
and can be obtained from known \NP-hardness proofs 
\cite{bdeklm-IC-16,GB-AGEFCE-07}, 
\eg, by a reduction from \One{}-planarity as in \cite{bdeklm-IC-16}, which
replaces every edge $\Edge{u}{v}$ of a graph $G = (V,E)$ by the gadget in
\Fig{np-gadget}.
Then, in every \IC{}-planar and even \NIC{}-planar embedding $\Embedding{G'}$ of
the resulting graph $G'$, every crossed edge must be an edge $\Edge{a_{uv}}{a_{vu}}$
for some $\Edge{u}{v} \in E$ and
$\Embedding{G'}$ exists if and only if the induced embedding of $G$ is \One{}-planar.
As testing \One{}-planarity is \NP{}-complete, we obtain:
\begin{corollary}
It is \NP-complete to test whether a graph is \NIC{}-planar.
\end{corollary}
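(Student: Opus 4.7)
The plan is to establish both membership in $\NP$ and $\NP$-hardness. Membership is routine: given a graph together with a candidate embedding (rotation system plus crossing information), verifying \NIC{}-planarity reduces to checking that every edge is crossed at most once and that no two pairs of crossing edges share two end vertices, both of which can be done in polynomial time by iterating over the pairs of crossing edges listed by the certificate.

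For hardness, I would follow the hint in the excerpt and reduce from \One{}-planarity, which is $\NP$-complete~\cite{GB-AGEFCE-07, km-mo1ih-13}. Given an instance $G = (V,E)$, construct $G' = (V', E')$ by replacing every edge $\Edge{u}{v}$ by the gadget depicted in \Fig{np-gadget}, which introduces two new vertices $a_{uv}, a_{vu}$ (adjacent by the edge intended to carry any crossing) together with auxiliary vertices and edges that pin down $a_{uv}$ to $u$ and $a_{vu}$ to $v$ via many short paths. The construction is clearly polynomial. The forward direction is straightforward: starting from a \One{}-planar drawing of $G$, realise each gadget by placing $\Edge{a_{uv}}{a_{vu}}$ along the drawn curve of $\Edge{u}{v}$ and routing all auxiliary vertices planarly inside a thin neighbourhood of that curve; a crossing of $\Edge{u}{v}$ and $\Edge{x}{y}$ in $G$ becomes a crossing of $\Edge{a_{uv}}{a_{vu}}$ and $\Edge{a_{xy}}{a_{yx}}$ in $G'$, and since these four central vertices are pairwise distinct, any two such crossings automatically share at most one vertex, so the resulting embedding is \NIC{}-planar.

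The harder direction is to argue that, conversely, any \NIC{}-planar embedding $\Embedding{G'}$ forces all crossings to occur on the ``central'' edges $\Edge{a_{uv}}{a_{vu}}$, so that contracting each gadget back to a single edge yields a \One{}-planar embedding of $G$. Here I would exploit the fact that every gadget edge other than $\Edge{a_{uv}}{a_{vu}}$ is, by the design of \Fig{np-gadget}, forced to be planar in any \NIC{}-planar embedding: each such edge belongs to many short parallel $K_4$- or $K_3$-like substructures that share its endpoints with other auxiliary edges, so that a crossing on such an edge would either produce two pairs of crossing edges sharing two vertices (violating \NIC{}-planarity) or create a double crossing (violating 1-planarity). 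This is precisely the style of argument used in the corresponding reductions for \IC{}-planarity in \cite{bdeklm-IC-16}; the main obstacle is verifying that the gadget of \Fig{np-gadget} is rigid enough for the \NIC{}-planar setting, where vertices are allowed to be incident to crossings (unlike \IC{}-planarity), and thus more attention is needed to ensure that $u$ and $v$ themselves do not become saturated by crossings from other incident gadgets in a way that would permit an alternative crossing configuration.

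Once this rigidity is established, the biconditional follows: a \NIC{}-planar embedding of $G'$ exists if and only if $G$ admits a \One{}-planar embedding, because the induced rotation at each $u \in V$ together with the list of crossed central edges determines and is determined by a \One{}-planar embedding of $G$. Combined with membership in $\NP$, this yields \NP-completeness of \NIC{}-planarity recognition.
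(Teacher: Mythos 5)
Your proposal matches the paper's proof: the paper likewise reduces from \One{}-planarity by replacing each edge with the gadget of \Fig{np-gadget} from~\cite{bdeklm-IC-16} and asserts that in every \NIC{}-planar embedding of $G'$ all crossings are forced onto the central edges $\Edge{a_{uv}}{a_{vu}}$, so that such an embedding exists if and only if $G$ is \One{}-planar. The rigidity verification you flag as the remaining obstacle is exactly the step the paper also delegates to the cited gadget construction, so your argument is at the same level of detail and follows the same route.
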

Recently, Brandenburg \cite{b-ripnp-16} has used the relationship
between \One{}-planar graphs and hole-free 4-map graphs as in
\Cor{triang-vs-4map} and the cubic-time recognition algorithm for
(hole-free) 4-map graphs of Chen \ea{} \cite{cgp-rh4mg-06} to develop a
cubic-time recognition algorithm for triangulated \NIC{}-planar
(\IC{}-planar) graphs from which he obtained an $\bigO(n^5)$ time algorithm
for maximal and a cubic-time algorithm for densest \NIC{}-planar
(\IC{}-planar) graphs.

\section{Conclusion}
\label{sect:conclusion}
For a natural subclass of \One{}-planar graphs, we presented
diverging, yet tight upper and lower bounds for maximal graphs.
Paralleling the result that there are maximal \One{}-planar graphs that are
sparser than maximal planar graphs, we showed that there are maximal \NIC{}-planar
graphs that are sparser than maximal \IC{}-planar graphs.
Our tool is a generalized dual graph and a condensation of $K_4$ subgraphs.
Whereas \IC{}-planar graphs are a subset of \RAC{} graphs, we showed that
\NIC{}-planar graphs and \RAC{} graphs are incomparable.
The proof of \Thm{notrac} shows, to the best of our knowledge for
the first time, that there are non-\RAC{}, \One{}-planar graphs with a
density less than the upper bound for \RAC{} graphs of $4n - 10$.
Finally, we showed that the recognition of \NIC{}-planar graphs is
\NP{}-hard in general, whereas optimal \NIC{}-planar graphs  can be
recognized in linear time.

Future work are similar characterizations for \IC{}-planar graphs in terms of
generalized duals and the linear-time recognition of optimal \IC{}-planar
graphs.

\bibliographystyle{splncs03}
\bibliography{paper}

\begin{thebibliography}{10}
\providecommand{\url}[1]{\texttt{#1}}
\providecommand{\urlprefix}{URL }

\bibitem{at-mneqp-07}
Ackerman, E., Tardos, G.: On the maximum number of edges in quasi-planar
  graphs. J. Combin. Theory, Ser. A  114(3),  563--571 (2007)

\bibitem{abk-sld3c-13}
Alam, M.J., Brandenburg, F.J., Kobourov, S.G.: Straight-line drawings of
  3-connected 1-planar graphs. In: Wismath, S., Wolff, A. (eds.) {GD} 2013.
  {LNCS}, vol. 8242, pp. 83--94. Springer (2013)

\bibitem{a-cbirc-08}
Albertson, M.O.: Chromatic number, independence ratio, and crossing number. Ars
  Math. Contemp.  1(1),  1--6 (2008)

\bibitem{ah-epgfc-77}
Appel, K., Haken, W.: Every planar map is four colorable. part i. discharging.
  Illinois J. Math.  21,  429--490 (1977)

\bibitem{abbghnr-o1p-15}
Auer, C., Bachmaier, C., Brandenburg, F.J., Glei\ss{}ner, A., Hanauer, K.,
  Neuwirth, D., Reislhuber, J.: Outer 1-planar graphs. Algorithmica  74(4),
  1293--1320 (2016)

\bibitem{abgr-1prs-15}
Auer, C., Brandenburg, F.J., Glei{\ss}ner, A., Reislhuber, J.: 1-planarity of
  graphs with a rotation system. J. Graph Algorithms Appl.  19(1),  67--86
  (2015)

\bibitem{bbh-nipg-17}
Bachmaier, C., Brandenburg, F.J., Hanauer, K.: A note on {IC}-planar graphs.
  Tech. Rep. arXiv:1707.08652 [cs.DM], Computing Research Repository (CoRR)
  (July 2017)

\bibitem{bce-pc1p-13}
Bannister, M.J., Cabello, S., Eppstein, D.: Parameterized complexity of
  1-planarity. In: {WADS} 2013. {LNCS}, vol. 8037, pp. 97--108. Springer (2013)

\bibitem{bt-idm1p-15}
Bar{\'a}t, J., T{\'o}th, G.: Improvements on the density of maximal 1-planar
  graphs. Tech. Rep. arXiv:1509.05548 [math.CO], Computing Research Repository
  (CoRR) (September 2015)

\bibitem{bsw-bs-83}
Bodendiek, R., Schumacher, H., Wagner, K.: {B}emerkungen zu einem
  {S}echsfarbenproblem von {G}. {R}ingel. Abh. aus dem Math. Seminar der Univ.
  Hamburg  53,  41--52 (1983)

\bibitem{bsw-1og-84}
Bodendiek, R., Schumacher, H., Wagner, K.: {\"U}ber 1-optimale {G}raphen.
  Mathematische Nachrichten  117,  323--339 (1984)

\bibitem{b-4mg1p-15}
Brandenburg, F.J.: On 4-map graphs and 1-planar graphs and their recognition
  problem. Tech. Rep. arXiv:1509.03447 [cs.CG], Computing Research Repository
  (CoRR) (January 2015)

\bibitem{b-ripnp-16}
Brandenburg, F.J.: Recognizing {IC}-planar and {NIC}-planar graphs. Tech. Rep.
  arXiv:1610.08884 [cs.DM], Computing Research Repository (CoRR) (October 2016)

\bibitem{b-optlin-16}
Brandenburg, F.J.: Recognizing optimal 1-planar graphs in linear time.
  Algorithmica  (2016), . {doi}:10.1007/s00453-016-0226-8

\bibitem{bdeklm-IC-16}
Brandenburg, F.J., Didimo, W., Evans, W.S., Kindermann, P., Liotta, G.,
  Montecchiani, F.: Recognizing and drawing {IC}-planar graphs. Theor. Comput.
  Sci.  636,  1--16 (2016)

\bibitem{begghr-odm1p-13}
Brandenburg, F.J., Eppstein, D., Glei{\ss}ner, A., Goodrich, M.T., Hanauer, K.,
  Reislhuber, J.: On the density of maximal 1-planar graphs. In: van Kreveld,
  M., Speckmann, B. (eds.) {GD} 2012. {LNCS}, vol. 7704, pp. 327--338. Springer
  (2013)

\bibitem{cm-aoepl-13}
Cabello, S., Mohar, B.: Adding one edge to planar graphs makes crossing number
  and 1-planarity hard. SIAM J. Comput.  42(5),  1803--1829 (2013)

\bibitem{cgp-mg-02}
Chen, Z., Grigni, M., Papadimitriou, C.H.: Map graphs. J. {ACM}  49(2),
  127--138 (2002)

\bibitem{cgp-rh4mg-06}
Chen, Z., Grigni, M., Papadimitriou, C.H.: Recognizing hole-free 4-map graphs
  in cubic time. Algorithmica  45(2),  227--262 (2006)

\bibitem{cn-asl-85}
Chiba, N., Yamanouchi, T., Nishizeki, T.: Arboricity and subgraph listing.
  {SIAM} J. Comput.  14(1),  210--223 (1985)

\bibitem{cs-tc1pg-14}
Czap, J., {\v S}ugarek, P.: Three classes of 1-planar graphs. Tech. Rep.
  arXiv:1404.1222 [math.CO], Computing Research Repository (CoRR) (2014)

\bibitem{dett-gdavg-99}
Di~Battista, G., Eades, P., Tamassia, R., Tollis, I.G.: Graph Drawing:
  Algorithms for the Visualization of Graphs. Prentice Hall (1999)

\bibitem{del-dgrac-11}
Didimo, W., Eades, P., Liotta, G.: Drawing graphs with right angle crossings.
  Theor. Comput. Sci.  412(39),  5156--5166 (2011)

\bibitem{deh-dtcg-00}
Dillencourt, M.B., Eppstein, D., Hirschberg, D.D.: Geometric thickness of
  complete graphs. J. Graph Algorithms Appl.  4(3),  5--15 (2000)

\bibitem{dlt-pepg-84}
Dolev, D., Leighton, T., Trickey, H.: Planar embedding of planar graphs.
  Advances in Computing Research  2,  147--161 (1984)

\bibitem{el-racg1p-13}
Eades, P., Liotta, G.: Right angle crossing graphs and 1-planarity. Discrete
  Appl. Math.  161(7-8),  961--969 (2013)

\bibitem{GB-AGEFCE-07}
Grigoriev, A., Bodlaender, H.L.: Algorithms for graphs embeddable with few
  crossings per edge. Algorithmica  49(1),  1--11 (2007)

\bibitem{heklss-ltao1p-15}
Hong, S., Eades, P., Katoh, N., Liotta, G., Schweitzer, P., Suzuki, Y.: A
  linear-time algorithm for testing outer-1-planarity. Algorithmica  72(4),
  1033--1054 (2015)

\bibitem{km-mo1ih-13}
Korzhik, V.P., Mohar, B.: Minimal obstructions for 1-immersion and hardness of
  1-planarity testing. J. Graph Theor.  72,  30--71 (2013)

\bibitem{ks-cpgIC-10}
Kr{\'{a}}l, D., Stacho, L.: Coloring plane graphs with independent crossings.
  J. Graph Theor.  64(3),  184--205 (2010)

\bibitem{Kyncl-09}
Kyncl, J.: Enumeration of simple complete topological graphs. Eur. J. Comb.
  30(7),  1676--1685 (2009)

\bibitem{l-beyond-14}
Liotta, G.: Graph drawing beyond planarity: some results and open problems. In:
  Bistarelli, S., Formisano, A. (eds.) Proc. 15th Italian Conference on
  Theoretical Computer Science. {CEUR} Workshop Proceedings, vol. 1231, pp.
  3--8. CEUR-WS.org (2014)

\bibitem{n-dfgf-61}
Nash-Williams, C.S.J.A.: Decomposition of finite graphs into forests. J. London
  Math. Soc.  39(1),  445--450 (1961)

\bibitem{ringel-65}
Ringel, G.: Ein {S}echsfarbenproblem auf der {K}ugel. Abh. aus dem Math.
  Seminar der Univ. Hamburg  29,  107--117 (1965)

\bibitem{s-s1pg-86}
Schumacher, H.: Zur {S}truktur 1-planarer {G}raphen. Mathematische Nachrichten
  125,  291--300 (1986)

\bibitem{s-o1pgts-10}
Suzuki, Y.: Optimal 1-planar graphs which triangulate other surfaces. Discrete
  Math.  310(1),  6--11 (2010)

\bibitem{s-rm1pg-10}
Suzuki, Y.: Re-embeddings of maximum 1-planar graphs. {SIAM} J. Discr. Math.
  24(4),  1527--1540 (2010)

\bibitem{zl-spgic-13}
Zhang, X., Liu, G.: The structure of plane graphs with independent crossings
  and its application to coloring problems. Central Europ. J. Math  11(2),
  308--321 (2013)

\bibitem{z-dcmgprc-14}
Zhang, X.: Drawing complete multipartite graphs on the plane with restrictions
  on crossings. Acta Math. Sinica, English Series  30(12),  2045--2053 (2014)

\end{thebibliography}
\end{document}